\theoremstyle{plain}
\newtheorem{prop}{\protect\propositionname}
\theoremstyle{plain}
\newtheorem{lyxalgorithm}{\protect\algorithmname}
\providecommand{\algorithmname}{Algorithm}
\providecommand{\propositionname}{Proposition}
\begin{document}
\begin{center}
\textbf{\Large{}The Blockchain Risk Parity Line: Moving From The Efficient
Frontier To The Final Frontier Of Investments}{\Large\par}
\par\end{center}

\begin{center}
\textbf{\large{}Ravi Kashyap (ravi.kashyap@stern.nyu.edu)}\footnote{\begin{doublespace}
Numerous seminar participants, particularly at a few meetings of the
econometric society and various finance organizations, provided suggestions
to improve the paper. The following individuals have been a constant
source of inputs and encouragement: Dr. Yong Wang, Dr. Isabel Yan,
Dr. Vikas Kakkar, Dr. Fred Kwan, Dr. Costel Daniel Andonie, Dr. Guangwu
Liu, Dr. Jeff Hong, Dr. Humphrey Tung and Dr. Xu Han at the City University
of Hong Kong. The views and opinions expressed in this article, along
with any mistakes, are mine alone and do not necessarily reflect the
official policy or position of either of my affiliations or any other
agency.
\end{doublespace}
}
\par\end{center}

\begin{center}
\textbf{\large{}Estonian Business School / City University of Hong
Kong}{\large\par}
\par\end{center}

\begin{center}
\today
\par\end{center}

\begin{center}
Keywords: Risk; Parity; Efficient; Final; Frontier; Volatility; Asset
Price; Blockchain; Smart Contract; Investment Fund
\par\end{center}

\begin{center}
Journal of Economic Literature Codes: G11 Investment Decisions; G14
Information and Market Efficiency; D81 Criteria for Decision-Making
under Risk and Uncertainty; C32 Time-Series Models; B23 Econometrics,
Quantitative and Mathematical Studies; D8: Information, Knowledge,
and Uncertainty; I31: General Welfare, Well-Being; O3 Innovation •
Research and Development • Technological Change • Intellectual Property
Rights;
\par\end{center}

\begin{center}
Mathematics Subject Classification Codes: 91G15 Financial markets;
91G10 Portfolio theory; 62M10 Time series; 91G70 Statistical methods,
risk measures; 91G45 Financial networks; 97U70 Technological tools;
93A14 Decentralized systems; 97D10 Comparative studies; 68T37 Reasoning
under uncertainty in the context of artificial intelligence
\par\end{center}

\begin{doublespace}
\begin{center}
\pagebreak{}
\par\end{center}
\end{doublespace}

\begin{center}
\tableofcontents{}\pagebreak{}
\par\end{center}

\begin{doublespace}
\begin{center}
\listoffigures 
\par\end{center}

\begin{center}
\listoftables 
\par\end{center}
\end{doublespace}

\begin{singlespace}
\begin{center}
\pagebreak{}
\par\end{center}
\end{singlespace}
\begin{doublespace}

\section{Abstract}
\end{doublespace}
\begin{itemize}
\item We engineer blockchain based risk managed portfolios by creating three
funds with distinct risk and return profiles: 1) Alpha - high risk
portfolio ; 2) Beta - mimics the wider market; and 3) Gamma - represents
the risk free rate adjusted to beat inflation. Each of the sub-funds
(Alpha, Beta and Gamma) provides risk parity because the weight of
each asset in the corresponding portfolio is set to be inversely proportional
to the risk derived from investing in that asset. This can be equivalently
stated as equal risk contributions from each asset towards the overall
portfolio risk.
\item We provide detailed mechanics of combining assets - including mathematical
formulations - to obtain better risk managed portfolios. The descriptions
are intended to show how a risk parity based efficient frontier portfolio
management engine - that caters to different risk appetites of investors
by letting each individual investor select their preferred risk-return
combination - can be created seamlessly on blockchain.
\item Any Investor - using decentralized ledger technology - can select
their desired level of risk - or return - and allocate their wealth
accordingly among the sub funds, which balance one another under different
market conditions. This evolution of the risk parity principle - resulting
in a mechanism that is geared to do well under all market cycles -
brings more robust performance and can be termed as conceptual parity.
\item The mechanisms we have used to construct the efficient frontier on
blockchain will ensure that the efficient set keeps evolving to suit
the market environment while allowing all investors to alter their
risk tolerance or return preference. The inclusion of newer and more
diversified assets into the sub-funds - as the crypto landscape expands
- can be viewed as a natural progression from the conventional efficient
frontier to a progressive final frontier of investing, which will
continue to transcend itself. 
\item We have given several numerical examples that illustrate the various
scenarios that arise when combining Alpha, Beta and Gamma to obtain
Parity. 
\item The final investment frontier is now possible - a modification to
the efficient frontier, thus becoming more than a mere theoretical
construct - on blockchain since anyone from anywhere can participate
at anytime to obtain wealth appreciation based on their financial
goals. 
\end{itemize}

\section{\label{sec:Introduction:-The-Efficient-Frontier}Introduction: The
Efficient Frontier - Despite Its Might - Might Be Deficient}

A remarkable idea from the financial markets is that of the efficient
frontier. There are many ways to combine assets to create portfolios.
Among all the possible mixtures of assets that can be created, the
set of combinations that are superior to the rest - in terms of risk
and expected returns - form the efficient frontier (Markowitz 1952;
1959; Merton 1972; Rubinstein 2002; End-note \ref{enu:Efficient-Frontier}). 

Despite the efficient frontier being an intriguing idea - and the
many enhancements done to the basic techniques associated with it
and efforts to simplify its implementation - there are many practical
limitations to accomplish it (Michaud 1989; Green \& Hollifield 1992;
Zhang, Li \& Guo 2018; Kim et al., 2021). To ensure that we are not
constrained by the many reservations of traditional finance portfolio
selection techniques - and to be able to benefit from recent blockchain
technological developments (Nakamoto 2008; Narayanan \& Clark 2017;
Monrat et al., 2019; Schlecht et al., 2021) - our innovation has been
to come up with the idea of conceptual parity tailored for the crypto
environment. 

We engineer blockchain based risk managed portfolios by creating three
funds with distinct risk and return profiles: 1) Alpha - high risk
portfolio ; 2) Beta - mimics the wider market; and 3) Gamma - represents
the risk free rate adjusted to beat inflation. Each of the sub-funds
(Alpha, Beta and Gamma) provides risk parity because the weight of
each asset in the corresponding portfolio is set to be inversely proportional
to the risk derived from investing in that asset (Roncalli 2013; End-note
\ref{enu:Risk-parity-(or}). This can be equivalently stated as equal
risk contributions from each asset towards the overall portfolio risk.

In our modified approach to create risk parity portfolios: Alpha will
be a sub-fund composed of assets that provide higher returns and take
on higher risks. Beta will be representative of the larger market
behavior and provide more steady returns with a correspondingly lower
level of risks. Gamma will take on the role of acting as the risk
free rate, with decent returns but with very little to no risk. Gamma
will also be filled with assets that demonstrate negative correlation
to Alpha and Beta assets while having characteristics that can beat
inflation. Kashyap (2022) has a detailed discussion of the individual
funds and the asset selection procedures to create them.

Any Investor - using decentralized ledger technology - can select
their desired level of risk - or return - and allocate their wealth
accordingly among the sub funds, which balance one another under different
market conditions. This evolution of the risk parity principle - resulting
in a mechanism that is geared to do well under all market cycles -
brings more robust performance and can be termed as conceptual parity.

The mechanisms we have used to construct the efficient frontier on
blockchain will ensure that the efficient set keeps evolving to suit
the market environment while allowing all investors to alter their
risk tolerance or return preference. The inclusion of newer and more
diversified assets into the sub-funds - as the crypto landscape expands
- can be viewed as a natural progression from the conventional efficient
frontier to a progressive final frontier of investing, which will
continue to transcend itself (Krugman 1998; Baldry 1999; Pearson \&
Davies 2014; Brode \& Brode 2015; End-note \ref{enu:Star-Trek-is}). 

Over the past few years - till the end of 2021, at-least - plenty
of people have made money in decentralized finance (Zetzsche et al.,
2020; Harvey et al., 2021; Piñeiro-Chousa et al., 2022; DeFi - End-note
\ref{enu:Decentralized-finance}). The lending interest, yield farming,
liquidity mining, and staking have all generated incredible returns
in DeFi (Xu \& Feng 2022; End-note \ref{enu:Types-Yield-Enhancement-Services}).
Though there have been some wild swings - amidst raging turbulence
and steep market crashes - the blockchain financial innovation has
been real and the impact hard to ignore. That said, it takes a certain
type of person to jump into DeFi. DeFi investing has been embraced
by the intrepid adventurers - the risk-takers who believe in the whole
decentralized philosophy - who have paved the way forward for their
peers. Or, to put it another way, the first generation of DeFi hasn’t
exactly been for everyone.

The goal of this paper - and its sister papers - is intended to change
all that. All the components we have outlined in Kashyap (2022) seek
to build a simpler way to invest in DeFi that takes away much of the
risk while keeping the spectacular gains. We have given several numerical
examples that illustrate the various scenarios that arise when combining
Alpha, Beta and Gamma to obtain Parity. The Risk Parity movement has
been successfully doing just that in the traditional investing world
for many years. This is the first paper to lead the way in bringing
risk parity to multi-chain DeFi.

The final investment frontier is now possible - a modification to
the efficient frontier, thus becoming more than a mere theoretical
construct - on blockchain since anyone from anywhere can participate
at anytime to obtain wealth appreciation based on their financial
goals. 

\subsection{\label{subsec:Outline-of-the}Outline of the Sections Arranged Inline}

Section (\ref{sec:Introduction:-The-Efficient-Frontier}) which we
have already seen, provides a summary of the main problems we are
seeking to solve in the decentralized finance realm. Section (\ref{sec:Review-of-Related})
is a detailed review of the literature related to mean variance optimization
- the efficient frontier (Section \ref{subsec:The-Efficient-Frontier})
- and risk parity portfolio construction (Section \ref{sec:Risk-Parity-Portfolios}).
Section (\ref{sec:Risk-Parity:-The-Kryptonite}) provides intuitive
descriptions of how we are constructing risk parity using blockchain
technology. The motivations for bringing innovations from traditional
finance to the blockchain realm - such as risk parity and the efficient
frontier - are outlined so that this paper is accessible to a wide
audience including finance specialists, quantitative engineers and
technologists.

Section (\ref{sec:Risk-Parity:-Combining}) is a discussion of the
detailed mechanics of combining assets - including mathematical formulations
- to obtain better risk managed portfolios. The descriptions are intended
to show how a risk parity based efficient frontier portfolio management
engine - that caters to different risk appetites of investors by letting
each individual investor select their preferred risk-return combination
- can be created seamlessly on blockchain. The Sub-Sections (\ref{subsec:Parity-Line:-Moving};
\ref{subsec:Distancing-the-Distance}; \ref{subsec:Robust-and-Simple};
\ref{subsec:Parity-Rebalancing}; \ref{subsec:Efficient-Frontier-Parabolic})
in Section (\ref{sec:Risk-Parity:-Combining}) discuss various topics
related to the creation and maintenance of parity portfolios including
the fund flows that need to happen periodically to rebalance investor
allocations of the sub-funds - Alpha, Beta and Gamma - so that users
stay aligned with their risk-return preferences. 

A lot of material - given as Appendices in the Supplementary Material
portion for the electronic online component - can help readers obtain
a deeper understanding of these topics. Section (\ref{sec:Parity-Flow-Flow})
has the flow charts related to the material discussed in Section (\ref{sec:Risk-Parity:-Combining}).
The diagram in Section (\ref{sec:Parity-Flow-Flow}) is given for
completion and for helping readers obtain a better understanding of
the concepts involved. Section (\ref{sec:Numerical-Results}) explains
the numerical results we have obtained, which illustrate how our innovations
compare to existing wealth management techniques. Appendix (\ref{sec:Appendix-of-Illustrations})
has supplementary illustrations and other material that can be useful
to help readers obtain a better understanding of the concepts we have
discussed.

Sections (\ref{sec:Areas-for-Further}; \ref{sec:Conclusion}) suggest
further avenues for improvement and the conclusions respectively.
Sub-Sections (\ref{subsec:Beating-Benchmarks-with}; \ref{subsec:Crypto-Environmental-Nuances};
\ref{subsec:Fortune-Favors-The}; \ref{subsec:Risk-Management-Is};
\ref{subsec:Sharpening-the-Sharpe}; \ref{subsec:Investor-Experience-on};
\ref{subsec:Parity-Fees}) in Section (\ref{sec:Areas-for-Further})
have several topics related to risk management, incorporating better
statistical techniques over time, nuances of managing portfolios in
a decentralized environment, enhancing the investor experience on
blockchain and the parity fee structure. 

\section{\label{sec:Review-of-Related}Review of Related Literature}

\subsection{\label{subsec:The-Efficient-Frontier}The Efficient Frontier}

Several studies related to the efficient frontier - and optimal asset
selection in a portfolio - have been devoted to:
\begin{itemize}
\item finding simplified methods - several of which provide approximate
solutions - of solving for the efficient frontier (Calvo et al., 2012;
Elton et al., 1973; 1976; 1977a; 1977b; 1978; 1979). 
\item heuristic techniques using neural networks, machine learning and other
intelligent computational approaches (Fernández \& Gómez 2007; Gunjan
\& Bhattacharyya 2023);
\item statistical techniques to combat the sensitivity of the model parameters
to estimation errors - and also when return distributions are not
normal - which naturally result in better techniques to estimate expected
returns, variance and the dependence structure of asset prices (Kalymon
1971; Elton \& Gruber 1973; Britten‐Jones 1999; Leland 1999; Goldfarb
\& Iyengar 2003; Ledoit \& Wolf 2003; Zhang \& Nie 2004; Schuhmacher
et al., 2021);
\item establishing upper and lower bounds on the asset holdings and other
parameters (Elton et al., 1977a; Best \& Hlouskova 2000; Chen, He
\& Zhang 2011);
\item alternatives to variance for measuring risk such as Value at Risk
(VAR) or semi-variance in which only returns below expected value
are counted as risk (Nantell \& Price 1979; Duffie \& Pan 1997; Bond
\& Satchell 2002; Ballestero 2005; Jorion 1996; 2007; Yan, Miao \&
Li 2007; Tsao 2010; McNeil et al., 2015; End-notes \ref{enu:The-semivariance-is};
\ref{enu:Value-at-risk});
\item portfolio selection in a fuzzy environment in which risk and return
are characterized as fuzzy variables (Zadeh 1965; Kwakernaak 1978;
Nahmias 1978; Huang 2006; 2007a; 2007b; 2008a; 2008b; Liu \& Zhang
2013; End-note \ref{enu:Fuzzy-logic-is}). Calvo et al., (2016) is
an interesting use of fuzzy techniques to include non-financial goals
- such as social responsibility of the portfolio and being able to
include information pertaining to the costs that arise when deviating
from the financially efficient portfolio - in the portfolio selection
problem.
\item possibility distributions - which grew out of fuzzy approach to tackle
uncertainty - instead of probability distributions for variance and
returns (Zadeh 1978; Rosenthal 2006;Zhang et al., 2007; Dubois \&
Prade 2001; Dubois 2006; End-notes \ref{enu:Possibility-theory-is};
\ref{enu:Probability-theory-or});
\item translating subjective views and judgements on security selection
- using Bayesian frameworks - to the portfolio construction problem
(Mao \& Särndal 1966; Lindley 1972; Treynor \& Black 1973; Black \&
Litterman 1992; Drobetz 2001; Christodoulakis 2002; He \& Litterman
2002; Da Silva et al., 2009; Gelman \& Shalizi 2013; van de Schoot
et al., 2021; End-note \ref{enu:Bayesian-statistics-is}). The use
of Bayesian statistical techniques can be extended to translate prior
degrees of beliefs in asset pricing models - or risk factors that
explain the sources of risk and hence the returns - regarding expected
returns to the resulting portfolios (Pástor 2000);
\item adding more realistic constraints to relax some of the assumptions
- such as transaction costs, short sales, leverage policies and taxes
(Pogue 1970; Zhang \& Wang 2008); 
\item sequential selection of assets in a multi-period setting as opposed
to single-period batch selection (Li \& Hoi 2014; Li et al., 2015); 
\item using a fundamentally different approach - to mean variance optimization
- such as network theory in which securities correspond to the nodes
and the links relate to the correlation of returns (Garas et al.,
2008; Namaki et al., 2011; Kocheturov et al., 2014; Peralta \& Zareei
2016); 
\item incorporating user behavior - and related behavioral insights - into
the portfolio selection process and conducting experiments that investigate
the performance of individuals related to portfolio selection problems
(Hursh 1984; Kroll et al., 1988; Camerer 1999; Thaler 1999; Mullainathan
\& Thaler 2000; Tomer 2007; Hirshleifer 2015; Barberis \& Thaler 2003;
Ritter 2003; Thaler 1999; 2016; 2017; Bi et al., 2018; Momen et al.,
2019; 2020; Zhao et al., 2022; End-note \ref{enu:Behavioral-finance-is});
\item conducting empirical tests of the corresponding portfolio selection
theory and evaluations of the constructed portfolios (Cohen \& Pogue
1967; Fama \& MacBeth 1973; Frost \& Savarino 1986; Constantinides
\& Malliaris 1995; Polson \& Tew 2000; Fama \& French 2004; Palczewski
\& Palczewski 2014).
\end{itemize}

\subsection{\label{sec:Risk-Parity-Portfolios}Risk Parity Portfolios}

Risk Parity can be considered the holy grail - an extremely worthy
pursuit - of investing in the decentralized finance world. To obtain
parity, the amount of money allocated to the individual assets in
a portfolio has to be proportional to the extent of risk encountered
from investing in that specific asset, regardless of its expected
returns (Fabozzi, Simonian \& Fabozzi 2021; End-note \ref{enu:Risk-parity-(or}).
As the risk characteristics of an asset fluctuate, the weight assigned
to that asset has to be correspondingly modified. Risk Parity offers
an alternative to the mean–variance framework, which is based on an
optimization that targets a specific return with a minimal level of
risk or vice versa - a desired level or risk and the maximum return
corresponding to the chosen level or risk (Elton \& Gruber 1997; Fabozzi,
Gupta \& Markowitz 2002; Elton et al., 2009; Buttell 2010; End-note
\ref{enu:Modern-portfolio-theory}). 

Chaves et al., (2011) find that the Risk Parity technique significantly
outperforms optimized allocation strategies such as minimum-variance
and mean–variance efficient portfolios on a consistent basis, though
it does not consistently outperform the equal weighting or a model
pension fund portfolio anchored to the 60/40 equity/bond portfolio
structure (Ambachtsheer 1987; Bender et al., 2010; Harvey et al.,
2018; Konstantinov 2021). Many traditional portfolios - 60/40 asset
allocation or other traditional mean variance optimized portfolios
- are not truly diversified because of their higher allocation to
high-risk assets. As a result, their expected risk-adjusted returns
are low in comparison to risk parity portfolios - which follows the
principle of risk diversification more thoroughly than traditional
asset allocation approaches - achieving both higher risk-adjusted
returns and higher total returns than traditional techniques (Qian
2011). 

Individual asset weights depend on both systematic and idiosyncratic
risk in risk-based portfolios - for example: risk parity, maximum
diversification, and minimum variance portfolio techniques. Risk-parity
portfolios include all investable assets, and idiosyncratic risk has
little effect on weight magnitude, but systematic risk eliminates
many investable assets in long-only, constrained, maximum-diversification,
and minimum-variance portfolios (Clarke, De Silva \& Thorley 2013).
Fisher et al., (2015) find - using very general assumptions and also
by verifying the theoretical results empirically - that the probability
of risk parity beating any other portfolio is more than 50\%.

Hence, the risk parity approach ensures that the weights are relatively
stable - compared to other risk oriented portfolio construction mechanisms
- and also more assets are included in the overall portfolio without
a small proportion of the assets getting an overweight allocation.
Benefitting from the higher risk adjusted returns of safer assets
requires leverage which parity portfolios are also better poised to
take advantage of. Risk parity recommends the application of leverage
to the risk-balanced portfolio - or the market portfolio - to increase
both its expected return and its risk to desired levels (Pogue 1970;
De Souza \& Smirnov 2004; Asness, Frazzini \& Pedersen 2012; Jacobs
\& Levy 2012; Ammann et al., 2016; End-note \ref{enu:In-finance,-leverage}).

By comparing four investment strategies - value weighted, 60/40 fixed
mix, and un-levered and levered risk parity - Anderson et al., (2012)
suggest that risk parity may be a preferred strategy under certain
market conditions - or with respect to certain yardsticks - and they
also show that leverage exacerbates market frictions, which degrade
both return and risk-adjusted return. Bhansali (2011); Roncalli \&
Weisang (2016) analyze methods to achieve portfolio diversification
based on the decomposition of the portfolio’s risk into risk factor
contributions. Exploring the relationship between risk factors and
asset contributions can help to formulate the diversification problem
in terms of an optimization program based on risk factors (Cochrane
2009; Qian 2015; End-notes \ref{enu:In-finance,-risk-factors}; \ref{enu:In-financial-Asset-Pricing}).

Maillard et al., (2010) derive the theoretical properties of risk
parity portfolios - equal risk contribution from each portfolio component
- and show that their volatility is located between those of minimum
variance and equally-weighted portfolios. The set of all risk parity
solutions - solved by using convex optimization techniques - may contain
exponential number of solutions making them extremely hard to compute
(Sahni 1974; Murty \& Kabadi 1985; Cook 2000; Boyd \& Vandenberghe
2004; Fortnow 2009; Bertsekas 2009; 2015; End-notes \ref{enu:Convex-optimization-is};
\ref{enu:The-P-versus-NP}; \ref{enu:In-computational-complexity-NP-Hardness};
\ref{enu:In-computational-complexity-NP}). Bai et al., (2016) propose
an alternative non-convex least-square model which is more efficient
in terms of both speed of computation and accuracy of the result.
Much research is being conducted in terms of implementing risk parity
portfolios using advanced statistical technique (Bellini et al., 2021). 

Qian (2013) examines several risk-parity managers quantitatively using
return-based style analysis and finds that many do not practice true
risk parity. Despite the intuitive appeal of risk parity - since equalizing
estimated risk contributions seems like a good way to achieve the
goal of risk diversification and not having the need to estimate expected
returns, which are very error prone - more theoretical investigations
and empirical verifications are needed before it can be deemed superior
to traditional portfolio construction methods (Timmermann 2008; Thiagarajan
\& Schachter 2011; Rapach \& Zhou 2013). Braga et al., (2023) find
that instead of using variance or volatility, a kurtosis based risk-parity
strategy - that does not seek the minimization of kurtosis, but rather
its ‘fair diversification’ among assets - outperforms the traditional
risk parity according to several prominent risk-adjusted performance
measures. 

The principal components - once they are extracted - driving the variability
of assets in a portfolio can be interpreted as principal portfolios
representing the uncorrelated risk sources inherent in the assets
(Meucci 2009; Frahm \& Wiechers 2011; Shlens 2014; Jolliffe \& Cadima
2016; End-note \ref{enu:Principal-component-analysis}). A well-diversified
portfolio should have its overall risk evenly distributed across the
principal portfolios - or the sources of risk based on the principal
component decomposition. Using this approach the maximum diversification
is obtained from a risk parity strategy that is budgeting risk with
respect to the extracted principal portfolios rather than the underlying
portfolio assets (Lohre et al., 2012; 2014). 

\section{\label{sec:Risk-Parity:-The-Kryptonite}Risk Parity: The Kryptonite
To Alleviate Crypto Market Nightmares}

Alpha, Beta and Gamma are our three main funds with different levels
of risk and expected returns. Alpha will be more risky than Beta,
which will be more risky than Gamma. Investors will be able to combine
the three funds depending on their risk appetites using a suitable
Graphical User Interface (GUI - Figure \ref{fig:Deposit-Screen-Parity};
Martinez 2011; End-note \ref{enu:A-graphical-user-GUI}). Mixing Alpha,
Beta and Gamma will give the Risk Parity portfolio. Risk Parity will
generate returns for investors that factor the risk of the individual
assets, with each asset contributing equally to the overall risk of
the portfolio. The end result will be the first investment vehicle
that will tailor the preferences of each investor - providing diversified
and risk adjusted returns - entirely on a highly secure blockchain
environment.

A subtle aspect of our portfolio construction and the weight calculation
methodology described in Kashyap (2022) is that parity is already
accomplished in each of our individual funds Alpha, Beta and Gamma.
These investment products - Alpha, Beta, Gamma and Parity - will provide
risk managed access to several crypto assets and strategies. We have
adapted many of the well known safety mechanisms and investor protection
schemes that have evolved for several decades in traditional finance,
and combined them with many innovations that are unique to crypto
markets (Kashyap 2023).

Having mentioned that each of our sub funds already achieves risk
parity, we need to draw a distinction between mathematical parity
and conceptual parity. The assets weights are calculated based on
precise rules and mathematical operations and this brings parity to
each of our sub funds at the asset level. While this is still a huge
innovation to bring to the blockchain environment, we wish to proceed
further and bring parity also on a conceptual level.

To elaborate further, we create portfolios that perform satisfactorily
where mathematics can fall short of completely combating market uncertainty.
Broad categories of assets have slightly different risk and return
attributes. By grouping assets with similar responses to different
market regimes, we can ensure that the various groups counterbalance
one another under diverse market conditions. Hence, in addition to
mathematical parity, within each sub fund, each sub fund has an overall
risk return feature which is preferable to the other sub funds under
a particular market criterion.

Another motivation for creating these groups is because even if assets
at the individual level deviate from their risk and expected return
properties, such a misalignment is less likely at the group level.
A few assets in a bunch might display atypical behavior, but the majority
of them will be closer to their representative qualities. The result
is that the overall group can be expected to behave in a certain way
and offset other groups, which are constructed based on the same principle
of clubbing together similar assets, that have different attributes.
We term this fluctuating pseudo-equilibrium between groups of assets
conceptual parity.

The implication of constructing the sub-funds (Alpha, Beta and Gamma)
in this way ensures that when the overall market under performs, which
means Alpha and Beta will not deliver very high returns, Gamma will
still continue to provide acceptable returns because of its negative
correlation to Alpha and Beta. The manufacturing, and linking, of
Alpha, Beta and Gamma will then produce the most efficient set of
portfolios in terms of risk and return characteristics. We term this
collection of portfolios, the parity line.

Risk Parity is a unique advance in cross-chain DeFi asset management.
The Parity index class is where all our other index classes (Alpha,
Beta and Gamma) come together. In addition to the balanced portfolios
created by the other indexes, the Risk Parity strategy provides cover
against universal changes such as higher than expected inflation,
which could otherwise negatively affect most portfolios, both crypto
and traditional.

Risk Parity is an investment movement first pioneered in the All Weather
fund by Ray Dalio at Bridgewater Associates in 1996 (Prince 2011;
Fabozzi, et al. , 2021). It aims to return results better than holding
cash no matter the circumstances. Many investment managers try to
diversify assets but ignore the wider environmental risks, such as
inflation or low economic growth that can have a major impact on the
entire economy. Some examples of such events are: recessions, the
2008 financial crisis or indeed a global pandemic (Reinhart \& Rogoff
2008; Acharya \& Richardson 2009; Ciotti et al., 2020). It is generally
true to a large extent that the wider economic outlook is ‘priced
in’ to assets. The risk materializes when expectations aren’t met,
in either direction. Additional pointers regarding this topic are
mentioned in Section (\ref{sec:Areas-for-Further}).

Risk Parity was designed to enable a portfolio to perform whatever
the environment. Until now, this extraordinarily successful mechanism
from traditional finance could not be applied to crypto because the
tools didn’t exist. This work - and its sister papers - is the first
to engineer the necessary infrastructure, along with the required
foundational principles, to bring such a risk managed portfolio to
DeFi cross-chain assets.

\section{\label{sec:Risk-Parity:-Combining}On Par With Parity: The Mathematics
of Mixing Alpha, Beta and Gamma}

We modify well known techniques in portfolio management theory related
to obtaining portfolios on the efficient frontier. We note that our
problem boils down to being able to combine three funds: Alpha, Beta
and Gamma using a robust and simple methodology to obtain an ideal
fund that balances risk and return. We refer to this combined fund
as the Parity Portfolio. The properties of the individual funds are
discussed in separate articles (Kashyap 2022). The essence of the
properties of the three portfolios are that each fund caters to a
varying risk appetite. In addition, the Gamma fund is constructed
such that it shows negative correlation to Alpha to a certain extent
and also to Beta at times. Hence combining the three funds provides
an all weather portfolio, or a portfolio that outperforms the market
during both upturns and downturns.

The efficient frontier is an extremely elegant approach to finding
excellent combinations of risk and return. To simplify the corresponding
ideas for a blockchain environment and also to make sure the solutions
are more robust, we formulate the ideas as linear approximations.
The linear approximation becomes valid as we construct Gamma portfolios
that show negative correlation to Alpha and Beta. These approximations
can easily be modified in later versions of the application, but the
main idea is that this approach allows the weights corresponding to
specific combinations of risk and return to be calculated on-chain. 

The simplifications we adopt can also enable moving these calculations
onto an on-chain framework quite seamlessly using smart contracts
(Zou, et al., 2019; Zheng, et al., 2020; End-note \ref{enu:A-smart-contract}).
As better blockchain networks develop, we will need to see if the
above techniques we have created need modification. It might be possible
to use thousands of transactions for calculations on a blockchain
computing platform. To emphasize, committing thousands of transactions
to the blockchain record, or into a block, is already possible (Pierro
\& Tonelli 2022). The basics of computing make it clear that the more
data we wish to store and the more computations we need to perform,
the associated costs will increase (Dromey 1982). 

The other reasons for adopting linear approximations are because the
efficient frontier is based on estimates of risk and return, which
can be highly erroneous. So instead of estimating several parameters
to get a curve that acts as the efficient frontier, we fit a line
- through regression or other methodologies at a later stage - across
the set of diversified fund portfolios we have created (Alpha, Beta,
Gamma) over time. That is the as the points Alpha, Beta and Gamma
move over time, they give us a measure of how the Parity Line is changing
with time.

\subsection{\label{subsec:Parity-Line:-Moving}The Parity Line: The Final Frontier
Of Asset Management}

In terms of the three funds, Alpha is the most risky and hence has
the highest expected return. Beta comes second in terms of risk and
Gamma is the least risky of the three. Hence, Alpha also provides
the highest expected returns and so on. If we visualize these three
funds as points on a graph with risk (standard deviation of returns)
on the X-axis and expected return on the Y-axis (Figures \ref{fig:Parity-Line:-ABG-Scatter};
\ref{fig:Parity-Line:-ABG-Real-Market-Data}). Alpha would be the
right most point. Beta would be in the middle and Gamma would be the
leftmost point. 

As expectations change with time and as the market changes, risk and
expected return of the funds will change over time. Hence we need
to visualize the three funds as clouds of points in the right, middle
and left of the graph (Figure \ref{fig:Parity-Line:-ABG-Scatter}).
We can fit a regression line through these points and we will let
investors chose what level of risk and return they want based on the
slope and intercept of this line. This line is a really good estimate
of the best portfolios that investors can chose and we term this the
Parity Line. Since this is a modification of the concept of the efficient
frontier, this Parity Line becomes the final frontier of investments.
We denote this line as:
\begin{equation}
E\left(R\right)=\left(\Theta\right)\left(\sigma\right)+R_{F}\label{eq:Parity-Line}
\end{equation}
Here, $E\left(R\right)$ is the expected return. $\sigma$ is the
risk. $\Theta$ and $R_{F}$ are the slope and the intercept of the
line respectively. Notice that the intercept, $R_{F}$, is also the
risk free rate. \textbf{\textit{We calculate the slope and intercept
separately using regression or other methods and store them in the
smart contract.}}

Any investor can chose a level of risk and his expected return will
be fixed based on the efficient Parity Line. Likewise if someone chooses
a certain level of expected return, the risk they need to bear to
achieve that will be given by the efficient Parity Line. In addition,
investors can also choose the proportion of their wealth they can
allocate to Alpha, Beta and Gamma directly. Based on this allocation
of wealth, we determine the expected return and risk the investors
wealth would bear on the efficient Parity Line. $E\left(R_{\alpha}\right)$
is the expected return and $\sigma_{\alpha}$ is the risk of Alpha.
$E\left(R_{\beta}\right)$ is the expected return and $\sigma_{\beta}$
is the risk of Beta. $E\left(R_{\gamma}\right)$ is the expected return
and $\sigma_{\gamma}$ is the risk of Gamma. The weights invested
into Alpha, Beta and Gamma for investor $i$ are represented by $w_{\alpha}$
, $w_{\alpha}$ and $w_{\alpha}$ respectively. The expected return
and risk are estimated using Equations (\ref{eq:Return-Estimate};
\ref{eq:Volatility-Risk-Estimate}) in Section (\ref{sec:Return-and-Risk}).

For each investor we mint an NFT (Kugler 2021; Wang et al., 2021;
Bao \& Roubaud 2022; End-note \ref{enu:A-non-fungible-token}) that
stores at a minimum the following information: the amount being deposited,
their risk and expected return preferences and the weight allocation
of their wealth into Alpha, Beta and Gamma portfolios. We consider
the three scenarios below:
\begin{enumerate}
\item \label{enu:Investor-Chooses-Risk}Investor Chooses Risk Tolerance:
The investor can choose his / her level of risk tolerance, $\sigma_{i}$,
on a sliding scale. From this, we will calculate the other values
to store in the NFT.
\begin{equation}
E\left(R_{i}\right)=\left(\Theta\right)\left(\sigma_{i}\right)+R_{F}
\end{equation}
The weights to be invested into Alpha, Beta and Gamma are given by
calculating the distance to the Alpha, Beta and Gamma points from
the investors risk and expected return point, $P\equiv\left[\sigma_{i},E\left(R_{i}\right)\right]$,
on the efficient Parity Line. We denote the distance between two points
$A\equiv\left[\sigma_{A},E\left(R_{A}\right)\right]$ and $B\equiv\left[\sigma_{B},E\left(R_{B}\right)\right]$
as $d_{A,B}$ (Rudin 1953; End-note \ref{enu:In-mathematics-Metric-Space}).
\begin{equation}
w_{\alpha}=\frac{\left(\frac{1}{d_{\alpha,P}}\right)}{\left[\left(\frac{1}{d_{\alpha,P}}\right)+\left(\frac{1}{d_{\beta,P}}\right)+\left(\frac{1}{d_{\gamma,P}}\right)\right]}
\end{equation}
\begin{equation}
w_{\beta}=\frac{\left(\frac{1}{d_{\beta,P}}\right)}{\left[\left(\frac{1}{d_{\alpha,P}}\right)+\left(\frac{1}{d_{\beta,P}}\right)+\left(\frac{1}{d_{\gamma,P}}\right)\right]}
\end{equation}
\begin{equation}
w_{\gamma}=\frac{\left(\frac{1}{d_{\gamma,P}}\right)}{\left[\left(\frac{1}{d_{\alpha,P}}\right)+\left(\frac{1}{d_{\beta,P}}\right)+\left(\frac{1}{d_{\gamma,P}}\right)\right]}
\end{equation}
\begin{equation}
d_{A,B}=\sqrt{\left(\sigma_{A}-\sigma_{B}\right)^{2}+\left(E\left(R_{A}\right)-E\left(R_{B}\right)\right)^{2}}
\end{equation}
\begin{equation}
d_{\alpha,P}=\sqrt{\left(\sigma_{\alpha}-\sigma_{i}\right)^{2}+\left(E\left(R_{\alpha}\right)-E\left(R_{i}\right)\right)^{2}}
\end{equation}
Note that the weights satisfy the equation, 
\begin{equation}
w_{\alpha}+w_{\beta}+w_{\gamma}=1
\end{equation}
If any of the distances are zero, the corresponding weight becomes
1. That is, $w_{\alpha}=1$ if $d_{\alpha,P}=0$ and so on.
\item \label{enu:Investor-Chooses-Expected}Investor Chooses Expected Return
Level: The investor can choose his / her level of expected return,
$E\left(R_{i}\right)$, on a sliding scale. From this, we will calculate
the other values to store in the NFT.
\begin{equation}
\sigma_{i}=\frac{E\left(R_{i}\right)-R_{F}}{\left(\Theta\right)}
\end{equation}
The weights are calculated similar to Point (\ref{enu:Investor-Chooses-Risk})
above.
\item \label{enu:Investor-Chooses-Weights}Investor Chooses Alpha, Beta
and Gamma Weights: The investor can choose his allocations to each
of the three funds directly. He will choose two and the third is given
by the identify, $w_{\alpha}+w_{\beta}+w_{\gamma}=1$. From this,
we will calculate the other values to store in the NFT.
\begin{equation}
E\left(R_{i}\right)=\left[w_{\alpha}E\left(R_{\alpha}\right)\right]+\left[w_{\beta}E\left(R_{\beta}\right)\right]+\left[w_{\gamma}E\left(R_{\gamma}\right)\right]\label{eq:Expected-Return}
\end{equation}
\begin{equation}
\sigma_{i}=\left(w_{\alpha}\sigma_{\alpha}\right)^{2}+\left(w_{\beta}\sigma_{\beta}\right)^{2}+\left(w_{\gamma}\sigma_{\gamma}\right)^{2}+2w_{\alpha}\sigma_{\alpha}w_{\beta}\sigma_{\beta}\rho_{\alpha,\beta}+2w_{\beta}\sigma_{\beta}w_{\gamma}\sigma_{\gamma}\rho_{\beta,\gamma}+2w_{\alpha}\sigma_{\alpha}w_{\gamma}\sigma_{\gamma}\rho_{\alpha,\gamma}
\end{equation}
Here, $\rho_{\alpha,\beta}$ represents the correlation between Alpha
and Beta. If the correlation values are not readily available, we
can use our Parity Line to get the value of risk to store in the NFT
using Point (\ref{enu:Investor-Chooses-Expected}) and the expected
return from Equation (\ref{eq:Expected-Return}). This approximation
is valid to a reasonable extent since the Parity Line is the best
combination of risk and expected returns possible and the user's wealth
allocation will result in a point on this line. Figure (\ref{fig:Parity-Line:-ABG-Sample})
shows an example of calculating the combined portfolio return and
risk when the weights of Alpha, Beta and Gamma are calculated using
the volatilities of the funds themselves.
\end{enumerate}

\subsection{\label{subsec:Distancing-the-Distance}Distancing the Distance Function}

The approximation in Point (\ref{enu:Investor-Chooses-Weights}) in
Section (\ref{subsec:Parity-Line:-Moving}) also has a related problem.
If we use the Euclidean distance functions specified in Point (\ref{enu:Investor-Chooses-Risk})
and Section (\ref{subsec:Parity-Line:-Moving}) we note that there
are many possible solutions to go from weights to risk and expected
return as discussed in Point (\ref{enu:Investor-Chooses-Weights}).
Specifically, let us say we chose a particular risk and return combination
and then get the Alpha, Beta and Gamma weights corresponding to this
choice of risk and return. Now, if we use these weights and try to
calculate the risk and the return we may not get the same risk and
return combination we originally started with. This is because the
distance function has many points on the risk-return plane that satisfy
the corresponding constraints on the parity line. 

An alternative approach is to combine Alpha and Beta portfolios with
weights $w_{c\alpha}$ and $w_{c\beta}$ to form a combined portfolio,
or market portfolio, that lies on the parity line. We denote the expected
return and risk of this combined portfolio as $E\left(R_{c}\right)$
and $\sigma_{c}$ respectively. 
\begin{equation}
E\left(R_{c}\right)=\left[w_{c\alpha}E\left(R_{\alpha}\right)\right]+\left[w_{c\beta}E\left(R_{\beta}\right)\right]
\end{equation}
Using the slope and intercept of the parity line we get,
\begin{equation}
\sigma_{c}=\frac{E\left(R_{c}\right)-R_{F}}{\left(\Theta\right)}
\end{equation}
We can then select the weights for Alpha, Beta and Gamma based on
the risk and expected return preferences of any investor, $\sigma_{i}$
and $E\left(R_{i}\right)$. First we get the weights to be invested
in Gamma and the combined portfolio ($w_{c}$) as follows,
\begin{equation}
w_{c}=\frac{\left[E\left(R_{i}\right)-E\left(R_{\gamma}\right)\right]}{\left[E\left(R_{c}\right)-E\left(R_{\gamma}\right)\right]}
\end{equation}
\begin{equation}
w_{\gamma}=\frac{\left[E\left(R_{i}\right)-E\left(R_{c}\right)\right]}{\left[E\left(R_{\gamma}\right)-E\left(R_{c}\right)\right]}
\end{equation}
Note that the weights for Gamma and the combined portfolio can be
negative or greater than one if the user's risk and return preference
lies above the risk and return of the combined portfolio. This condition
can be expressed as,
\begin{equation}
\sigma_{i}>\sigma_{c}\text{ or }E\left(R_{i}\right)>E\left(R_{c}\right)
\end{equation}
We can trim the weights for Gamma and the combined portfolio when
they are negative or greater than one by using suitable bounds. This
gives the weight to be invested in Alpha and Beta as, 
\begin{equation}
w_{\alpha}=\left(w_{c}\right)\left(w_{c\alpha}\right)
\end{equation}
\begin{equation}
w_{\beta}=\left(w_{c}\right)\left(w_{c\beta}\right)
\end{equation}

\begin{prop}
\label{prop:Combined-Weights-Expected-Return}If the weights for Alpha,
Beta and Gamma are chosen as below 
\begin{equation}
w_{\alpha}=\left(w_{c}\right)\left(w_{c\alpha}\right)
\end{equation}
\begin{equation}
w_{\beta}=\left(w_{c}\right)\left(w_{c\beta}\right)
\end{equation}
\begin{equation}
w_{\gamma}=\frac{\left[E\left(R_{i}\right)-E\left(R_{c}\right)\right]}{\left[E\left(R_{\gamma}\right)-E\left(R_{c}\right)\right]}
\end{equation}
We then get the expected return chosen by the investor. That is, 
\begin{equation}
\left[w_{\alpha}E\left(R_{\alpha}\right)\right]+\left[w_{\beta}E\left(R_{\beta}\right)\right]+\left[w_{\gamma}E\left(R_{\gamma}\right)\right]=E\left(R_{i}\right)
\end{equation}
\end{prop}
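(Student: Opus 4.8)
The plan is to proceed by direct algebraic substitution, exploiting the fact that the construction reduces the three-fund allocation to a two-fund interpolation between the combined portfolio $c$ and Gamma. First I would substitute $w_{\alpha}=\left(w_{c}\right)\left(w_{c\alpha}\right)$ and $w_{\beta}=\left(w_{c}\right)\left(w_{c\beta}\right)$ into the left-hand side and factor out the common $w_{c}$, obtaining $w_{c}\left[w_{c\alpha}E\left(R_{\alpha}\right)+w_{c\beta}E\left(R_{\beta}\right)\right]+w_{\gamma}E\left(R_{\gamma}\right)$. The bracketed quantity is exactly the defining expression for $E\left(R_{c}\right)$ introduced just before the proposition, so the left-hand side collapses to $w_{c}E\left(R_{c}\right)+w_{\gamma}E\left(R_{\gamma}\right)$. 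This is the key simplification: what looks like a three-asset expected-return identity is really a one-dimensional statement about placing $E\left(R_{i}\right)$ on the segment joining $E\left(R_{c}\right)$ and $E\left(R_{\gamma}\right)$.

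Next I would substitute the explicit formulas $w_{c}=\left[E\left(R_{i}\right)-E\left(R_{\gamma}\right)\right]/\left[E\left(R_{c}\right)-E\left(R_{\gamma}\right)\right]$ and $w_{\gamma}=\left[E\left(R_{i}\right)-E\left(R_{c}\right)\right]/\left[E\left(R_{\gamma}\right)-E\left(R_{c}\right)\right]$ and combine the two resulting terms over a common denominator. The only point demanding care is the sign: the denominator of $w_{\gamma}$ is $E\left(R_{\gamma}\right)-E\left(R_{c}\right)=-\left[E\left(R_{c}\right)-E\left(R_{\gamma}\right)\right]$, so after rewriting both fractions with the common denominator $E\left(R_{c}\right)-E\left(R_{\gamma}\right)$ the numerator becomes $\left[E\left(R_{i}\right)-E\left(R_{\gamma}\right)\right]E\left(R_{c}\right)-\left[E\left(R_{i}\right)-E\left(R_{c}\right)\right]E\left(R_{\gamma}\right)$. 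Expanding, the mixed products $E\left(R_{\gamma}\right)E\left(R_{c}\right)$ cancel and the numerator factors as $E\left(R_{i}\right)\left[E\left(R_{c}\right)-E\left(R_{\gamma}\right)\right]$, which cancels the denominator to leave $E\left(R_{i}\right)$, as claimed.

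I do not anticipate a genuine obstacle here, since the statement is an algebraic identity rather than an existence or optimality result; the sole place an error could enter is the sign flip in the $w_{\gamma}$ denominator, so I would flag the implicit non-degeneracy assumption $E\left(R_{c}\right)\neq E\left(R_{\gamma}\right)$ that keeps both weight formulas well-defined. This is guaranteed by the fund construction, in which Gamma is the lowest-return, lowest-risk sub-fund and is therefore strictly distinct from the combined Alpha-Beta portfolio. As a consistency check I would also verify that $w_{c}+w_{\gamma}=1$ by the same common-denominator computation, confirming that the allocation between the combined portfolio and Gamma is affine and hence that the resulting $E\left(R_{i}\right)$ is a legitimate interpolation of $E\left(R_{c}\right)$ and $E\left(R_{\gamma}\right)$.
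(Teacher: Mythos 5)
Your proposal is correct and follows essentially the same route as the paper's proof: a direct algebraic substitution that uses the identity $E\left(R_{c}\right)=w_{c\alpha}E\left(R_{\alpha}\right)+w_{c\beta}E\left(R_{\beta}\right)$ together with the sign flip $E\left(R_{\gamma}\right)-E\left(R_{c}\right)=-\left[E\left(R_{c}\right)-E\left(R_{\gamma}\right)\right]$ to factor out $E\left(R_{i}\right)$ and cancel the common denominator. Your organization is slightly cleaner (collapsing first to the two-fund statement $w_{c}E\left(R_{c}\right)+w_{\gamma}E\left(R_{\gamma}\right)=E\left(R_{i}\right)$ rather than expanding every term at once, and flagging the non-degeneracy condition $E\left(R_{c}\right)\neq E\left(R_{\gamma}\right)$), but the underlying argument is the same.
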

\begin{proof}
Appendix (\ref{subsec:Parity-Line-Expected}).
\end{proof}
There is still the possibility that multiple combinations of weights
can give the same risk and return on the parity line. But the above
approach ensures that the weights we obtain for any risk and return
chosen will give back the same risk and return. 

Figure (\ref{fig:Parity-Line:-ABG-Real-Market-Data}) has numerical
examples illustrating this methodology of combining Alpha, Beta and
Gamma. Figures (\ref{fig:Parity-Line:-Return-Scenaios}; \ref{fig:Parity-Line:-Risk-Scenarios};
\ref{fig:Parity-Line:-Weights-Return-Scenaios}; \ref{fig:Parity-Line:-Weights-Risk-Scenaios})
show several scenarios where the risk and return of the parity portfolio
are calculated depending on the risk and return of Alpha, Beta and
Gamma with investors choosing their desired level of risk, return
or the weights of the sub-funds. In the scenarios either the investor
choice of portfolio risk or the return are increased in steps of one
percent or the weights are incremented suitably to get portfolio risk
or return increments of one percent.

\subsection{\label{subsec:Robust-and-Simple}Robust and Simple Estimation of
The Parity Line}

Note that the Parity Line will change over time. Hence, as time passes
we will periodically change the slope and intercept of the Parity
Line that is stored in the smart contract (Figure \ref{fig:Parity-Line:-ABG-Scatter}). 

One approach to decide when to seed the smart contract with new values
of the slope and intercept for the Parity Line are to use the volatility
of Alpha, Beta and Gamma and when a new point lies outside a circle
based on the corresponding volatility, it is a good indicator to change
the Parity Line.

A very simple approach to obtain the Parity Line is to run a regression
across different values of return and risk of Alpha, Beta and Gamma
over time. The issue with this approach is that the newer values of
Alpha, Beta and Gamma do not get any additional weight or priority.
To resolve this, we can use advanced regression techniques or take
a weighted average of the values giving higher weight to recent values
and run the regression with these weighted average values.

Other than regression methods, the Parity Line can be chosen as the
line passing through Alpha and Beta. This is for simplicity and also
because a simple regression line will not give more preference to
recent values of Alpha, Beta and Gamma. We are using Alpha and Beta
since by design they are higher than Gamma on the return or Y-axis.
But the Parity Line is the line passing through the two points that
have the highest and second highest return with positive slope. We
emphasize the positive slope since due to market movements and related
anomalies, Beta could be higher than Alpha resulting in negative slope.
In this case, the Parity Line will be the one that passes through
Alpha and Gamma or Beta and Gamma, if Beta is only slightly higher
than Alpha on the return axis.

The main point to note is that the choice of the parameters of the
Parity Line will be done off-chain and the calculated values will
then be stored in the smart contract and updated when deemed necessary.

Though, we need to know what choice the investor made and using that
choice we need to calculate the other values and update the NFT accordingly.
When the weights chosen by the investor and the weights implied by
the Parity Line are different, a simple rebalancing can be performed
(Section \ref{subsec:Parity-Rebalancing}).

\subsection{\label{subsec:Parity-Rebalancing}Parity Rebalancing}

Once an investor makes an investment into Parity and the weights for
Alpha, Beta and Gamma investment are finalized, the corresponding
amounts are invested into Alpha, Beta and Gamma. Hence, based on the
total investment and the individual weights, token quantities specific
to the investment amounts into each of Alpha, Beta and Gamma will
be allocated to the investor. This allocation can be stored in the
NFT or it can also be stored in the smart contract. This will depend
on the pros and cons of adding / storing more metadata into the NFT
and is a technical design consideration. 

The intrinsic value of the NFT, $PARITYINTRINSICVALUE_{it}$, held
by investor $i$ at any time $t$ is the sum of the Alpha, Beta and
Gamma quantities ($ALPHAQTY_{it}$, $BETAQTY_{it},$$GAMMAQTY_{it}$)
allocated to the investor multiplied by the corresponding asset prices
($ALPHAPRICE_{t},$$BETAPRICE_{t}$,$GAMMAPRICE_{t}$). 
\begin{align}
PARITYINTRINSICVALUE_{it} & =\left(ALPHAQTY_{it}\right)\left(ALPHAPRICE_{t}\right)\label{eq:Parity-Intrinsic_Partial}\\
 & +\left(BETAQTY_{it}\right)\left(BETAPRICE_{t}\right)\\
 & +\left(GAMMAQTY_{it}\right)\left(GAMMAPRICE_{t}\right)
\end{align}
There are three primary causes for changing the tokens allocated to
any investors: 
\begin{enumerate}
\item When the prices of the assets change over time, the contribution of
Alpha, Beta and Gamma to the intrinsic value of Parity will change.
Hence, there will be a deviation from the weights chosen by the investor.
A rebalancing mechanism is required that will bring the weights back
to the chosen weights. Note that this adjustment is similar to the
other two cases.
\item When the weights will change due to changes in the user preferences
or
\item Due to changes in the Parity Line.
\end{enumerate}
We consider how to move assets across Alpha, Beta and Gamma based
on the changes due to the above three contributing causes and also
to meet investor deposit and withdrawal requests in Section (\ref{subsec:Parity-Sequence}).
The goal of these fund movements are to keep investor aligned with
their risk and return objectives.

\subsubsection{\label{subsec:Parity-Sequence}Parity Sequence of Steps}

To outline the main sequence of steps, and the corresponding calculations,
for the Parity investment process we proceed with the steps given
in Algorithm (\ref{alg:Algorithm-Parity Sequence of Steps}) as follows:
\begin{lyxalgorithm}
\label{alg:Algorithm-Parity Sequence of Steps}The following algorithm
captures the sequence of steps that need to be carried out at periodic
intervals to ensure that the investments in Alpha, Beta and Gamma
can capture the risk and return preferences and also deposit and withdrawal
requests from all investors entirely on blockchain. The algorithm
also takes of the changes in the allocations - across Alpha, Beta
and Gamma- for the investors based on changes in the market environment
and the risk profiles of the individual funds.
\end{lyxalgorithm}
\begin{enumerate}
\item \label{enu:Parity-Step-One-Fund-Prices}Calculate new Net Asset Value
(NAV: Penman 1970; End-note \ref{EN:Net-Asset-Value}) or Fund Prices
for Alpha, Beta and Gamma based on Step (1) and Step (2) in Kashyap
(2023). 
\item \label{enu:INTRINSIC-PLUS-PENDING}In addition to the variables in
Equation (\ref{eq:Parity-Intrinsic_Partial}), we need to also consider
pending deposits and withdraw requests made by any investor to give
the complete intrinsic value of the Parity investment held by that
investor. 
\begin{enumerate}
\item We next calculate the value of the tokens plus any pending deposits
held by investor $i$ at time $t$, $PARITYDPLUST_{it}$.We then calculate
the value of the withdraw request made by investor $i$ at time $t$,
$PARITYWDRW_{it}$. The deposit amount which has still not been converted
to tokens for investor $i$ at time $t$ is $DEPOSIT_{it}$. The deposit
amount at this phase of the project will be in a stable token or denominated
in USD (Ante, Fiedler \& Strehle 2021; Grobys et al., 2021; Hoang
\& Baur 2021; Lyons \& Viswanath-Natraj 2023; End-note \ref{enu:Stablecoin}).
The withdraw amount will be specified as a percentage of the intrinsic
value held by the investor, $WITHDRAW_{it}$. Note that, $0\leq WITHDRAW_{it}\leq1$.
\begin{align}
PARITYDPLUST_{it} & =DEPOSIT_{it}+\left(ALPHAQTY_{it}\right)\left(ALPHAPRICE_{t}\right)\\
 & +\left(BETAQTY_{it}\right)\left(BETAPRICE_{t}\right)\\
 & +\left(GAMMAQTY_{it}\right)\left(GAMMAPRICE_{t}\right)
\end{align}
\begin{align}
PARITYWDRW_{it} & =WITHDRAW_{it}\left[DEPOSIT_{it}\right.\\
 & +\left(ALPHAQTY_{it}\right)\left(ALPHAPRICE_{t}\right)\\
 & +\left(BETAQTY_{it}\right)\left(BETAPRICE_{t}\right)\\
 & \left.+\left(GAMMAQTY_{it}\right)\left(GAMMAPRICE_{t}\right)\right]
\end{align}
\begin{align}
PARITYWDRW_{it} & =\left(WITHDRAW_{it}\right)\left(PARITYDPLUST_{it}\right)
\end{align}
The value available for future withdraw requests for investor $i$
at time $t$ is given by, $PARITYAVLWDRW_{it}$. 
\begin{align}
PARITYAVLWDRW_{it} & =\left(PARITYDPLUST_{it}\right)\left(1-WITHDRAW_{it}\right)
\end{align}
\item Additional deposits and withdraw requests, made before the sequence
of steps are run, results in the cumulation of the deposit and withdraw
amounts in the pending state. That is, the deposit is a summation
of the deposits from time $T$ to the present time $t$. With a slight
abuse of notation which is helpful while writing this in computer
code wherein the final calculated values can be saved back into the
original variables, we write this as, 
\begin{equation}
DEPOSIT_{it}=\sum_{l=T}^{l=t}DEPOSIT_{il}\quad\begin{cases}
T\leq l\leq t\end{cases}
\end{equation}
Here $T$ is the time when the previous rebalance is completed. All
deposits made by the investor since $T$ are added to the deposit
balance at $T$. Note that, the deposit balance at $T$ can be greater
than zero if a large deposit was made and it was not fully invested
into the sub-funds Alpha, Beta and Gamma when the sequence of steps
are completed at $T$. That is $DEPOSIT_{iT}\geq0$.
\item Likewise, the withdraw amount at time $t$, can be calculated by initially
setting the amount available to withdraw to the sum of the pending
deposit and token values. This is done when the first withdraw request
is made. We use this available amount to calculate the withdraw amounts
from subsequent requests and then cumulate them until time $t$
\begin{align}
PARITYAVLWDRW_{it} & =PARITYDPLUST_{it}
\end{align}
\begin{align}
PARITYWDRW_{it}=PARITYAVLWDRW_{it} & \left(WITHDRAW_{it}\right)
\end{align}
\begin{align}
PARITYAVLWDRW_{it} & =PARITYAVLWDRW_{it}\left(1-WITHDRAW_{it}\right)
\end{align}
\begin{align}
PARITYWDRW_{it}= & PARITYWDRW_{iT}\\
+ & \sum_{l=T}^{l=t}PARITYAVLWDRW_{it}\left(WITHDRAW_{it}\right)\\
 & \quad\begin{cases}
T\leq l\leq t\end{cases}
\end{align}
Here $T$ is the time when the previous rebalance is completed. All
deposits made by the investor since $T$ are added to the deposit
balance at $T$. Note that, the withdraw balance at $T$ can be greater
than zero if a large withdraw request was made and it was not fully
redeemed from the sub-funds Alpha, Beta and Gamma when the sequence
of steps are completed at $T$. That is, $PARITYWDRW_{iT}\geq0$.
\item The deposit and withdraw variables are incremented whenever the user
makes an action. This ensures that the latest deposit or withdraw
variable state is obtained and the gas fees for the operations to
keep these variables updated are paid by the investor as a part of
of the deposit or withdraw transaction.
\end{enumerate}
\item \label{enu:The-deposit-withdraw-user-netting}The deposit amount and
the withdraw amount will be netted for each user. Please note that
the netting can be done along with each deposit or withdraw user action.
It is tempting to do the netting only after the Parity sequence of
steps have commenced and the Alpha, Beta and Gamma prices have been
updated. But since we are only using the cash deposit made by the
user to fulfill his / her withdraw request, the netting can happen
whenever the user makes a deposit or withdraw action. 
\begin{enumerate}
\item \label{enu:Parity-Inflow-Outflow-Investor}We will have a net deposit
or withdraw depending on which cashflow is bigger. That is we check
the condition, $DEPOSIT_{it}\geq PARITYWDRW_{it}$ and calculate the
net inflow $PARITYINFLOW_{it}$ or outflow $PARITYOUTFLOW_{it}$ as
follows,
\begin{align}
PARITYINFLOW_{it} & =\begin{cases}
\begin{array}{cc}
DEPOSIT_{it}-PARITYWDRW_{it}\; & DEPOSIT_{it}\geq PARITYWDRW_{it}\\
0 & DEPOSIT_{it}<PARITYWDRW_{it}
\end{array}\end{cases}
\end{align}
\begin{align}
PARITYOUTFLOW_{it} & =\begin{cases}
\begin{array}{cc}
0 & DEPOSIT_{it}\geq PARITYWDRW_{it}\\
PARITYWDRW_{it}-DEPOSIT_{it}\; & DEPOSIT_{it}<PARITYWDRW_{it}
\end{array}\end{cases}
\end{align}
Or this can be written alternatively as,
\begin{align}
PARITYINFLOW_{it} & =\max\left(DEPOSIT_{it}-PARITYWDRW_{it},0\right)\label{eq:Parity-Inflow-Investor}
\end{align}
\begin{align}
PARITYOUTFLOW_{it} & =\max\left(PARITYWDRW_{it}-DEPOSIT_{it},0\right)\label{eq:Parity-Outflow-Investor}
\end{align}
Note that only one variable, $PARITYINFLOW_{it}$ or $PARITYOUTFLOW_{it}$
can be positive and the other variable has to be zero. While writing
this in computer code, the inflow and outflow values can be saved
back into the original variables. See sub-point (\textbf{\ref{enu:Parity-In-Out-Investor-Comment}})
for additional clarifications.
\item \textbf{\label{enu:Parity-In-Out-Investor-Comment}The Equations (\ref{eq:Parity-Inflow-Investor};
\ref{eq:Parity-Outflow-Investor}) in sub-point (\ref{enu:Parity-Inflow-Outflow-Investor})
need to be modified, to include the cash received from a previous
rebalancing for the investor along with the deposits, to get the full
inflow amount. }This is discussed in sub-point (\ref{enu:Cash-Allocation-Investor})
and given in Equations (\ref{eq:Parity-Inflow-Investor-Cash-Rebalance};
\ref{eq:Parity-Outflow-Investor-Cash-Rebalance}). The aggregate level
(or across investors) cash allocations are discussed in sub-point
(\ref{enu:Cash-Allocation-Aggregate}) and given in Equations (\ref{eq:Parity-Inflow-Aggregate-Positive-Rebalance};
\ref{eq:Parity-Outflow-Aggregate-Positive-Rebalance}). Also sub-points
(\ref{enu:PA-Inflow-Outflow}; \ref{enu:Parity-Fund-Level-Netting})
for aggregate (across investors) level calculations.
\item Using these variables gives the intrinsic value of Parity that includes
deposits and withdraw requests. The intrinsic value of the NFT, $PARITYINTRINSICVALUE_{it}$,
held by investor $i$ at any time $t$ is the sum of the Alpha, Beta
and Gamma quantities ($ALPHAQTY_{it}$, $BETAQTY_{it},$ $GAMMAQTY_{it}$)
allocated to the investor multiplied by the corresponding asset prices
($ALPHAPRICE_{t},$ $BETAPRICE_{t}$, $GAMMAPRICE_{t}$) including
the inflow and outflow of funds, $PARITYINFLOW_{it}$ or $PARITYOUTFLOW_{it}$.
\begin{align}
PARITYINTRINSICVALUE_{it} & =PARITYINFLOW_{it}\label{eq:Parity-Intrinsic_Complete}\\
 & +\left(ALPHAQTY_{it}\right)\left(ALPHAPRICE_{t}\right)\\
 & +\left(BETAQTY_{it}\right)\left(BETAPRICE_{t}\right)\\
 & +\left(GAMMAQTY_{it}\right)\left(GAMMAPRICE_{t}\right)\\
 & -PARITYOUTFLOW_{it}
\end{align}
\end{enumerate}
\item \label{enu:Weights-ABG-Parity-Investments}The weights of Alpha, Beta
and Gamma, for investor $i$ at time $t$ are ($ALPHAWGT_{it}$, $BETAWGT_{it}$,
$GAMMAWGT_{it}$). Based on these weights we can calculate the quantities
that are supposed to be invested into Alpha, Beta and Gamma, ($INVALPHAQT$,
$INVBETAQTY_{it}$, $INVGAMMAQTY_{it}$), for investor $i$ at time
$t$.
\begin{align}
INVALPHAQTY_{it} & =\frac{PARITYINTRINSICVALUE_{it}\left(ALPHAWGT_{it}\right)}{\left(ALPHAPRICE_{t}\right)}-ALPHAQTY_{it}
\end{align}
\begin{align}
INVBETAQTY_{it} & =\frac{PARITYINTRINSICVALUE_{it}\left(BETAWGT_{it}\right)}{\left(BETAPRICE_{t}\right)}-BETAQTY_{it}
\end{align}
\begin{align}
INVGAMMAQTY_{it} & =\frac{PARITYINTRINSICVALUE_{it}\left(GAMMAWGT_{it}\right)}{\left(GAMMAPRICE_{t}\right)}-GAMMAQTY_{it}
\end{align}

\begin{enumerate}
\item Clearly the above can be written in dollar notational values ($INVALPHAAMNT_{it}$,
$INVBETAAMNT_{it}$, $INVGAMMAAMNT_{it}$) to be invested into Alpha,
Beta and Gamma for investor $i$ at time $t$ as,
\begin{align}
INVALPHAAMNT_{it} & =PARITYINTRINSICVALUE_{it}\left(ALPHAWGT_{it}\right)\\
 & -ALPHAQTY_{it}\left(ALPHAPRICE_{t}\right)
\end{align}
\begin{align}
INVBETAAMNT_{it} & =PARITYINTRINSICVALUE_{it}\left(BETAWGT_{it}\right)\\
 & -BETAQTY_{it}\left(BETAPRICE_{t}\right)
\end{align}
\begin{align}
INVGAMMAAMNT_{it} & =PARITYINTRINSICVALUE_{it}\left(GAMMAWGT_{it}\right)\\
 & -GAMMAQTY_{it}\left(GAMMAPRICE_{t}\right)
\end{align}
\item Since the deposit amounts are in USD and the withdraw quantities are
denominated in tokens we can separate these out, for each investor
$i$ at time $t$, in USD investment amounts and token withdraw quantities.
This also ensures we are only storing positive values depending on
whether we are making a net deposit or withdraw as shown below,
\begin{align}
INVALPHAAMNT_{it} & =\max\left[PARITYINTRINSICVALUE_{it}\left(ALPHAWGT_{it}\right)\right.\\
 & -\left.ALPHAQTY_{it}\left(ALPHAPRICE_{t}\right),0\right]
\end{align}
\begin{align}
WDRWALPHAQTY_{it} & =\max\left[ALPHAQTY_{it}-\frac{PARITYINTRINSICVALUE_{it}\left(ALPHAWGT_{it}\right)}{\left(ALPHAPRICE_{t}\right)},0\right]
\end{align}
($WDRWALPHAQTY_{it}$, $WDRWBETAQTY_{it}$, $WDRWGAMMAQTY_{it}$)
are the quantities to be withdrawn from Alpha, Beta and Gamma for
investor $i$ at time $t$. Notice that only one of $INVALPHAAMNT_{it}$
or $WDRWALPHAQTY_{it}$ can be non-zero (and positive). The same applies
to the other pairs of variables for Beta and Gamma. 
\begin{align}
INVBETAAMNT_{it} & =\max\left[PARITYINTRINSICVALUE_{it}\left(BETAWGT_{it}\right)\right.\\
 & -\left.BETAQTY_{it}\left(BETAPRICE_{t}\right),0\right]
\end{align}
\begin{align}
WDRWBETAQTY_{it} & =\max\left[BETAQTY_{it}-\frac{PARITYINTRINSICVALUE_{it}\left(BETAWGT_{it}\right)}{\left(BETAPRICE_{t}\right)},0\right]
\end{align}
\begin{align}
INVGAMMAAMNT_{it} & =\max\left[PARITYINTRINSICVALUE_{it}\left(GAMMAWGT_{it}\right)\right.\\
 & -\left.GAMMAQTY_{it}\left(GAMMAPRICE_{t}\right),0\right]
\end{align}
\begin{align}
WDRWGAMMAQTY_{it} & =\max\left[GAMMAQTY_{it}-\frac{PARITYINTRINSICVALUE_{it}\left(GAMMAWGT_{it}\right)}{\left(GAMMAPRICE_{t}\right)},0\right]
\end{align}
\item Also the following simple identity must be satisfied when the deposit
and withdraw values are combined with the amounts being moved across
funds (rebalancing) so that the investor stays aligned with their
Alpha, Beta and Gamma weights.
\begin{align}
PARITYINFLOW_{it}+\left(ALPHAPRICE_{t}\right)WDRWALPHAQTY_{it}\label{eq:Parity-Fund-Flow-Identity-Investor}\\
+\left(BETAPRICE_{t}\right)WDRWBETAQTY_{it}+\left(GAMMAPRICE_{t}\right)WDRWGAMMAQTY_{it}\\
=\\
PARITYOUTFLOW_{it}+INVALPHAAMNT_{it}\\
+INVBETAAMNT_{it}+INVGAMMAAMNT_{it}
\end{align}
\end{enumerate}
\item \label{enu:Parity-Aggregate}We then aggregate the total deposit amounts
and withdraw quantities across all investors. There are two possible
groups to aggregate. 
\begin{enumerate}
\item \label{enu:First-all-users}First, across all users. This will be
done after a parity line change or a global rebalance for everyone. 
\item \textbf{Second, the other group of users will be those that have a
non zero deposit or withdraw. }This second group of users is used
on a periodic basis for getting the total amounts to deposit or withdraw
from Alpha, Beta and Gamma and also to allocate what is received from
the sub-funds. 
\item Clearly, any group of users can be chosen for the aggregation depending
on the situation dictated by gas fees or other considerations. This
applies to sub-point (\ref{enu:First-all-users}) as well if there
are a large number of users and we need to rebalance them in batches. 
\item Another way to prioritize users to rebalance would be to check their
total amount for rebalancing and order based on the users with the
highest amounts for rebalancing. The amount for rebalancing will be
given by the sum of the difference between the actual allocation and
expected allocation across all three funds: Alpha, Beta and Gamma
(ABG). We denote the total amount to rebalance for investor $i$ at
time $t$ , $TOTALRBLNC_{it}$,by the following,
\begin{align}
TOTALRBLNC_{it} & =PARITYINFLOW_{it}\\
 & +\left(ALPHAPRICE_{t}\right)WDRWALPHAQTY_{it}\\
 & +\left(BETAPRICE_{t}\right)WDRWBETAQTY_{it}\\
 & +\left(GAMMAPRICE_{t}\right)WDRWGAMMAQTY_{it}\\
 & +PARITYOUTFLOW_{it}+INVALPHAAMNT_{it}\\
 & +INVBETAAMNT_{it}+INVGAMMAAMNT_{it}
\end{align}
Note here that we also include deposits and withdraws pending since
that is a complete treatment of the total fund flows required for
investor $i$ at time $t$ .
\item \label{enu:PA-Inflow-Outflow}If we let $IC_{t}$ denote the total
number or count of investors being considered for aggregation at time
$t$ the below variables, prefixed with $PA...$ for Parity Aggregate,
represent the totals across all investors,
\begin{equation}
PAINVALPHAAMNT_{t}=\sum_{i=1}^{i=IC_{t}}INVALPHAAMNT_{it}
\end{equation}
\begin{align}
PAWDRWALPHAQTY_{t} & =\sum_{i=1}^{i=IC_{t}}WDRWALPHAQTY_{t}
\end{align}
\begin{align}
PAINVBETAAMNT_{t} & =\sum_{i=1}^{i=IC_{t}}INVBETAAMNT_{it}
\end{align}
\begin{align}
PAWDRWBETAQTY_{t} & =\sum_{i=1}^{i=IC_{t}}WDRWBETAQTY_{it}
\end{align}
\begin{align}
PAINVGAMMAAMNT_{t} & =\sum_{i=1}^{i=IC_{t}}INVGAMMAAMNT_{it}
\end{align}
\begin{align}
PAWDRWGAMMAQTY_{t} & =\sum_{i=1}^{i=IC_{t}}WDRWGAMMAQTY_{it}
\end{align}
The net fund flows across all investors in terms of deposits and withdraw
requests is given by,
\begin{equation}
PARITYINFLOW_{t}=DEPOSIT_{t}=\sum_{i=1}^{i=IC_{t}}DEPOSIT_{it}=\sum_{i=1}^{i=IC_{t}}PARITYINFLOW_{it}\label{eq:Inflow-Parity-Aggregate}
\end{equation}
\begin{equation}
PARITYOUTFLOW_{t}=PARITYWDRW_{t}=\sum_{i=1}^{i=IC_{t}}PARITYWDRW_{it}=\sum_{i=1}^{i=IC_{t}}PARITYOUTFLOW_{it}\label{eq:Outflow-Parity-Aggregate}
\end{equation}
\textbf{Note that we need to include any cash received from a previous
rebalancing for all the investors along with the deposits to get the
full inflow amount. }These aggregate (across investors) level cash
inclusions are discussed in sub-point (\ref{enu:Cash-Allocation-Aggregate})
and given in Equations (\ref{eq:Parity-Inflow-Aggregate-Positive-Rebalance};
\ref{eq:Parity-Outflow-Aggregate-Positive-Rebalance}). The investor
level cash allocations are discussed in sub-point (\ref{enu:Cash-Allocation-Investor})
and given in Equations (\ref{eq:Parity-Inflow-Investor-Cash-Rebalance};
\ref{eq:Parity-Outflow-Investor-Cash-Rebalance}).See sub-point (\ref{enu:Parity-Fund-Level-Netting})
for aggregate level calculations such that only the investment or
withdraw side is positive. See sub-points (\ref{enu:Parity-Inflow-Outflow-Investor};
\ref{enu:Parity-In-Out-Investor-Comment}) for additional clarifications
related to the investor level deposit and cash inclusions.
\item \label{enu:Parity-Fund-Level-Netting}Notice that we net again at
the fund level (Parity Aggregate) such that only one of $PAINVALPHAAMNT_{t}$
or $PAWDRWALPHAQTY_{t}$ can be non-zero (and positive). See sub-point
(\ref{enu:PA-Inflow-Outflow}) for additional context. The same applies
to the other pairs of variables for Beta and Gamma. We highlight here
that the modified values that include pending amounts need to be considered
as shown in Point (\ref{enu:Pending-Parity-Deposit-Withdraw}). The
calculations in this step are helpful to arrive at the Parity fund
flow identity in Point (\ref{enu:Parity-Fund-Flow-Identity}) and
the allocations in Point (\ref{enu:Allocations-Parity}).
\begin{align}
PAINVALPHAAMNT_{t} & =\max\left[\sum_{i=1}^{i=IC_{t}}INVALPHAAMNT_{it}\right.\\
 & -\left.\left(ALPHAPRICE_{t}\right)\sum_{i=1}^{i=IC_{t}}WDRWALPHAQTY_{it},0\vphantom{\sum_{i=1}^{i=IC_{t}}}\right]
\end{align}
\begin{align}
PAWDRWALPHAQTY_{t} & =\max\left[\sum_{i=1}^{i=IC_{t}}WDRWALPHAQTY_{it}\right.\\
 & -\left.\frac{1}{\left(ALPHAPRICE_{t}\right)}\sum_{i=1}^{i=IC_{t}}INVALPHAAMNT_{it},0\vphantom{\sum_{i=1}^{i=IC_{t}}}\right]
\end{align}
\begin{align}
PAINVBETAAMNT_{t} & =\max\left[\sum_{i=1}^{i=IC_{t}}INVBETAAMNT_{it}\right.\\
 & -\left.\left(BETAPRICE_{t}\right)\sum_{i=1}^{i=IC_{t}}WDRWBETAQTY_{it},0\vphantom{\sum_{i=1}^{i=IC_{t}}}\right]
\end{align}
\begin{align}
PAWDRWBETAQTY_{t} & =\max\left[\sum_{i=1}^{i=IC_{t}}WDRWBETAQTY_{it}\right.\\
 & -\left.\frac{1}{\left(BETAPRICE_{t}\right)}\sum_{i=1}^{i=IC_{t}}INVBETAAMNT_{it},0\vphantom{\sum_{i=1}^{i=IC_{t}}}\right]
\end{align}
\begin{align}
PAINVGAMMAAMNT_{t} & =\max\left[\sum_{i=1}^{i=IC_{t}}INVGAMMAAMNT_{it}\right.\\
 & -\left.\left(GAMMAPRICE_{t}\right)\sum_{i=1}^{i=IC_{t}}WDRWGAMMAQTY_{it},0\vphantom{\sum_{i=1}^{i=IC_{t}}}\right]
\end{align}
\begin{align}
PAWDRWGAMMAQTY_{t} & =\max\left[\sum_{i=1}^{i=IC_{t}}WDRWGAMMAQTY_{it}\right.\\
 & -\left.\frac{1}{\left(GAMMAPRICE_{t}\right)}\sum_{i=1}^{i=IC_{t}}INVGAMMAAMNT_{it},0\vphantom{\sum_{i=1}^{i=IC_{t}}}\right]
\end{align}
The inflow and outflow equations become,
\begin{align}
PARITYINFLOW_{t} & =\max\left(\sum_{i=1}^{i=IC_{t}}\left[PARITYINFLOW_{it}-PARITYOUTFLOW_{it}\right],0\right)\label{eq:Parity-Inflow-Aggregate-Positive}
\end{align}
\begin{align}
PARITYOUTFLOW_{t} & =\max\left(\sum_{i=1}^{i=IC_{t}}\left[PARITYOUTFLOW_{it}-PARITYINFLOW_{it}\right],0\right)\label{eq:Parity-Outflow-Aggregate-Positive}
\end{align}
\textbf{Note that we need to include any cash received from a previous
rebalancing for all the investors along with the deposits to get the
full inflow amount. }These aggregate (across investors) level cash
inclusions are discussed in sub-point (\ref{enu:Cash-Allocation-Aggregate})
and given in Equations (\ref{eq:Parity-Inflow-Aggregate-Positive-Rebalance};
\ref{eq:Parity-Outflow-Aggregate-Positive-Rebalance}). The investor
level cash allocations are discussed in sub-point (\ref{enu:Cash-Allocation-Investor})
and given in Equations (\ref{eq:Parity-Inflow-Investor-Cash-Rebalance};
\ref{eq:Parity-Outflow-Investor-Cash-Rebalance}).
\item \label{enu:Parity-Fund-Flow-Identity}Another simple identity, similar
to the investor fund flow identity (Equation \ref{eq:Parity-Fund-Flow-Identity-Investor}),
can be arrived at for the parity aggregate level as shown below,
\begin{align}
PARITYINFLOW_{t}+\left(ALPHAPRICE_{t}\right)PAWDRWALPHAQTY_{t}\label{eq:Parity-Fund-Flow-Identity-Aggregate}\\
+\left(BETAPRICE_{t}\right)PAWDRWBETAQTY_{t}+\left(GAMMAPRICE_{t}\right)PAWDRWGAMMAQTY_{t}\\
=\\
PARITYOUTFLOW_{t}+PAINVALPHAAMNT_{t}\\
+PAINVBETAAMNT_{t}+PAINVGAMMAAMNT_{t}
\end{align}
Figure (\ref{fig:Parity-Sequence-Steps-Aggregate-I-II-III}) shows
that the Parity Aggregate identity is satisfied based on the scenarios
shown in the corresponding investor illustrations.
\item \label{enu:Pending-Parity-Deposit-Withdraw}We need to calculate the
actual excess values we have to invest into or withdraw, from Alpha,
Beta and Gamma, which modify the equations in Point (\ref{enu:Parity-Fund-Level-Netting}).
To do this, we need to consider the pending deposits, denominated
in USD, or withdraws, denominated in number of tokens, made by Parity
into Alpha, Beta and Gamma while calculating the netted total amounts:
$PDGALPHADPST_{t}$, $PDGBETADPST_{t}$, $PDGGAMMADPST_{t}$ are deposits
denominated in USD and $PDGALPHAWDRW_{t}$, $PDGBETAWDRW_{t}$, $PDGGAMMAWDRW_{t}$
are withdraw requests denominated in number of tokens. Notice that
the additional amount we need to invest or withdraw has to be decreased
by the pending amount if the pending value is smaller otherwise we
do not need to invest or withdraw any additional amounts. A cancel
of the excess amount invested or being withdrawn can be made. Though
currently investors can only cancel their entire withdraw. But special
provisions for Parity can be made at a later stage.
\begin{align}
INVALPHAAMNT_{t} & =\max\left[\max\left\{ \left(\sum_{i=1}^{i=IC_{t}}INVALPHAAMNT_{it}-PDGALPHADPST_{t}\right),0\right\} \right.\\
- & \left(ALPHAPRICE_{t}\right)\\
 & \left.\max\left\{ \left(\sum_{i=1}^{i=IC_{t}}WDRWALPHAQTY_{it}-PDGALPHAWDRW_{t}\right),0\right\} ,0\right]
\end{align}
\begin{align}
WDRWALPHAQTY_{t} & =\max\left[\max\left\{ \left(\sum_{i=1}^{i=IC_{t}}WDRWALPHAQTY_{it}-PDGALPHAWDRW_{t}\right),0\right\} \right.\\
- & \frac{1}{\left(ALPHAPRICE_{t}\right)}\\
 & \left.\max\left\{ \left(\sum_{i=1}^{i=IC_{t}}INVALPHAAMNT_{it}-PDGALPHADPST_{t}\right),0\right\} ,0\right]
\end{align}
\begin{align}
INVBETAAMNT_{t} & =\max\left[\max\left\{ \left(\sum_{i=1}^{i=IC_{t}}INVBETAAMNT_{it}-PDGBETADPST_{t}\right),0\right\} \right.\\
- & \left(BETAPRICE_{t}\right)\\
 & \left.\max\left\{ \left(\sum_{i=1}^{i=IC_{t}}WDRWBETAQTY_{it}-PDGBETAWDRW_{t}\right),0\right\} ,0\right]
\end{align}
\begin{align}
WDRWBETAQTY_{t} & =\max\left[\max\left\{ \left(\sum_{i=1}^{i=IC_{t}}WDRWBETAQTY_{it}-PDGBETAWDRW_{t}\right),0\right\} \right.\\
- & \frac{1}{\left(BETAPRICE_{t}\right)}\\
 & \left.\max\left\{ \left(\sum_{i=1}^{i=IC_{t}}INVBETAAMNT_{it}-PDGBETADPST_{t}\right),0\right\} ,0\right]
\end{align}
\begin{align}
INVGAMMAAMNT_{t} & =\max\left[\max\left\{ \left(\sum_{i=1}^{i=IC_{t}}INVGAMMAAMNT_{it}-PDGGAMMADPST_{t}\right),0\right\} \right.\\
- & \left(GAMMAPRICE_{t}\right)\\
 & \left.\max\left\{ \left(\sum_{i=1}^{i=IC_{t}}WDRWGAMMAQTY_{it}-PDGGAMMAWDRW_{t}\right),0\right\} ,0\right]
\end{align}
\begin{align}
WDRWGAMMAQTY_{t} & =\max\left[\max\left\{ \left(\sum_{i=1}^{i=IC_{t}}WDRWGAMMAQTY_{it}-PDGGAMMAWDRW_{t}\right),0\right\} \right.\\
- & \frac{1}{\left(GAMMAPRICE_{t}\right)}\\
 & \left.\max\left\{ \left(\sum_{i=1}^{i=IC_{t}}INVGAMMAAMNT_{it}-PDGGAMMADPST_{t}\right),0\right\} ,0\right]
\end{align}
\item The separation of deposit and investment values need not be done at
the investor level. But the separation can be done in this step while
aggregating across the group of users we have chosen, so that we do
not have to deal with negative quantities, by checking conditions
such as $\frac{\left(PARITYINTRINSICVALUE_{it}\right)\left(ALPHAWGT_{it}\right)}{\left(ALPHAPRICE_{t}\right)}\geq ALPHAQTY_{it}$.
\item \textbf{There is an important point to be discussed here. }The totals
calculated here are good estimates of the exact amounts of USD, Alpha,
Beta and Gamma tokens we will need when we actually receive them from
the sub-funds. The calculations, which are estimates, are only used
to invest and redeem from Alpha, Beta and Gamma. The actual allocations
in a later step (\ref{enu:Allocations-Parity}) are based on what
will be actually received from Alpha, Beta and Gamma and what the
current weights and requirements of investors are. What we receive
could be different from what was requested due to various reasons
such as more deposits or withdraws being made, changes in preferences
and also limits on the amounts that can be deposited or withdrawn
from Alpha, Beta and Gamma.
\end{enumerate}
\item \label{enu:Using-Parity-Aggregate}Using the aggregate totals across
all investors, calculated in Point (\ref{enu:Pending-Parity-Deposit-Withdraw}),we
invest or withdraw from Alpha, Beta and Gamma (ABG) accordingly. The
USD that is available for deposit, $AVLDEPOSIT_{t}$, after taking
away the pending deposits, needs to be split among the investment
requirements across Alpha, Beta and Gamma. Clearly there are several
ways in which this can be done. 
\begin{align}
AVLDEPOSIT_{t} & =\max\left[PARITYINFLOW_{t}-PDGALPHADPST_{t}\right.\\
 & \left.-PDGBETADPST_{t}-PDGGAMMADPST_{t},0\right]
\end{align}

\begin{enumerate}
\item \label{enu:A-simple-rule-allocation}A simple rule that minimizes
transactions, so that we invest as much as possible into one fund
before investing in another fund, can be implemented as follows:
\begin{equation}
INVALPHAAMNT_{t}=\begin{cases}
\begin{array}{cc}
\left(INVALPHAAMNT_{t}\right)\min\left(\frac{AVLDEPOSIT_{t}}{INVALPHAAMNT_{t}},1\right) & \;INVALPHAAMNT_{t}>0\\
0 & \;INVALPHAAMNT_{t}=0
\end{array}\end{cases}
\end{equation}
The cash available for the next fund (Beta) is decremented and a similar
ratio is used to calculate the deposit to be made in Beta,
\begin{equation}
AVLDEPOSIT_{t}=AVLDEPOSIT_{t}-INVALPHAAMNT_{t}
\end{equation}
\begin{equation}
INVBETAAMNT_{t}=\begin{cases}
\begin{array}{cc}
\left(INVBETAAMNT_{t}\right)\min\left(\frac{AVLDEPOSIT_{t}}{INVBETAAMNT_{t}},1\right) & \;INVBETAAMNT_{t}>0\\
0 & \;INVBETAAMNT_{t}=0
\end{array}\end{cases}
\end{equation}
\begin{equation}
AVLDEPOSIT_{t}=AVLDEPOSIT_{t}-INVBETAAMNT_{t}
\end{equation}
The cash remaining is deposited into the last fund (Gamma) if it is
greater than the amount to be invested in Gamma otherwise we use a
condition similar to the earlier funds, as shown below,
\begin{equation}
INVGAMMAAMNT_{t}=\begin{cases}
\begin{array}{cc}
0 & INVGAMMAAMNT_{t}=0\\
AVLDEPOSIT_{t} & INVGAMMAAMNT_{t}\geq AVLDEPOSIT_{t}\\
\left\{ \vphantom{*\min\left(\frac{DEPOSIT_{t}}{INVGAMMAAMNT_{t}},1\right)}\left(INVGAMMAAMNT_{t}\right)\right. & INVGAMMAAMNT_{t}<AVLDEPOSIT_{t}\\
\left.*\min\left(\frac{AVLDEPOSIT_{t}}{INVGAMMAAMNT_{t}},1\right)\right\} 
\end{array}\end{cases}
\end{equation}
\item The issue with the simple rule above in sub-point (\ref{enu:A-simple-rule-allocation}),
that fills the first fund and moves to the next, is that the fund
prices will fluctuate and it would mean higher costs for people that
invest in the later funds. This issue arises especially when prices
are volatile compared to the frequency at which the sequence of steps
are run. Hence we use the investment need for each fund after adjusting
for pending deposits, as a proportion of the total across all three
funds, $TOTALINVABGAMNT_{t}$, made as below,
\begin{align}
TOTALINVABGAMNT_{t} & =INVALPHAAMNT_{t}+INVBETAAMNT_{t}\\
 & +INVGAMMAAMNT_{t}
\end{align}
\begin{equation}
INVALPHAAMNT_{t}=\begin{cases}
\begin{array}{cc}
\left(AVLDEPOSIT_{t}\right)\left(\frac{INVALPHAAMNT_{t}}{TOTALINVABGAMNT_{t}}\right) & \;TOTALINVABGAMNT_{t}>0\\
0 & \;TOTALINVABGAMNT_{t}=0
\end{array}\end{cases}
\end{equation}
\begin{equation}
INVBETAAMNT_{t}=\begin{cases}
\begin{array}{cc}
\left(AVLDEPOSIT_{t}\right)\left(\frac{INVBETAAMNT_{t}}{TOTALINVABGAMNT_{t}}\right) & \;TOTALINVABGAMNT_{t}>0\\
0 & \;TOTALINVABGAMNT_{t}=0
\end{array}\end{cases}
\end{equation}
\begin{equation}
INVGAMMAAMNT_{t}=\begin{cases}
\begin{array}{cc}
\left(AVLDEPOSIT_{t}\right)\left(\frac{INVGAMMAAMNT_{t}}{TOTALINVABGAMNT_{t}}\right) & \;TOTALINVABGAMNT_{t}>0\\
0 & \;TOTALINVABGAMNT_{t}=0
\end{array}\end{cases}
\end{equation}
\item A combination of the above two rules would check if the value of the
investment, to be sent in a particular transaction to one of the three
funds (ABG), is above a certain minimum threshold. Hence, this approach
considers the investment need of each fund adjusted for pending deposits
combined with a minimum amount for each transaction. Such a minimum
amount can also be set for the withdraw quantities. This ensures that
we do not make very small deposit and withdraw requests.
\end{enumerate}
\item \label{enu:Allocations-Parity}When tokens or cash are received from
Alpha, Beta and Gamma they are distributed to individual investors
based on the percentage of their request as compared to the overall
tokens or cash received for all investors. 
\begin{enumerate}
\item There is an important point to be discussed here. The totals calculated
in Point (\ref{enu:Parity-Aggregate}) are good estimates of the exact
amounts of USD, Alpha, Beta and Gamma tokens we will need when we
actually receive them from the sub-funds. The calculations, in Point
(\ref{enu:Parity-Aggregate}) which are estimates, are only used to
invest and redeem from Alpha, Beta and Gamma. The actual allocations
in this step (Step \ref{enu:Allocations-Parity}) are based on the
tokens and cash actually received from Alpha, Beta and Gamma and what
the current weights and requirements of investors are. 
\item The key point to remember is that the steps from Step (\ref{enu:INTRINSIC-PLUS-PENDING})
to Step (\ref{enu:Using-Parity-Aggregate}) can be performed in one
blockchain transaction but within multiple computer science programming
functions. Some of the calculations mentioned in Step (\ref{enu:INTRINSIC-PLUS-PENDING})
and Step (\ref{enu:The-deposit-withdraw-user-netting}) happen when
a user deposit or withdraw action is performed. Step (\ref{enu:Parity-Step-One-Fund-Prices})
happens as a separate transaction in Alpha, Beta and Gamma before
we start Step (\ref{enu:INTRINSIC-PLUS-PENDING}) for Parity. Once
Steps (\ref{enu:INTRINSIC-PLUS-PENDING}) to (\ref{enu:Using-Parity-Aggregate})
are completed, this step (Step \ref{enu:Allocations-Parity}) is performed
as a separate transaction. 
\item There may be a way to automatically detect incoming tokens or cash
from Alpha, Beta and Gamma and perform this step without manager or
external intervention.
\item The calculations from Step (\ref{enu:INTRINSIC-PLUS-PENDING}) to
Step (\ref{enu:Parity-Aggregate}) can be repeated before doing this
step (Step \ref{enu:Allocations-Parity}). Even the price updates
from Step (\ref{enu:Parity-Step-One-Fund-Prices}) can be performed
before this step. This depends on the amount of time that has elapsed
since deposit and withdraw requests into Alpha, Beta and Gamma have
been made and cash or tokens have been received from ABG. It is also
worth emphasizing that the calculations in Step (\ref{enu:INTRINSIC-PLUS-PENDING})
to Step (\ref{enu:Parity-Aggregate}) can be repeated for any group
of investors before proceeding to this step (Step \ref{enu:Allocations-Parity}).
\item \label{enu:Parity-Token-Allocations}Let the number of tokens and
the amount of cash received from Alpha, Beta and Gamma be denoted
by $ALPHARCVD_{t}$, $BETARCVD_{t}$, $GAMMARCVD_{t}$ and $CASHRCVD_{t}$.
The ABG allocations, $ALPHAALLOC_{it}$, $BETAALLOC_{it}$, $GAMMAALLOC_{it}$,
for investor $i$ at time $t$ are given by,
\begin{align}
ALPHAALLOC_{it} & =\left[ALPHARCVD_{t}\right]\\
 & \begin{cases}
\begin{array}{cc}
\min\left[\left(\frac{INVALPHAAMNT_{it}}{PAINVALPHAAMNT_{t}}\right),1\right] & \;\left(PAINVALPHAAMNT_{t}\right)>0\\
0 & \;\left(PAINVALPHAAMNT_{t}\right)=0
\end{array}\end{cases}
\end{align}
\begin{equation}
INVALPHAAMNT_{it}=INVALPHAAMNT_{it}-ALPHAALLOC_{it}\left(ALPHAPRICE_{t}\right)
\end{equation}
\begin{equation}
ALPHARCVD_{t}=ALPHARCVD_{t}-ALPHAALLOC_{it}
\end{equation}
\begin{align}
BETAALLOC_{it} & =\left[BETARCVD_{t}\right]\\
 & \begin{cases}
\begin{array}{cc}
\min\left[\left(\frac{INVBETAAMNT_{it}}{PAINVBETAAMNT_{t}}\right),1\right] & \;\left(PAINVBETAAMNT_{t}\right)>0\\
0 & \;\left(PAINVBETAAMNT_{t}\right)=0
\end{array}\end{cases}
\end{align}
\begin{equation}
INVBETAAMNT_{it}=INVBETAAMNT_{it}-BETAALLOC_{it}\left(BETAPRICE_{t}\right)
\end{equation}
\begin{equation}
BETARCVD_{t}=BETARCVD_{t}-BETAALLOC_{it}
\end{equation}
\begin{align}
GAMMAALLOC_{it} & =\left[GAMMARCVD_{t}\right]\\
 & \begin{cases}
\begin{array}{cc}
\min\left[\left(\frac{INVGAMMAAMNT_{it}}{PAINVGAMMAAMNT_{t}}\right),1\right] & \;\left(PAINVGAMMAAMNT_{t}\right)>0\\
0 & \;\left(PAINVGAMMAAMNT_{t}\right)=0
\end{array}\end{cases}
\end{align}
\begin{equation}
INVGAMMAAMNT_{it}=INVGAMMAAMNT_{it}-GAMMAALLOC_{it}\left(GAMMAPRICE_{t}\right)
\end{equation}
\begin{equation}
GAMMARCVD_{t}=GAMMARCVD_{t}-GAMMAALLOC_{it}
\end{equation}
\textbf{Notice that we need to increment the Alpha quantity already
allocated to this investor by the new quantity that just got allocated
to this investor out of the Alpha tokens just received by Parity}.
The same is done for Beta and Gamma. Computer science optimizations
such as incrementing the quantity allocated to the investor directly
or using one temporary variable to calculate the allocation from this
recent event and adding it for each investor can be implemented.
\begin{equation}
ALPHAQTY_{it}=ALPHAQTY_{it}+ALPHAALLOC_{it}
\end{equation}
\begin{equation}
BETAQTY_{it}=BETAQTY_{it}+BETAALLOC_{it}
\end{equation}
\begin{equation}
GAMMAQTY_{it}=GAMMAQTY_{it}+GAMMAALLOC_{it}
\end{equation}
\item \label{enu:Parity-Cash-Allocations}Notice that we will receive cash
from withdraw requests made into Alpha, Beta and Gamma separately.
But we can club them together and handle it as one lump-sum for convenience.
$PAWDRWABGAMNT_{t}$is the total dollar withdraw request across all
investors from Alpha, Beta and Gamma. 
\begin{align}
PAWDRWABGAMNT_{t} & =\left(PAWDRWALPHAQTY_{t}\right)\left(ALPHAPRICE_{t}\right)\\
 & +\left(PAWDRWBETAQTY_{t}\right)\left(BETAPRICE_{t}\right)\\
 & +\left(PAWDRWGAMMAQTY_{t}\right)\left(GAMMAPRICE_{t}\right)
\end{align}
Let ($WDRWALPHAAMNT_{it}$, $WDRWBETAAMNT_{it}$, $WDRWGAMMAAMNT_{it}$)
be the dollar amounts to be withdrawn from Alpha, Beta and Gamma for
investor $i$ at time $t$. The cash allocation, $CASHALLOC_{it}$
for investor $i$ at time $t$ is given by,
\begin{align}
CASHALLOC_{it} & =\left[CASHRCVD_{t}\right]\\
 & \begin{cases}
\begin{array}{cc}
\min\left[\left(\frac{WDRWALPHAAMNT_{it}+WDRWBETAAMNT_{it}+WDRWGAMMAAMNT_{it}}{PAWDRWABGAMNT_{t}}\right),1\right]\\
0
\end{array}\end{cases}\\
 & \begin{cases}
\begin{array}{cc}
\;\left(PAWDRWABGAMNT_{t}\right)>0\\
\;\left(PAWDRWABGAMNT_{t}\right)=0
\end{array}\end{cases}
\end{align}
\begin{align}
WDRWALPHAQTY_{it} & =WDRWALPHAQTY_{it}\\
 & -\left(\frac{CASHALLOC_{it}}{ALPHAPRICE_{t}}\right)\\
 & \begin{cases}
\begin{array}{cc}
\left[\frac{WDRWALPHAAMNT_{it}}{WDRWALPHAAMNT_{it}+WDRWBETAAMNT_{it}+WDRWGAMMAAMNT_{it}}\right]\\
0
\end{array}\end{cases}\\
 & \begin{cases}
\begin{array}{cc}
WDRWALPHAAMNT_{it}+WDRWBETAAMNT_{it}\\
+WDRWGAMMAAMNT_{it}>0\\
WDRWALPHAAMNT_{it}+WDRWBETAAMNT_{it}\\
+WDRWGAMMAAMNT_{it}=0
\end{array}\end{cases}
\end{align}
\begin{align}
WDRWBETAQTY_{it} & =WDRWBETAQTY_{it}\\
 & -\left(\frac{CASHALLOC_{it}}{BETAPRICE_{t}}\right)\\
 & \begin{cases}
\begin{array}{cc}
\left[\frac{WDRWBETAAMNT_{it}}{WDRWALPHAAMNT_{it}+WDRWBETAAMNT_{it}+WDRWGAMMAAMNT_{it}}\right]\\
0
\end{array}\end{cases}\\
 & \begin{cases}
\begin{array}{cc}
WDRWALPHAAMNT_{it}+WDRWBETAAMNT_{it}\\
+WDRWGAMMAAMNT_{it}>0\\
WDRWALPHAAMNT_{it}+WDRWBETAAMNT_{it}\\
+WDRWGAMMAAMNT_{it}=0
\end{array}\end{cases}
\end{align}
\begin{align}
WDRWGAMMAQTY_{it} & =WDRWGAMMAQTY_{it}\\
 & -\left(\frac{CASHALLOC_{it}}{GAMMAPRICE_{t}}\right)\\
 & \begin{cases}
\begin{array}{cc}
\left[\frac{WDRWGAMMAAMNT_{it}}{WDRWALPHAAMNT_{it}+WDRWBETAAMNT_{it}+WDRWGAMMAAMNT_{it}}\right]\\
0
\end{array}\end{cases}\\
 & \begin{cases}
\begin{array}{cc}
WDRWALPHAAMNT_{it}+WDRWBETAAMNT_{it}\\
+WDRWGAMMAAMNT_{it}>0\\
WDRWALPHAAMNT_{it}+WDRWBETAAMNT_{it}\\
+WDRWGAMMAAMNT_{it}=0
\end{array}\end{cases}
\end{align}
\begin{equation}
CASHRCVD_{t}=CASHRCVD_{t}-CASHALLOC_{it}
\end{equation}
\item \label{enu:Cash-Allocation-Investor}The cash allocation received
by investor $i$ at time $t$, $CASHALLOC_{it}$, is aggregated along
with the deposit and withdraw in Equations (\ref{eq:Parity-Inflow-Investor};
\ref{eq:Parity-Outflow-Investor}) and sub-point (\ref{enu:Parity-Inflow-Outflow-Investor})
and put back into the fund in the next iteration. 
\begin{align}
PARITYINFLOW_{it} & =\max\left(DEPOSIT_{it}+CASHALLOC_{it}-PARITYWDRW_{it},0\right)\label{eq:Parity-Inflow-Investor-Cash-Rebalance}
\end{align}
\begin{align}
PARITYOUTFLOW_{it} & =\max\left(PARITYWDRW_{it}-DEPOSIT_{it}-CASHALLOC_{it},0\right)\label{eq:Parity-Outflow-Investor-Cash-Rebalance}
\end{align}
\textbf{We need separate variables for the cash allocation and deposit
amounts only if we wish to display the deposit amounts separately
to the user or maintain these separately for other reasons. This deposit
amount can be shown as a pending deposit before it enters the three
sub-funds and gets converted to ABG tokens. Having more granularity
in terms of storing data is always to be preferred, but within the
blockchain realm it might be prudent to reduce data stored. }
\item \label{enu:Cash-Allocation-Aggregate}The inclusion of cash at the
investor level also affects the parity aggregate inflow and outflow
Equations (\ref{eq:Parity-Inflow-Aggregate-Positive}; \ref{eq:Parity-Outflow-Aggregate-Positive})
in sub-point (\ref{enu:Parity-Fund-Level-Netting}).
\begin{align}
PARITYINFLOW_{t} & =\max\left(\sum_{i=1}^{i=IC_{t}}\left[PARITYINFLOW_{it}+CASHALLOC_{it}-PARITYOUTFLOW_{it}\right],0\right)\label{eq:Parity-Inflow-Aggregate-Positive-Rebalance}
\end{align}
\begin{align}
PARITYOUTFLOW_{t} & =\max\left(\sum_{i=1}^{i=IC_{t}}\left[PARITYOUTFLOW_{it}-PARITYINFLOW_{it}-CASHALLOC_{it}\right],0\right)\label{eq:Parity-Outflow-Aggregate-Positive-Rebalance}
\end{align}
\end{enumerate}
\end{enumerate}
Figures (\ref{fig:Parity-Sequence-Steps-Investor-Variables-I}; \ref{fig:Parity-Sequence-Steps-Investor-Amounts-Tokens-II};
\ref{fig:Parity-Sequence-Steps-Investor-Raw-III}; \ref{fig:Parity-Sequence-Steps-Aggregate-I-II-III})
illustrate different investors performing different actions - such
as withdraws, deposits, risk-return preference changes - with different
states of fund prices and their existing investments in the fund.
The illustrations show how investor allocations of Alpha, Beta and
Gamma tokens change and also how the fund level aggregate figures
are calculated. 

\subsection{\label{subsec:Efficient-Frontier-Parabolic}Illustration of The Efficient
Frontier Parabolic Equation}

This section is mostly concerned with sending a strong message to
blockchain investors about the powerful investment vehicle that can
be created for them using several decades of wisdom from the traditional
financial markets. We are using the Parity Line as an enhancement
to the concept of the efficient frontier. Hence, we are calling it
the final frontier of investing. But the efficient frontier is well
known in financial and investment circles. Hence, we will use the
below formulation to show a suitable parabolic curve below the Parity
Line on the GUI (Figure \ref{fig:Deposit-Screen-Parity}; Loney 1897;
End-note \ref{enu:In-mathematics,-Parabola}). 

We represent the Parity Line by an equation as 
\begin{equation}
y=mx+c
\end{equation}
Comparing this to the Parity Line in Equation (\ref{eq:Parity-Line})
shows the following: $y=E\left(R\right)$ is the expected return;
$x=\sigma$ is the risk; $m=\Theta$ and $c=R_{F}$ are the slope
and the intercept of the line respectively. 

Using the Parity Line, we calculate the following elements for the
purpose of graphing: the parabolic curve, the point where the Parity
Line is tangent to the parabolic curve, the maximum and minimum display
area in terms of X and Y co-ordinates. A subtle point to be noted
is that given a line and a point, we can have an infinite number of
parabolic curves such that the line is tangent to the curve at the
point. In our case, we narrow this down by choosing a parabolic curve
that has its axis parallel to the X-axis, or its directrix is perpendicular
to the X-axis, and the Y co-ordinate of its vertex has some order
of magnitude compared to the Y co-ordinate of the tangency point.
\begin{enumerate}
\item The maximum display area in terms of X and Y co-ordinates, $\left(X_{V},Y_{V}\right)$,
is given by:
\begin{align}
X_{V} & =\frac{V_{C}}{m}\\
Y_{V} & =V_{C}+x
\end{align}
Here, $V_{C}$ is the view constant which we configure to determine
how many points we want to have on the X axis. A suggested value is
$V_{C}=50$. 
\item The tangency point on the Parity Line, $\left(X_{T},Y_{T}\right)$,
is chosen so that its X co-ordinate is a certain order of magnitude
compared to the X co-ordinate of the maximum display area.
\begin{align}
X_{T} & =X_{V}T_{C}\\
Y_{T} & =mX_{V}T_{C}+c
\end{align}
Here, $T_{C}$ is the tangency constant which we configure to determine
the order of magnitude the X co-ordinate of the tangency point on
the Parity Line compared to the X co-ordinate of the maximum display
area. A suggested value is $T_{C}=0.5$. 
\item The equation of any parabolic curve such that its directrix is perpendicular
to the X-axis is 
\begin{equation}
\left(y-h\right)^{2}=A\left(x-k\right)\label{eq:Parabolic-Curve}
\end{equation}
We need to determine $\left(h,k,A\right)$ to satisfy the constraints
in our case as follows,
\begin{equation}
h=\frac{Y_{T}}{H_{C}}
\end{equation}
Here, $H_{C}$ is the height constant which we configure to determine
the displacement of the Y co-ordinate of the vertex of the parabola
compared to the Y co-ordinate of the tangency point. A suggested value
is $H_{C}=2$. Differentiating Equation (\ref{eq:Parabolic-Curve})
with respect to $x$ gives,
\begin{equation}
2\left(y-h\right)\left(\frac{dy}{dx}\right)=A
\end{equation}
Since the derivative of Equation (\ref{eq:Parabolic-Curve}) at the
point, $\left(X_{T},Y_{T}\right)$ is the slope of the Parity Line
we have,
\begin{equation}
A=2mY_{T}\left(1-\frac{1}{H_{C}}\right)
\end{equation}
Lastly, substituting the values of $\left(h,A\right)$ in Equation
(\ref{eq:Parabolic-Curve}) and solving gives, 
\begin{equation}
k=X_{T}-\frac{Y_{T}}{2m}\left(1-\frac{1}{H_{C}}\right)
\end{equation}
Notice that when plotting the parabolic curve, for any value of the
X co-ordinate, $x_{1}$, we will have two values of the Y co-ordinate,
$y_{1}$. That is,
\begin{equation}
y_{1}=h\pm\sqrt{A\left(x_{1}-k\right)}
\end{equation}
Also, we only have valid Y Co-ordinate values when $x_{1}>k$.
\end{enumerate}
Figures (\ref{fig:Parity-Line-Parabolic-Curve-Points}; \ref{fig:Parity-Line-Parabolic-Curve-Plots})
illustrate the graphs of the efficient-frontier and the final frontier
- parabolic curve and parity line - discussed in this section. Figure
(\ref{fig:Parity-Line-Parabolic-Curve-Points}) gives the numerical
values of the X and Y co-ordinates corresponding to the Parity Line
and the Parabolic Curve. representing the efficient frontier. Figure
(\ref{fig:Parity-Line-Parabolic-Curve-Plots}) is a graphical plot
of the Parity Line and the Parabolic Curve based on the numerical
example from Figure (\ref{fig:Parity-Line-Parabolic-Curve-Points}).

The following disclaimer or clarification needs to be added in the
Parity GUI at a suitable location, ideally somewhere at the bottom
perhaps: “The line in the graph shown on the Parity Deposit page,
termed the “Parity Line”, is the best combination of risk and returns
that investors can expect by blending together Alpha, Beta and Gamma.
This line represents a set of investment portfolios that are expected
to provide the highest returns at a given level of risk. In other
words, there is no other portfolio that offers higher returns for
a lower or equal amount of risk. The parabolic curve below the Parity
Line has been included for visual emphasis only and it represents
the efficient frontier curve when reasonable alternatives for the
risk free rate are not available. The careful construction and mixing
of our Alpha, Beta and Gamma portfolios, allows the Parity Line to
transcend the risk return combinations available from the creation
of all other portfolios.”

\section{\label{sec:Areas-for-Further}Implementation Pointers and Areas for
Further Research}

To perform some of the calculations we have discussed - such as risk
and return calculations plus other related enhancements using the
averaging techniques we have outlined in Kashyap (2023) - requires
being able to access a large number of historical transactions as
well. Providing such a large amount of input data to the decentralized
computer is still an area of active research (Wu et al., 2019; Kurt
Peker et al., 2020; Fan, Niu \& Liu 2022; End-note \ref{enu:Ethereum,-which-was}). 

Any intensive computations needed, to clarify the decision process
and arrive at the decision outcomes, can be done outside the blockchain
world, but the essential fund movements are better suited to happen
on a blockchain environment for security reasons. The interaction
between on-chain and off-chain components is a delicate balance involving
several trade-offs such as blockchain computational cost and not revealing
proprietary investment strategies (Garvey \& Murphy 2005; Pardo 2011;
Nuti et al., 2011). 

We believe that the efficient frontier is a moving target - even in
the traditional financial world - with assets being added or removed,
their risk-return properties undergoing alterations and even entire
markets getting transformed. This is all the more the case with the
rapidly evolving crypto landscape, where many new protocols and projects
are appearing on the scene. 

We have other analytical estimates and results in subsequent works
- pertaining to optimal fund flows and the periodicity of rebalancing
intervals - since this paper has a lot of innovations and material
already. But if necessary, we can include some of the mathematical
estimates as propositions - with the proofs provided in the appendix
or in the related paper. Other analytical work can be done regarding
the characteristics of the conceptual parity portfolio. The use of
various distance functions to get risk return profiles from different
weights of the sub-funds and vice versa can also be a fruitful endeavor.
Estimation of the parity line and how often to enforce the changed
parameters onto all users is a costly affair and hence a lot of further
research can be done to find out the cost-benefits and related optimality
conditions.

Any investment fund, whether on blockchain or outside, exists to generate
excess returns for its investors. Several excellent investment strategies
have been utilized in traditional investment funds to obtain higher
returns. To implement similar investment ideas on blockchain would
require considering each strategy as an overlay within a larger fund
(Mulvey, Ural \& Zhang 2007; Mohanty, Mohanty \& Ivanof 2021). As
time goes on, several overlay strategies can be added to the basic
fund so that we can benefit from any potential opportunities that
open up. 

It will be helpful - almost important - to understand newer blockchain
protocols and add them to our investment funds, which would render
them as highly diversified cross chain collectors of wealth appreciation
venues.   In addition - on each protocol - we need to continuously
evaluate new projects and - if they pass certain due diligence standards
- include them in our portfolio. All of this follows from standard
security analysis procedures from traditional finance. A team of researchers
and investment specialists need to continually scour the blockchain
investment landscape to identify ways to generate profits. 

It is helpful to consider exposure to derivative instruments and physical
assets such as gold, real estate, and so on, as and when they become
available. The implication of this is that investors in these funds
will be getting better returns and lower risks, as the funds seek
out varied sources of risk adjusted returns. As more sophisticated
derivatives start to become available as decentralized securities,
incorporating them could be challenging yet rewarding. The development
of new networks, and derivative providers within networks, will enable
the use of options as a hedging mechanism (Hull 2003). This will help
to protect from market crashes and to reduce the portfolio volatility.
Also, derivative strategies combined with rigorous risk management
can help to gain additional returns (Huberts 2004; Madan \& Sharaiha
2015). Numerous other areas for improvement, in terms of portfolio
weight calculations, rebalancing, trade execution risk management
and so on, are listed in Kashyap (2022). 

\subsection{\label{subsec:Beating-Benchmarks-with}Beating Benchmarks with The
Perfect Blend of Contrasting Correlations}

All major environmental economic risks can be reduced and classified
into one of the boxes in Figure (\ref{fig:Classification-of-Environmental}).
Each box represents one scenario in terms of whether expectations
regarding two key drivers, economic growth and inflation, turn out
to be lower or higher than expected. The trick is to distribute assets
so that whatever box we find ourselves in in the future we will maximize
returns, at the very least to perform better than cash. To do this,
we assign an equal 25\% risk allotment to each box since, after all,
we cannot predict the future and hence it is hard to know which of
the four categories we will end up in. This sameness in terms of the
treatment of risk gives the name: Risk Parity.  A detailed discussion
of all the components necessary to bring a robust risk management
approach to DeFi - including the design of the overall framework and
associated algorithms - is given in Kashyap (2022).

We will assign assets to each box - in Figure (\ref{fig:Classification-of-Environmental})
- so that the amount of risk is the same for each box, whatever the
returns from the box corresponding to the relevant market conditions.
For instance in Figure (\ref{fig:Portfolio-Risk-Concentrations}),
the dollar amount granted to Alpha assets will be lower than the appropriation
to Gamma since Gamma assets are less risky. Investors can choose a
desired level of risk - or equivalent to a target return - and the
distribution of their funds to each of the boxes, to create a customized
portfolio, will be done by the investment machinery we outline in
this paper. This personalized portfolio - issued as a Parity index
token on blockchain - will supply returns based upon the specific
risk appetite of the investor, no matter the environment.

Though less likely, if growth and inflation meet expectations, that
is a good problem to have. In this case, all four quadrants will perform
satisfactorily and the combined portfolio will still meet the stated
objectives.

\subsection{\label{subsec:Crypto-Environmental-Nuances}Crypto Environmental
Nuances}

While economic growth can be considered a basic environmental factor
for traditional assets, growth in the crypto universe can also be
linked to two parallel dimensions related to: trustworthiness and
benefits of transacting with a particular cryptocurrency as the medium
of exchange. Trustworthiness is heavily dependent on the extent of
computing power (or nodes)  deployed on a particular platform to verify
transactions. The transaction benefits are proportional to size of
the network or the number of active users (wallets) and how well accepted
the particular currency. The allocation to the four boxes will have
to balance out any fluctuations in the trends of these two key crypto-movers
over time.

In conventional investing techniques, selection of assets to balance
risks depends on their mutual correlations. Practical experience from
managing portfolios shows that correlations are inherently unstable.
Hence, as our models evolve, our approach to measuring alikeness will
use metrics that capture higher dimensions of similarity (and variability)
between assets by looking at attributes well beyond risk, return and
correlations.

The Parity indexes are composites of the Alpha, Beta and Gamma index
classes, weighted so that the risk from each bucket is limited to
25\% of the total. The final result is a individualized matrix of
aggregated cross-chain tokens and yield farming strategies (Figure
\ref{fig:Risk-Parity-Highlights}). It is algorithmically controlled
by smart contracts, regularly rebalancing to keep optimal coverage
and maintain the 25\% ratio. It is possible that most investors might
chose high risk - or low risk - as their preference. In this case,
it would be difficult to allot equal weight to all the four environmental
boxes in Figure (\ref{fig:Classification-of-Environmental}) unless
we limit the maximum risk - and hence the return - that can be obtained
from the overall fund.

The online investing process captures the member’s individual risk
appetite (directly or from their DeFi asset manager) and uses it to
weight the cross-chain Alpha, Beta and Gamma assets appropriately.
A Parity index token is then minted. It is personalized to the member’s
needs, representing the best risk-adjusted De-Fi investment in the
market. The Parity index token can itself then be staked, exchanged
or traded without any limit for further gain.

\subsection{\label{subsec:Fortune-Favors-The}Fortune Favors The Prepared Portfolio}

Current views on investing hold many reservations regarding the use
of Leverage. While excessive leverage can result in losses, a modest
amount of leverage, when applied to a small portion of the overall
portfolio, has many benefits. In conventional portfolios that are
chasing a certain level of return, the majority of the holdings will
be concentrated in assets that are closer to the benchmark in terms
of their returns. The risk characteristics of these assets are derived
from similar sources with the end result being that the overall portfolio
will not be well diversified.

A moderate amount of leverage is included in the Parity index class
due to the leverage within the Alpha index class. Our Alpha suite
will have assets that aim for spectacular returns, but could have
risks derived from similar fundamental properties. Hence, to mitigate
the concentration risk, we will mix in assets possessing distinctive
risk features. These non-identical assets could have lower return
profiles, but when they are levered up and combined into the portfolio,
they provide an excellent source of diverging movements that offset
the overall risk.

This amplification of the returns from chosen assets results in an
investment profile that can have lower risk than those of conventional
portfolios, which concentrate their holding around assets with similar
risk return configurations, providing the same level of returns. The
use of leverage in the DeFi world can be seamless due to the high
degree of automation. Smart contracts will monitor the level of leverage
and automatically trigger events to offset extreme and adverse shifts.

\subsection{\label{subsec:Risk-Management-Is}Risk Management Is But Taming The
Volatility Skew}

The Parity portfolios carry a risk level adjusted to the needs of
the member. Each Parity index token is personalized and minted with
the appropriate weightings of Alpha, Beta and Gamma index class investments.

The most crucial aspect of asset management is to have a rigorous
process to calculate risk.  Risk is defined and understood in several
ways. It is also important to differentiate between risk and uncertainty,
which we will explore in subsequent articles. We consider risk as
an indicator of the extent of variation in the returns of assets.
One of the commonly used metrics as a gauge for risk is the volatility.
A particularly popular approach to volatility is computing the standard
deviation of returns. 

The main issue with minimizing risk using volatility is that volatility
is an unobserved variable. This means volatility can only be estimated
over historical periods or forecasted over future times since it cannot
be directly seen. To mitigate this issue, our models will calculate
asset weights based on a range of volatilities rather than trying
to pin down one exact volatility number. The additional benefit from
this approach will be reduced rebalancing efforts, which will decrease
the corresponding blockchain gas costs.

A secondary drawback of volatility is that it fails to capture the
effect of changes in the direction of any variable (Kashyap 2021).
That is both rising or falling prices are treated equally and hence
if we are long a security, upward movements in security prices could
get penalized as excess volatility in portfolio management decisions.
As an alternative, we devise a metric to measure the path taken by
the variable to arrive at the current value, over the last few time
periods. To articulate the intuition, any security that had steady
upward growth, over a particular time period, is ranked higher than
a security with similar growth, over the same time period but with
more ups and downs in the path, or changes in direction. Assets are
penalized for having ups and downs in their price process. But unidirectional
upward jumps suffer no such penalty.

\subsection{\label{subsec:Sharpening-the-Sharpe}Sharpening the Sharpe Ratio}

A key metric used to assess the performance of portfolios is known
as the Sharpe Ratio (SR: Sharpe 1966; 1994; End-note \ref{enu:Sharpe-Ratio}).
The ratio is the average return (historical or expected) earned on
a portfolio in excess of the risk-free rate per unit of volatility
or total risk. The widespread use of this metric can be explained
due to its simplicity and the powerful insights it provides to investors
in understanding how their goals related to risk and return are being
met. 

An important limitation of the SR is that averages and standard deviations
are not enough to completely understand the randomness inherent in
the returns generated by any asset. While this limitation is severe
enough in traditional portfolios, it is all the more crucial in DeFi.
Given the greater volatilities of crypto-assets there are well justified
reasons to develop complementary techniques to measure portfolio performance.
As a refinement of the SR, we will pay close attention to additional
statistical properties called higher moments, also known as Skew and
Kurtosis, which show whether there is a greater tendency for either
positive or negative movement of returns and whether extreme events
are more likely when compared to a normal distribution (Anson, Ho
\& Silberstein 2007; Grigoletto \& Lisi 2011; Theodossiou \& Savva
2016; End-notes \ref{enu:In-probability-theory-Skew}; \ref{enu:In-probability-theory-Kurtosis}).

\subsection{\label{subsec:Investor-Experience-on}Investor Experience on Blockchain}

The user experience has been designed such that investors can tailor
their wealth allocations to their preferred risk appetites (Figure
\ref{fig:Deposit-Screen-Parity}).  Users can select either their
preferred level of risk or return. Investors can also directly decide
how much of their wealth they want to allocate to the three funds:
Alpha, Beta and Gamma. Once either of the three routes are selected,
(Risk or Return or Weights of Alpha, Beta and Gamma), the other parameters
are automatically calculated and saved into an NFT, which the investor
will hold for the life of the investment. 

The preferences can be changed anytime by investors and this will
trigger a readjustment of their sub fund allocations. Our investment
specialists will also monitor the markets and, as the relationship
between risk and return changes, will fine tune the parameters of
the parity line and update the parameters of the portfolio allocations.
This will guarantee that all investors are getting the best possible
outcomes customized for their desired wealth management objectives.

The challenge will be to ensure that the user interactions are intuitive,
and yet their preferences are precisely captured in the investment
decisions. This has been accomplished by letting someone who does
not wish to be bothered with all the settings, or a novice investor,
have the simple option of choosing the default, depositing his funds
and forgetting about everything else. If this is the option chosen,
the portfolio will select a low level of risk and calculate the other
parameters accordingly. Advanced users can choose their risk level
or their expected return, or the weights they want to assign to each
of the sub funds. The other parameters will be automatically calculated
based on the discussion in Section (\ref{sec:Risk-Parity:-Combining}).

The outcome of these innovations is an investment machinery that responds
to investor preferences and adapts to changing market conditions. 
All of this can be viewed as a natural progression from the conventional
efficient frontier to a progressive final frontier, which will continue
to transcend itself.

\subsection{\label{subsec:Parity-Fees}Parity Fees}

The following points capture the main fee structure for Parity investors.
Levying some kind of fees is the way in which investment funds - for
example: hedge funds, mutual funds, ETFs - cover their costs and make
profits (Golec 1996; Elton, Gruber \& Blake 2003; Guasoni \& Obłój
2016; Khorana et al., 2009; End-notes \ref{enu:A-hedge-fund}; \ref{enu:Open-end-mutual-funds};
\ref{enu:Mutual-Funds-Fees}; \ref{enu:An-exchange-traded-fund}).
In the blockchain environment the investment fund has additional costs
related to performing transactions on the network. 

We wish to highlight that there is a separate component that will
share a percentage of the profits - levied from the fees - with the
investors to ensure wealth management firms stay true to the spirit
of decentralization (Singh \& Kim 2019; Wang et al., 2019; Santana
\& Albareda 2022; Barbereau \& Bodó 2023; Rikken et al., 2023; Kashyap
2022; 2023; End-note \ref{enu:A-decentralized-autonomous-DAO}). All
the fee rates, time and other variables - the sample values shown
below are based on what we have used in various investment scenarios
- should be configurable unless otherwise specified.
\begin{enumerate}
\item \label{enu:Parity-investors-will}Parity investors can be charged
a deposit fees as a percentage of the amount being deposited.
\item \label{enu:Whenever-the-user}Whenever the user changes their preferences,
and updates their corresponding NFT, a percentage fees proportional
to the total deposited amount can also be levied. The user preferences
change fees should be capped at a certain maximum amount.
\item Whenever a new amount is deposited, the minimum of the fee component
for deposits or the fee amount for user preference changes can be
levied (Points \ref{enu:Parity-investors-will}; \ref{enu:Whenever-the-user}).
It is possible to separate the deposit action and the user preference
action in which case the deposit fees and user preference fees can
be set to be different depending on the network gas fees, fund expenses
and other factors (Zarir et a., 2021; Donmez \& Karaivanov 2022; Laurent
et al., 2022; Kashyap 2023; End-note \ref{enu:Gas-Fee}).
\item The withdraw fees or the redemption penalty for Parity can be in three
tiers for three different holding time periods. The redemption penalty
rates and the corresponding time periods should be configurable. We
provide the below only as a numerical example. Since these values
have to be configured depending on various considerations.
\begin{enumerate}
\item 5\% early redemption penalty for investing less than 30 days.
\item 4\% early redemption penalty for investing between 30 to 60 days.
\item 3\% early redemption penalty for investing between 60 to 90 days.
\item No redemption penalty for investing more than 90 days.
\end{enumerate}
\item There is no lockout period - a minimum investment period - for Parity
unlike Alpha, Beta and Gamma. There should be no lockout period for
Parity when it invests in Alpha, Beta and Gamma. The lockout periods
for Alpha, Beta and Gamma should be comparable to the no redemption
penalty time horizon for Parity.
\item Parity will be charged different deposit fee rates for investing into
Alpha, Beta and Gamma as compared to other investors who might invest
directly into Alpha, Beta and Gamma. This will allow investors who
invest into Parity rather than investing directly into Alpha, Beta
and Gamma to benefit from lower total fees. For example right now,
the Alpha deposit rate is 0.5\%. We would then have a different Parity
Alpha deposit rate, such as 0.3\%. This 0.3\% is an example only and
a provision has to made to so that it can be changed from a suitable
GUI by the administrators of the investment platform. Setting the
actual rates rather than setting a discount rate provides more flexibility
and the rate can also be made zero at times, which is impossible with
a discount rate.
\item The fees that Parity gets charged when investing into Alpha, Beta
and Gamma will then become a cost for Parity accordingly. Any benefits
from netting within Parity then adds to the profits generated by Parity
and will be sent to the corresponding treasury, which will be a separate
wallet or smart contract.
\end{enumerate}

\section{\label{sec:Conclusion}Conclusion: The Blockchain Revolution For
Wealth Gains Without The Pains}

We have created several novel techniques to bring many mechanisms
that have worked well in the traditional financial wealth management
arena to the blockchain space. We have given detailed algorithmic
steps to help with technical implementation of the methodologies we
have developed. 

We have discussed how to overcome several issues with the efficient
frontier. We have pioneered several solutions designed not only to
bring risk parity to the blockchain environment, but also provided
ways in which the efficient set of assets can continue to evolve -
becoming the final frontier of wealth management. 

Risk Parity investing for cross chain DeFi will be a new development
in decentralized investing. Members receive a personalized balanced
cross chain portfolio constructed algorithmically according to the
Risk Parity strategy. It includes varying measures of three funds
- index exposures to Alpha, Beta and Gamma, which have varying risk
profiles - risk-adjusted to the requirements of each investor. It
is engineered to provide long term high yields protected from market
downturns and environmental shocks. 

The innovations we have designed to realize risk parity on blockchain
also address techniques to combine the efficient frontier with equal
risk allocation to the assets within a portfolio. We have termed these
set of concepts: ``conceptual parity'' and detailed elaborate methods
to simplify its implementation in decentralized technology using the
notion of what we have descried as the ``Parity Line''.

We have given detailed mathematical formulations, and technical pointers,
to be able to implement the mechanisms we have created as blockchain
smart contracts. Our approach overcomes numerous blockchain bottlenecks
and takes the power of smart contracts much further. The numerical
illustrations we have given depict various scenarios regarding risk
return combinations - of Alpha, Beta and Gamma - to obtain parity
portfolios on blockchain.

The fundamental doctrine of DeFi is equal access and rights to all
participants coupled with complete freedom from any central controlling
authority. Bringing this notion of equality to investing is the core
impetus behind initiating the risk parity movement in DeFi. This translates
to providing sophisticated asset management - with risk mitigation
- techniques, insights and wealth appreciation schemes to all investors.
To become trailblazer 's on this voyage, is doing the needful - and
implementing some of the ideas discussed in this paper - so that accumulating
a Fortune is just a mouse click away from everyone. 

Mutual funds, hedge funds and other traditional investments have had
a significant impact in the lives of many individuals across the world.
Despite their popularity, there are many concerns regarding their
transparency and ease of access for everyone. Blockchain technology
is extremely well suited to mitigate, if not entirely eliminate, those
concerns. Decentralized ledger concepts and the technological advancements
over the last several decades allow us to combine the best features
of both traditional investment vehicles and the universal accessibility
of blockchain. 

\textbf{\textit{Equal Wealth Generation Opportunities are Finally
Here.}} 

\section{\label{sec:References}References}
\begin{itemize}
\item Acharya, V. V., \& Richardson, M. P. (Eds.). (2009). Restoring financial
stability: how to repair a failed system (Vol. 542). John Wiley \&
Sons.
\item Ambachtsheer, K. P. (1987). Pension fund asset allocation: In defense
of a 60/40 equity/debt asset mix. Financial Analysts Journal, 43(5),
14-24.
\item Ammann, M., Coqueret, G., \& Schade, J. P. (2016). Characteristics-based
portfolio choice with leverage constraints. Journal of Banking \&
Finance, 70, 23-37.
\item Anderson, R. M., Bianchi, S. W., \& Goldberg, L. R. (2012). Will my
risk parity strategy outperform?. Financial Analysts Journal, 68(6),
75-93.
\item Anson, M., Ho, H., \& Silberstein, K. (2007). Building a hedge fund
portfolio with kurtosis and skewness. The Journal of Alternative Investments,
10(1), 25.
\item Ante, L., Fiedler, I., \& Strehle, E. (2021). The influence of stablecoin
issuances on cryptocurrency markets. Finance Research Letters, 41,
101867.
\item Asness, C. S., Frazzini, A., \& Pedersen, L. H. (2012). Leverage aversion
and risk parity. Financial Analysts Journal, 68(1), 47-59.
\item Ballestero, E. (2005). Mean‐semivariance efficient frontier: a downside
risk model for portfolio selection. Applied Mathematical Finance,
12(1), 1-15.
\item Baldry, C. (1999). Space-the final frontier. Sociology, 33(3), 535-553.
\item Bai, X., Scheinberg, K., \& Tutuncu, R. (2016). Least-squares approach
to risk parity in portfolio selection. Quantitative Finance, 16(3),
357-376.
\item Bao, H., \& Roubaud, D. (2022). Non-fungible token: A systematic review
and research agenda. Journal of Risk and Financial Management, 15(5),
215.
\item Barbereau, T., \& Bodó, B. (2023). Beyond financial regulation of
crypto-asset wallet software: In search of secondary liability. Computer
Law \& Security Review, 49, 105829.
\item Barberis, N., \& Thaler, R. (2003). A survey of behavioral finance.
Handbook of the Economics of Finance, 1, 1053-1128.
\item Bellini, Fabio, Francesco Cesarone, Christian Colombo, and Fabio Tardella.
\textquotedbl Risk parity with expectiles.\textquotedbl{} European
journal of operational research 291, no. 3 (2021): 1149-1163.
\item Bender, J., Briand, R., Nielsen, F., \& Stefek, D. (2010). Portfolio
of risk premia: A new approach to diversification. The Journal of
Portfolio Management, 36(2), 17-25.
\item Bertsekas, D. (2009). Convex optimization theory (Vol. 1). Athena
Scientific. 
\item Bertsekas, D. (2015). Convex optimization algorithms. Athena Scientific.
\item Best, M. J., \& Hlouskova, J. (2000). The efficient frontier for bounded
assets. Mathematical methods of operations research, 52, 195-212.
\item Bhansali, V. (2011). Beyond risk parity. The Journal of Investing,
20(1), 137-147. 
\item Bi, J., Jin, H., \& Meng, Q. (2018). Behavioral mean-variance portfolio
selection. European Journal of Operational Research, 271(2), 644-663.
\item Black, F., \& Litterman, R. (1992). Global portfolio optimization.
Financial analysts journal, 48(5), 28-43.
\item Bolstad, W. M., \& Curran, J. M. (2016). Introduction to Bayesian
statistics. John Wiley \& Sons.
\item Bond, S. A., \& Satchell, S. E. (2002). Statistical properties of
the sample semi-variance. Applied Mathematical Finance, 9(4), 219-239.
\item Boyd, S. P., \& Vandenberghe, L. (2004). Convex optimization. Cambridge
university press. 
\item Braga, M. D., Nava, C. R., \& Zoia, M. G. (2023). Kurtosis-based risk
parity: methodology and portfolio effects. Quantitative Finance, 23(3),
453-469.
\item Britten‐Jones, M. (1999). The sampling error in estimates of mean‐variance
efficient portfolio weights. The Journal of Finance, 54(2), 655-671.
\item Brode, D., \& Brode, S. T. (Eds.). (2015). The Star Trek Universe:
Franchising the Final Frontier. Rowman \& Littlefield.
\item Buttell, A. E. (2010). Harry M. Markowitz on modern portfolio theory,
the efficient frontier, and His Life's Work. Journal of Financial
Planning, 23(5), 18.
\item Calvo, C., Ivorra, C., \& Liern, V. (2012). On the computation of
the efficient frontier of the portfolio selection problem. Journal
of Applied Mathematics, 2012.
\item Calvo, C., Ivorra, C., \& Liern, V. (2016). Fuzzy portfolio selection
with non-financial goals: exploring the efficient frontier. Annals
of Operations Research, 245(1-2), 31-46.
\item Camerer, C. (1999). Behavioral economics: Reunifying psychology and
economics. Proceedings of the National Academy of Sciences, 96(19),
10575-10577.
\item Chaves, D., Hsu, J., Li, F., \& Shakernia, O. (2011). Risk parity
portfolio vs. other asset allocation heuristic portfolios. Journal
of Investing, 20(1), 108. 
\item Chen, L., He, S., \& Zhang, S. (2011). Tight bounds for some risk
measures, with applications to robust portfolio selection. Operations
Research, 59(4), 847-865.
\item Christodoulakis, G. A. (2002). Bayesian optimal portfolio selection:
the black-litterman approach. Unpublished paper.
\item Ciotti, M., Ciccozzi, M., Terrinoni, A., Jiang, W. C., Wang, C. B.,
\& Bernardini, S. (2020). The COVID-19 pandemic. Critical reviews
in clinical laboratory sciences, 57(6), 365-388.
\item Clarke, R., De Silva, H., \& Thorley, S. (2013). Risk parity, maximum
diversification, and minimum variance: An analytic perspective. The
Journal of Portfolio Management, 39(3), 39-53. 
\item Cochrane, J. (2009). Asset pricing: Revised edition. Princeton university
press. 
\item Cohen, K. J., \& Pogue, J. A. (1967). An empirical evaluation of alternative
portfolio-selection models. The Journal of Business, 40(2), 166-193.
\item Constantinides, G. M., \& Malliaris, A. G. (1995). Portfolio theory.
Handbooks in operations research and management science, 9, 1-30.
\item Cook, S. (2000). The P versus NP problem. Clay Mathematics Institute,
2. 
\item Dannen, C. (2017). Introducing Ethereum and solidity (Vol. 1, pp.
159-160). Berkeley: Apress.
\item Da Silva, A. S., Lee, W., \& Pornrojnangkool, B. (2009). The Black–Litterman
model for active portfolio management. The Journal of Portfolio Management,
35(2), 61-70.
\item De Souza, C., \& Smirnov, M. (2004). Dynamic leverage. Journal of
Portfolio Management, 31, 25-39.
\item Donmez, A., \& Karaivanov, A. (2022). Transaction fee economics in
the Ethereum blockchain. Economic Inquiry, 60(1), 265-292.
\item Drobetz, W. (2001). How to avoid the pitfalls in portfolio optimization?
Putting the Black-Litterman approach at work. Financial Markets and
Portfolio Management, 15(1), 59. 
\item Dromey, R. G. (1982). How to Solve it by Computer. Prentice-Hall,
Inc..
\item Dubois, D., \& Prade, H. (2001). Possibility theory, probability theory
and multiple-valued logics: A clarification. Annals of mathematics
and Artificial Intelligence, 32, 35-66.
\item Dubois, D. (2006). Possibility theory and statistical reasoning. Computational
statistics \& data analysis, 51(1), 47-69.
\item Duffie, D., \& Pan, J. (1997). An overview of value at risk. Journal
of derivatives, 4(3), 7-49.
\item Elton, E. J., \& Gruber, M. J. (1973). Estimating the dependence structure
of share prices-{}-implications for portfolio selection. The Journal
of Finance, 28(5), 1203-1232.
\item Elton, E. J., Gruber, M. J., \& Padberg, M. W. (1976). Simple criteria
for optimal portfolio selection. The Journal of finance, 31(5), 1341-1357.
\item Elton, E. J., Gruber, M. J., \& Padberg, M. W. (1977a). Simple criteria
for optimal portfolio selection with upper bounds. Operations Research,
25(6), 952-967.
\item Elton, E. J., Gruber, M. J., \& Padberg, M. W. (1977b). Simple rules
for optimal portfolio selection: the multi group case. Journal of
Financial and Quantitative Analysis, 12(3), 329-345.
\item Elton, E. J., Gruber, M. J., \& Padberg, M. W. (1978). Simple criteria
for optimal portfolio selection: tracing out the efficient frontier.
The Journal of Finance, 33(1), 296-302.
\item Elton, E. J., Gruber, M. J., \& Padberg, M. W. (1979). Simple criteria
for optimal portfolio selection: the multi-index case. Portfolio theory,
25, 7-19.
\item Elton, E. J., \& Gruber, M. J. (1997). Modern portfolio theory, 1950
to date. Journal of banking \& finance, 21(11-12), 1743-1759. 
\item Elton, E. J., Gruber, M. J., \& Blake, C. R. (2003). Incentive fees
and mutual funds. The Journal of Finance, 58(2), 779-804. 
\item Elton, E. J., Gruber, M. J., Brown, S. J., \& Goetzmann, W. N. (2009).
Modern portfolio theory and investment analysis. John Wiley \& Sons.
\item Fan, X., Niu, B., \& Liu, Z. (2022). Scalable blockchain storage systems:
research progress and models. Computing, 104(6), 1497-1524. 
\item Fabozzi, F. J., Gupta, F., \& Markowitz, H. M. (2002). The legacy
of modern portfolio theory. The journal of investing, 11(3), 7-22. 
\item Fabozzi, F. A., Simonian, J., \& Fabozzi, F. J. (2021). Risk parity:
The democratization of risk in asset allocation. The Journal of Portfolio
Management, 47(5), 41-50.
\item Fama, E. F., \& MacBeth, J. D. (1973). Risk, return, and equilibrium:
Empirical tests. Journal of political economy, 81(3), 607-636. 
\item Fama, E. F., \& French, K. R. (2004). The capital asset pricing model:
Theory and evidence. Journal of economic perspectives, 18(3), 25-46.
\item Fernández, A., \& Gómez, S. (2007). Portfolio selection using neural
networks. Computers \& operations research, 34(4), 1177-1191.
\item Fisher, G. S., Maymin, P. Z., \& Maymin, Z. G. (2015). Risk parity
optimality. The Journal of Portfolio Management, 41(2), 42-56.
\item Fortnow, L. (2009). The status of the P versus NP problem. Communications
of the ACM, 52(9), 78-86. 
\item Frahm, G., \& Wiechers, C. (2011). On the diversification of portfolios
of risky assets (No. 2/11). Discussion Papers in Statistics and Econometrics.
\item Frost, P. A., \& Savarino, J. E. (1986). An empirical Bayes approach
to efficient portfolio selection. Journal of Financial and Quantitative
Analysis, 21(3), 293-305.
\item Garas, A., Argyrakis, P., \& Havlin, S. (2008). The structural role
of weak and strong links in a financial market network. The European
Physical Journal B, 63, 265-271.
\item Garvey, R., \& Murphy, A. (2005). Entry, exit and trading profits:
A look at the trading strategies of a proprietary trading team. Journal
of Empirical Finance, 12(5), 629-649.
\item Gelman, A., \& Shalizi, C. R. (2013). Philosophy and the practice
of Bayesian statistics. British Journal of Mathematical and Statistical
Psychology, 66(1), 8-38. 
\item Goldfarb, D., \& Iyengar, G. (2003). Robust portfolio selection problems.
Mathematics of operations research, 28(1), 1-38.
\item Golec, J. H. (1996). The effects of mutual fund managers' characteristics
on their portfolio performance, risk and fees. Financial Services
Review, 5(2), 133-147. 
\item Green, R. C., \& Hollifield, B. (1992). When will mean‐variance efficient
portfolios be well diversified?. The Journal of Finance, 47(5), 1785-1809.
\item Grigoletto, M., \& Lisi, F. (2011). Practical implications of higher
moments in risk management. Statistical Methods \& Applications, 20,
487-506.
\item Grobys, K., Junttila, J., Kolari, J. W., \& Sapkota, N. (2021). On
the stability of stablecoins. Journal of Empirical Finance, 64, 207-223.
\item Gunjan, A., \& Bhattacharyya, S. (2023). A brief review of portfolio
optimization techniques. Artificial Intelligence Review, 56(5), 3847-3886.
\item Guasoni, P., \& Obłój, J. (2016). The incentives of hedge fund fees
and high‐water marks. Mathematical Finance, 26(2), 269-295. 
\item Harvey, C. R., Hoyle, E., Korgaonkar, R., Rattray, S., Sargaison,
M., \& Van Hemert, O. (2018). The impact of volatility targeting.
The Journal of Portfolio Management, 45(1), 14-33.
\item Harvey, C. R., Ramachandran, A., \& Santoro, J. (2021). DeFi and the
Future of Finance. John Wiley \& Sons. 
\item He, G., \& Litterman, R. (2002). The intuition behind Black-Litterman
model portfolios. Available at SSRN 334304.
\item Hirshleifer, D. (2015). Behavioral finance. Annual Review of Financial
Economics, 7, 133-159.
\item Hoang, L. T., \& Baur, D. G. (2021). How stable are stablecoins?.
The European Journal of Finance, 1-17.
\item Huang, X. (2006). Fuzzy chance-constrained portfolio selection. Applied
mathematics and computation, 177(2), 500-507.
\item Huang, X. (2007a). Portfolio selection with fuzzy returns. Journal
of Intelligent \& Fuzzy Systems, 18(4), 383-390.
\item Huang, X. (2007b). Two new models for portfolio selection with stochastic
returns taking fuzzy information. European Journal of Operational
Research, 180(1), 396-405.
\item Huang, X. (2008a). Risk curve and fuzzy portfolio selection. Computers
\& mathematics with applications, 55(6), 1102-1112.
\item Huang, X. (2008b). Portfolio selection with a new definition of risk.
European Journal of operational research, 186(1), 351-357.
\item Huberts, L. C. (2004). Overlay Speak. The Journal of Investing, 13(3),
22-30.
\item Hull, J. C. (2003). Options futures and other derivatives. Pearson
Education India.
\item Hursh, S. R. (1984). Behavioral economics. Journal of the experimental
analysis of behavior, 42(3), 435-452.
\item Jacobs, B. I., \& Levy, K. N. (2012). Leverage aversion and portfolio
optimality. Financial Analysts Journal, 68(5), 89-94.
\item Jolliffe, I. T., \& Cadima, J. (2016). Principal component analysis:
a review and recent developments. Philosophical transactions of the
royal society A: Mathematical, Physical and Engineering Sciences,
374(2065), 20150202.
\item Jorion, P. (1996). Risk2: Measuring the risk in value at risk. Financial
analysts journal, 52(6), 47-56. 
\item Jorion, P. (2007). Value at risk: the new benchmark for managing financial
risk. McGraw-Hill.
\item Kalymon, B. A. (1971). Estimation risk in the portfolio selection
model. Journal of Financial and Quantitative Analysis, 6(1), 559-582.
\item Kashyap, R. (2021). Behavioural Bias Benefits: Beating Benchmarks
By Bundling Bouncy Baskets. Accounting \& Finance, 61(3), 4885-4921.
\item Kashyap, R. (2022). Bringing Risk Parity To The DeFi Party: A Complete
Solution To The Crypto Asset Management Conundrum. Working Paper.
\item Kashyap, R. (2023). The Democratization of Wealth Management: Hedged
Mutual Fund Blockchain Protocol. Available at SSRN 4543339.
\item Khorana, A., Servaes, H., \& Tufano, P. (2009). Mutual fund fees around
the world. The Review of Financial Studies, 22(3), 1279-1310.
\item Kim, J. H., Lee, Y., Kim, W. C., \& Fabozzi, F. J. (2021). Mean–variance
optimization for asset allocation. The Journal of Portfolio Management,
47(5), 24-40.
\item Kocheturov, A., Batsyn, M., \& Pardalos, P. M. (2014). Dynamics of
cluster structures in a financial market network. Physica A: Statistical
Mechanics and its Applications, 413, 523-533.
\item Konstantinov, G. S. (2021). What Portfolio in Europe Makes Sense?.
The Journal of Portfolio Management.
\item Kroll, Y., Levy, H., \& Rapoport, A. (1988). Experimental tests of
the mean-variance model for portfolio selection. Organizational Behavior
and Human Decision Processes, 42(3), 388-410.
\item Krugman, P. (1998). Space: the final frontier. Journal of Economic
perspectives, 12(2), 161-174.
\item Kugler, L. (2021). Non-fungible tokens and the future of art. Communications
of the ACM, 64(9), 19-20.
\item Kurt Peker, Y., Rodriguez, X., Ericsson, J., Lee, S. J., \& Perez,
A. J. (2020). A cost analysis of internet of things sensor data storage
on blockchain via smart contracts. Electronics, 9(2), 244. 
\item Kwakernaak, H. (1978). Fuzzy random variables—I. Definitions and theorems.
Information sciences, 15(1), 1-29. 
\item Laurent, A., Brotcorne, L., \& Fortz, B. (2022). Transaction fees
optimization in the Ethereum blockchain. Blockchain: Research and
Applications, 3(3), 100074.
\item Ledoit, O., \& Wolf, M. (2003). Improved estimation of the covariance
matrix of stock returns with an application to portfolio selection.
Journal of empirical finance, 10(5), 603-621.
\item Leland, H. E. (1999). Beyond mean–variance: Performance measurement
in a nonsymmetrical world (corrected). Financial analysts journal,
55(1), 27-36.
\item Li, B., \& Hoi, S. C. (2014). Online portfolio selection: A survey.
ACM Computing Surveys (CSUR), 46(3), 1-36.
\item Li, B., Hoi, S. C., Sahoo, D., \& Liu, Z. Y. (2015). Moving average
reversion strategy for on-line portfolio selection. Artificial Intelligence,
222, 104-123.
\item Lindley, D. V. (1972). Bayesian statistics: A review. Society for
industrial and applied mathematics. 
\item Liu, Y. J., \& Zhang, W. G. (2013). Fuzzy portfolio optimization model
under real constraints. Insurance: Mathematics and Economics, 53(3),
704-711.
\item Lohre, H., Neugebauer, U., \& Zimmer, C. (2012). Diversified risk
parity strategies for equity portfolio selection. The Journal of Investing,
21(3), 111-128. 
\item Lohre, H., Opfer, H., \& Orszag, G. (2014). Diversifying risk parity.
Journal of Risk, 16(5), 53-79.
\item Loney, S. L. (1897). The elements of coordinate geometry. Macmillan
and Company.
\item Lyons, R. K., \& Viswanath-Natraj, G. (2023). What keeps stablecoins
stable?. Journal of International Money and Finance, 131, 102777.
\item Madan, D. B., \& Sharaiha, Y. M. (2015). Option overlay strategies.
Quantitative Finance, 15(7), 1175-1190. 
\item Maillard, S., Roncalli, T., \& Teïletche, J. (2010). The properties
of equally weighted risk contribution portfolios. The Journal of Portfolio
Management, 36(4), 60-70.
\item Mao, J. C., \& Särndal, C. E. (1966). A decision theory approach to
portfolio selection. Management Science, 12(8), B-323.
\item Markowitz, H. M. (1952). Portfolio selection. Journal of Finance 7
(1), 77-91.
\item Markowitz, H. M. (1959). Portfolio Selection: Efficient Diversification
of Investments. Cowles Foundation Monograph 16, New York: John Wiley
\& Sons.
\item Martinez, W. L. (2011). Graphical user interfaces. Wiley Interdisciplinary
Reviews: Computational Statistics, 3(2), 119-133.
\item Meucci, A. (2009). Managing diversification. Risk, 74-79.
\item McNeil, A. J., Frey, R., \& Embrechts, P. (2015). Quantitative risk
management: concepts, techniques and tools-revised edition. Princeton
university press.
\item Merton, R. C. (1972). An analytic derivation of the efficient portfolio
frontier. Journal of financial and quantitative analysis, 7(4), 1851-1872.
\item Michaud, R. O. (1989). The Markowitz optimization enigma: Is ‘optimized’optimal?.
Financial analysts journal, 45(1), 31-42.
\item Mohanty, S. S., Mohanty, O., \& Ivanof, M. (2021). Alpha enhancement
in global equity markets with ESG overlay on factor-based investment
strategies. Risk Management, 23(3), 213-242.
\item Momen, O., Esfahanipour, A., \& Seifi, A. (2019). Collective mental
accounting: an integrated behavioural portfolio selection model for
multiple mental accounts. Quantitative Finance, 19(2), 265-275. 
\item Momen, O., Esfahanipour, A., \& Seifi, A. (2020). A robust behavioral
portfolio selection: model with investor attitudes and biases. Operational
Research, 20, 427-446.
\item Monrat, A. A., Schelén, O., \& Andersson, K. (2019). A survey of blockchain
from the perspectives of applications, challenges, and opportunities.
IEEE Access, 7, 117134-117151. 
\item Mullainathan, S., \& Thaler, R. H. (2000). Behavioral Economics (No.
7948). National Bureau of Economic Research, Inc.
\item Mulvey, J. M., Ural, C., \& Zhang, Z. (2007). Improving performance
for long-term investors: wide diversification, leverage, and overlay
strategies. Quantitative Finance, 7(2), 175-187. 
\item Murty, K. G., \& Kabadi, S. N. (1985). Some NP-complete problems in
quadratic and nonlinear programming.
\item Nahmias, S. (1978). Fuzzy variables. Fuzzy sets and systems, 1(2),
97-110.
\item Nakamoto, S. (2008). Bitcoin: A peer-to-peer electronic cash system.
Decentralized Business Review, 21260. 
\item Namaki, A., Shirazi, A. H., Raei, R., \& Jafari, G. R. (2011). Network
analysis of a financial market based on genuine correlation and threshold
method. Physica A: Statistical Mechanics and its Applications, 390(21-22),
3835-3841.
\item Nantell, T. J., \& Price, B. (1979). An analytical comparison of variance
and semivariance capital market theories. Journal of Financial and
Quantitative Analysis, 14(2), 221-242.
\item Narayanan, A., \& Clark, J. (2017). Bitcoin’s academic pedigree. Communications
of the ACM, 60(12), 36-45.
\item Nuti, G., Mirghaemi, M., Treleaven, P., \& Yingsaeree, C. (2011).
Algorithmic trading. Computer, 44(11), 61-69. 
\item Palczewski, A., \& Palczewski, J. (2014). Theoretical and empirical
estimates of mean–variance portfolio sensitivity. European Journal
of Operational Research, 234(2), 402-410.
\item Pardo, R. (2011). The evaluation and optimization of trading strategies.
John Wiley \& Sons. 
\item Pástor, Ľ. (2000). Portfolio selection and asset pricing models. The
Journal of Finance, 55(1), 179-223.
\item Pearson, R., \& Davies, M. M. (2014). Star Trek and American Television.
Univ of California Press.
\item Penman, S. H. (1970). What Net Asset Value?-{}-An Extension of a Familiar
Debate. The Accounting Review, 45(2), 333-346.
\item Peralta, G., \& Zareei, A. (2016). A network approach to portfolio
selection. Journal of Empirical Finance, 38, 157-180.
\item Pierro, G. A., \& Tonelli, R. (2022, March). Can Solana be the solution
to the blockchain scalability problem?. In 2022 IEEE International
Conference on Software Analysis, Evolution and Reengineering (SANER)
(pp. 1219-1226). IEEE.
\item Piñeiro-Chousa, J., López-Cabarcos, M. Á., Sevic, A., \& González-López,
I. (2022). A preliminary assessment of the performance of DeFi cryptocurrencies
in relation to other financial assets, volatility, and user-generated
content. Technological Forecasting and Social Change, 181, 121740.
\item Pogue, G. A. (1970). An extension of the Markowitz portfolio selection
model to include variable transactions' costs, short sales, leverage
policies and taxes. The Journal of Finance, 25(5), 1005-1027.
\item Polson, N. G., \& Tew, B. V. (2000). Bayesian portfolio selection:
An empirical analysis of the S\&P 500 index 1970–1996. Journal of
Business \& Economic Statistics, 18(2), 164-173.
\item Prince, B. (2011). Risk Parity Is About Balance. Bridgewater Associates,
LP.
\item Qian, E. (2011). Risk parity and diversification. The Journal of Investing,
20(1), 119-127. 
\item Qian, E. (2013). Are risk-parity managers at risk parity?. The Journal
of Portfolio Management, 40(1), 20-26. 
\item Qian, J. (2015). An introduction to asset pricing theory.
\item Rapach, D., \& Zhou, G. (2013). Forecasting stock returns. In Handbook
of economic forecasting (Vol. 2, pp. 328-383). Elsevier.
\item Reinhart, C. M., \& Rogoff, K. S. (2008). This time is different:
A panoramic view of eight centuries of financial crises (No. w13882).
National Bureau of Economic Research.
\item Rikken, O., Janssen, M., \& Kwee, Z. (2023). The ins and outs of decentralized
autonomous organizations (DAOs) unraveling the definitions, characteristics,
and emerging developments of DAOs. Blockchain: Research and Applications,
100143.
\item Ritter, J. R. (2003). Behavioral finance. Pacific-Basin finance journal,
11(4), 429-437.
\item Roncalli, T. (2013). Introduction to risk parity and budgeting. CRC
Press.
\item Roncalli, T., \& Weisang, G. (2016). Risk parity portfolios with risk
factors. Quantitative Finance, 16(3), 377-388.
\item Rosenthal, J. S. (2006). First Look At Rigorous Probability Theory,
A. World Scientific Publishing Company.
\item Rubinstein, M. (2002). Markowitz's\textquotedbl{} portfolio selection\textquotedbl :
A fifty-year retrospective. The Journal of finance, 57(3), 1041-1045.
\item Rudin, W. (1953). Principles of mathematical analysis.
\item Sahni, S. (1974). Computationally related problems. SIAM Journal on
computing, 3(4), 262-279.
\item Santana, C., \& Albareda, L. (2022). Blockchain and the emergence
of Decentralized Autonomous Organizations (DAOs): An integrative model
and research agenda. Technological Forecasting and Social Change,
182, 121806.
\item Schlecht, L., Schneider, S., \& Buchwald, A. (2021). The prospective
value creation potential of Blockchain in business models: A delphi
study. Technological Forecasting and Social Change, 166, 120601.
\item Schuhmacher, F., Kohrs, H., \& Auer, B. R. (2021). Justifying mean-variance
portfolio selection when asset returns are skewed. Management Science,
67(12), 7812-7824.
\item Sharpe, W. F. (1966). Mutual fund performance. The Journal of business,
39(1), 119-138. 
\item Sharpe, W. F. (1994). The sharpe ratio. Journal of portfolio management,
21(1), 49-58. 
\item Shlens, J. (2014). A tutorial on principal component analysis. arXiv
preprint arXiv:1404.1100. 
\item Singh, M., \& Kim, S. (2019). Blockchain technology for decentralized
autonomous organizations. In Advances in computers (Vol. 115, pp.
115-140). Elsevier.
\item Sipser, M. (2006). Introduction to the Theory of Computation (Vol.
2). Boston: Thomson Course Technology. 
\item Tapscott, D., \& Tapscott, A. (2016). Blockchain revolution: how the
technology behind bitcoin is changing money, business, and the world.
Penguin. 
\item Thaler, R. H. (1999). The end of behavioral finance. Financial Analysts
Journal, 55(6), 12-17.
\item Thaler, R. H. (2016). Behavioral economics: Past, present, and future.
American economic review, 106(7), 1577-1600.
\item Thaler, R. H. (2017). Behavioral economics. Journal of Political Economy,
125(6), 1799-1805.
\item Theodossiou, P., \& Savva, C. S. (2016). Skewness and the relation
between risk and return. Management Science, 62(6), 1598-1609.
\item Thiagarajan, S. R., \& Schachter, B. (2011). Risk parity: Rewards,
risks, and research opportunities. The Journal of Investing, 20(1),
79-89. 
\item Timmermann, A. (2008). Elusive return predictability. International
Journal of Forecasting, 24(1), 1-18.
\item Tomer, J. F. (2007). What is behavioral economics?. The Journal of
Socio-Economics, 36(3), 463-479.
\item Treynor, J. L., \& Black, F. (1973). How to use security analysis
to improve portfolio selection. The journal of business, 46(1), 66-86.
\item Tsao, C. Y. (2010). Portfolio selection based on the mean–VaR efficient
frontier. Quantitative Finance, 10(8), 931-945.
\item van de Schoot, R., Depaoli, S., King, R., Kramer, B., Märtens, K.,
Tadesse, M. G., ... \& Yau, C. (2021). Bayesian statistics and modelling.
Nature Reviews Methods Primers, 1(1), 1.
\item Wang, S., Ding, W., Li, J., Yuan, Y., Ouyang, L., \& Wang, F. Y. (2019).
Decentralized autonomous organizations: Concept, model, and applications.
IEEE Transactions on Computational Social Systems, 6(5), 870-878.
\item Wang, Q., Li, R., Wang, Q., \& Chen, S. (2021). Non-fungible token
(NFT): Overview, evaluation, opportunities and challenges. arXiv preprint
arXiv:2105.07447.
\item Wood, G. (2014). Ethereum: A secure decentralised generalised transaction
ledger. Ethereum project yellow paper, 151(2014), 1-32.
\item Wu, H., Cao, J., Yang, Y., Tung, C. L., Jiang, S., Tang, B., ... \&
Deng, Y. (2019, July). Data management in supply chain using blockchain:
Challenges and a case study. In 2019 28th International Conference
on Computer Communication and Networks (ICCCN) (pp. 1-8). IEEE. 
\item Xu, J., \& Feng, Y. (2022). Reap the Harvest on Blockchain: A Survey
of Yield Farming Protocols. IEEE Transactions on Network and Service
Management. 
\item Yan, W., Miao, R., \& Li, S. (2007). Multi-period semi-variance portfolio
selection: Model and numerical solution. Applied Mathematics and Computation,
194(1), 128-134.
\item Zadeh, L. A. (1965). Fuzzy sets. Information and control, 8(3), 338-353.
\item Zadeh, L. A. (1978). Fuzzy sets as a basis for a theory of possibility.
Fuzzy sets and systems, 1(1), 3-28.
\item Zarir, A. A., Oliva, G. A., Jiang, Z. M., \& Hassan, A. E. (2021).
Developing cost-effective blockchain-powered applications: A case
study of the gas usage of smart contract transactions in the ethereum
blockchain platform. ACM Transactions on Software Engineering and
Methodology (TOSEM), 30(3), 1-38.
\item Zetzsche, D. A., Arner, D. W., \& Buckley, R. P. (2020). Decentralized
finance (defi). Journal of Financial Regulation, 6, 172-203.
\item Zhang, W. G., Wang, Y. L., Chen, Z. P., \& Nie, Z. K. (2007). Possibilistic
mean–variance models and efficient frontiers for portfolio selection
problem. Information Sciences, 177(13), 2787-2801.
\item Zhang, W. G., \& Wang, Y. L. (2008). An analytic derivation of admissible
efficient frontier with borrowing. European Journal of Operational
Research, 184(1), 229-243.
\item Zhang, Y., Li, X., \& Guo, S. (2018). Portfolio selection problems
with Markowitz’s mean–variance framework: a review of literature.
Fuzzy Optimization and Decision Making, 17, 125-158.
\item Zhao, D., Bai, L., Fang, Y., \& Wang, S. (2022). Multi‐period portfolio
selection with investor views based on scenario tree. Applied Mathematics
and Computation, 418, 126813.
\item Zheng, Z., Xie, S., Dai, H. N., Chen, W., Chen, X., Weng, J., \& Imran,
M. (2020). An overview on smart contracts: Challenges, advances and
platforms. Future Generation Computer Systems, 105, 475-491.
\item Zou, W., Lo, D., Kochhar, P. S., Le, X. B. D., Xia, X., Feng, Y.,
... \& Xu, B. (2019). Smart contract development: Challenges and opportunities.
IEEE Transactions on Software Engineering, 47(10), 2084-2106.
\end{itemize}
\begin{doublespace}
\begin{center}
\pagebreak{}
\par\end{center}
\end{doublespace}

\part{\label{part:Supplementary-Online-Material}Supplementary Online Material}

\section{\label{sec:Parity-Flow-Flow}Appendix: Parity Flow Flow Chart}

The flow chart in Figure (\ref{fig:Sequences-of-Steps}) corresponds
to all the steps mentioned in Section (\ref{sec:Risk-Parity:-Combining}).

\begin{figure}[H]
\includegraphics[width=17cm,height=10cm]{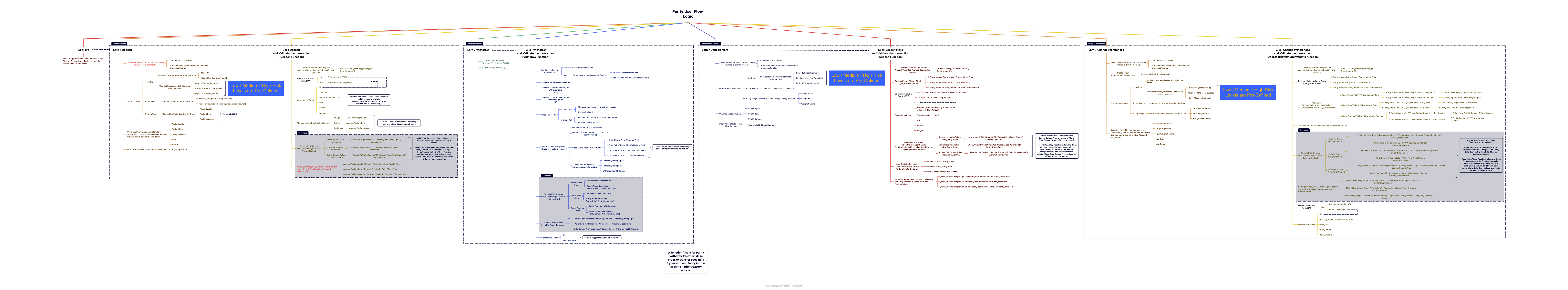}

\caption{Parity Flow Flow Chart: Sequences of Steps for Periodic Fund Management\label{fig:Sequences-of-Steps}}
\end{figure}

\section{\label{sec:Numerical-Results}Appendix of Numerical Results}

Each of the tables in this section are referenced in the main body
of the article. Below, we provide supplementary descriptions for each
table. 
\begin{itemize}
\item Figure (\ref{fig:Parity-Line:-ABG-Scatter}) shows three clouds of
points representing the movement of Alpha, Beta and Gamma in Risk
and Return space. Alpha is the right most, Beta is in the middle and
Gamma is the left most set of points. The X-Axis is the risk axis
and Y-axis is for returns.
\end{itemize}
\begin{figure}[H]
\includegraphics[width=18cm]{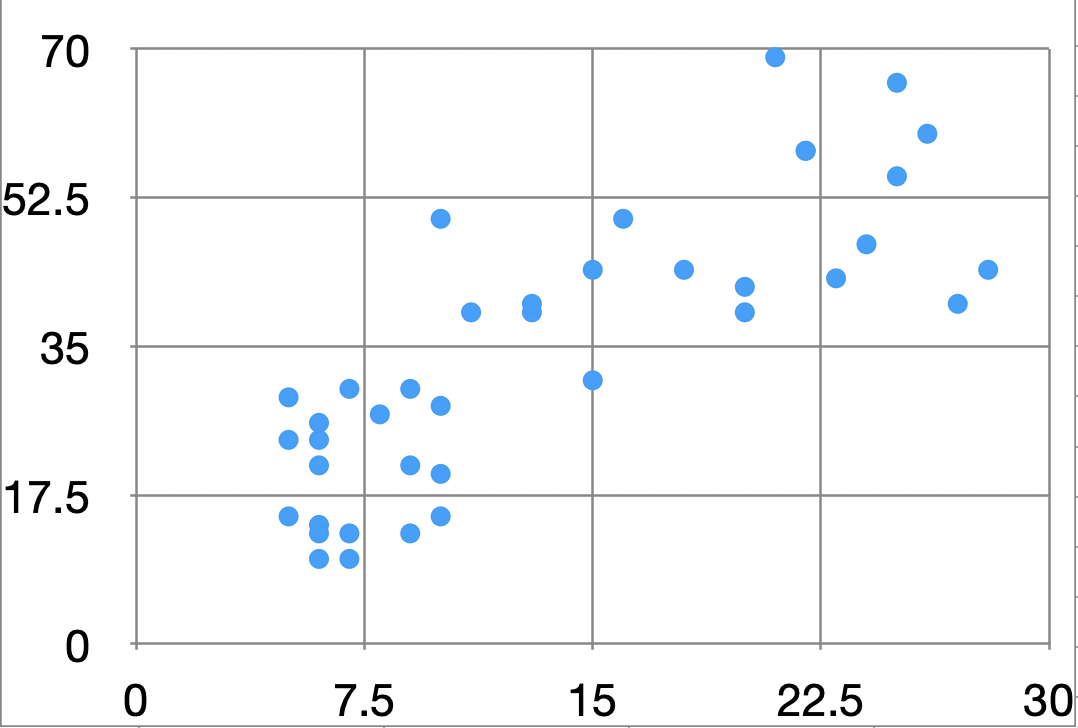}\caption{\label{fig:Parity-Line:-ABG-Scatter}Parity Line: Alpha, Beta Gamma
Combination Scatter Plot}
\end{figure}

\begin{itemize}
\item The Table in Figure (\ref{fig:Parity-Line:-ABG-Sample}) shows numerical
examples related to combining Alpha, Beta and Gamma to get the Parity
portfolio given their risk, return and correlation matrix. The weight
of Alpha, $w_{\alpha}$is given by the formula,
\begin{equation}
w_{\alpha}=\left(\frac{\frac{1}{\sigma_{\alpha t}}}{\frac{1}{\sigma_{\alpha t}}+\frac{1}{\sigma_{\beta t}}+\frac{1}{\sigma_{\gamma t}}}\right)
\end{equation}
Here, $\sigma_{\alpha t}$,$\sigma_{\beta t}$,$\sigma_{\gamma t}$
represent the volatilities of Alpha, Beta and Gamma at time $t$.
A similar formula applies for Beta and Gamma.
\end{itemize}
\begin{figure}[H]
\includegraphics[width=18cm]{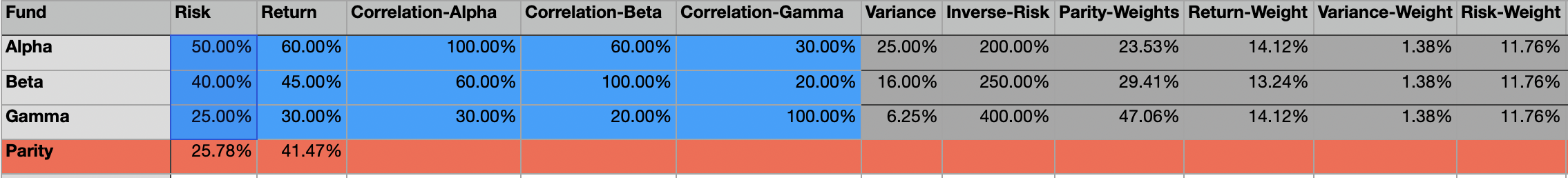}\caption{\label{fig:Parity-Line:-ABG-Sample}Parity Line: Alpha, Beta Gamma
Combination Sample Data}
\end{figure}

\begin{itemize}
\item The Table in Figure (\ref{fig:Parity-Line:-ABG-Real-Market-Data})
shows numerical examples using actual market data to combine Alpha,
Beta and Gamma to get the Parity portfolio using the method in Section
(\ref{subsec:Distancing-the-Distance}). Investors choose risk, return
or directly the weights of Alpha and Beta. The numerical values presented
below are from November 2022. The risk and return figures were estimated
on a rolling 90 day basis. The security compositions of Alpha, Beta
and Gamma are available upon request.
\end{itemize}
\begin{figure}[H]
\includegraphics[width=18cm]{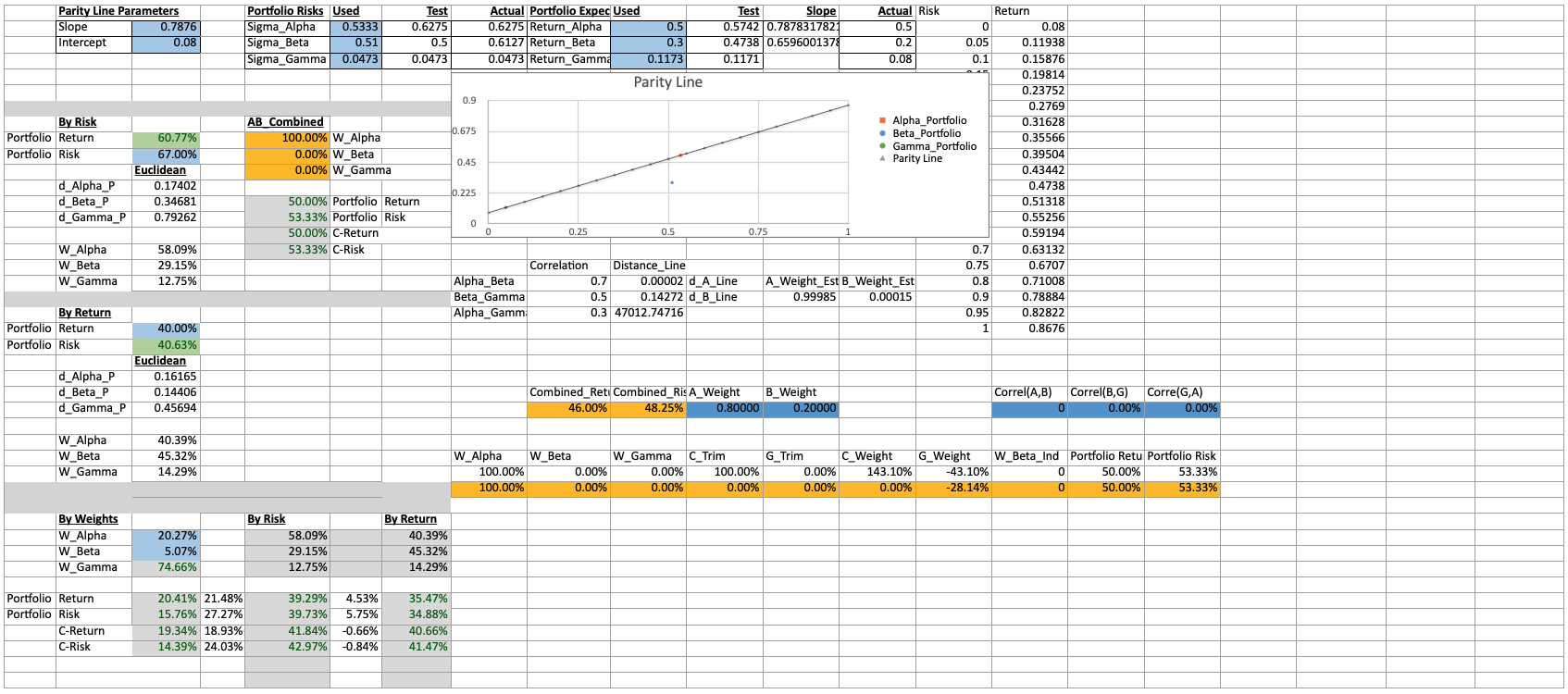}\caption{\label{fig:Parity-Line:-ABG-Real-Market-Data}Parity Line: Alpha,
Beta Gamma Combination Real Market Data}
\end{figure}

\begin{itemize}
\item The Table in Figure (\ref{fig:Parity-Line:-Return-Scenaios}) shows
numerical examples illustrating scenarios when investor return choices
will vary and the portfolio risk and return are calculated based on
the formulations in Section (\ref{subsec:Distancing-the-Distance}). 
\begin{itemize}
\item The columns in Figure (\ref{fig:Parity-Line:-Return-Scenaios}) represent
the following information respectively: 
\begin{enumerate}
\item \textbf{Weight\_Alpha} corresponds to the calculated weight of Alpha.
\item The other columns have similar nomenclatures and meaning.
\end{enumerate}
\end{itemize}
\end{itemize}
\begin{figure}[H]
\includegraphics[width=18cm]{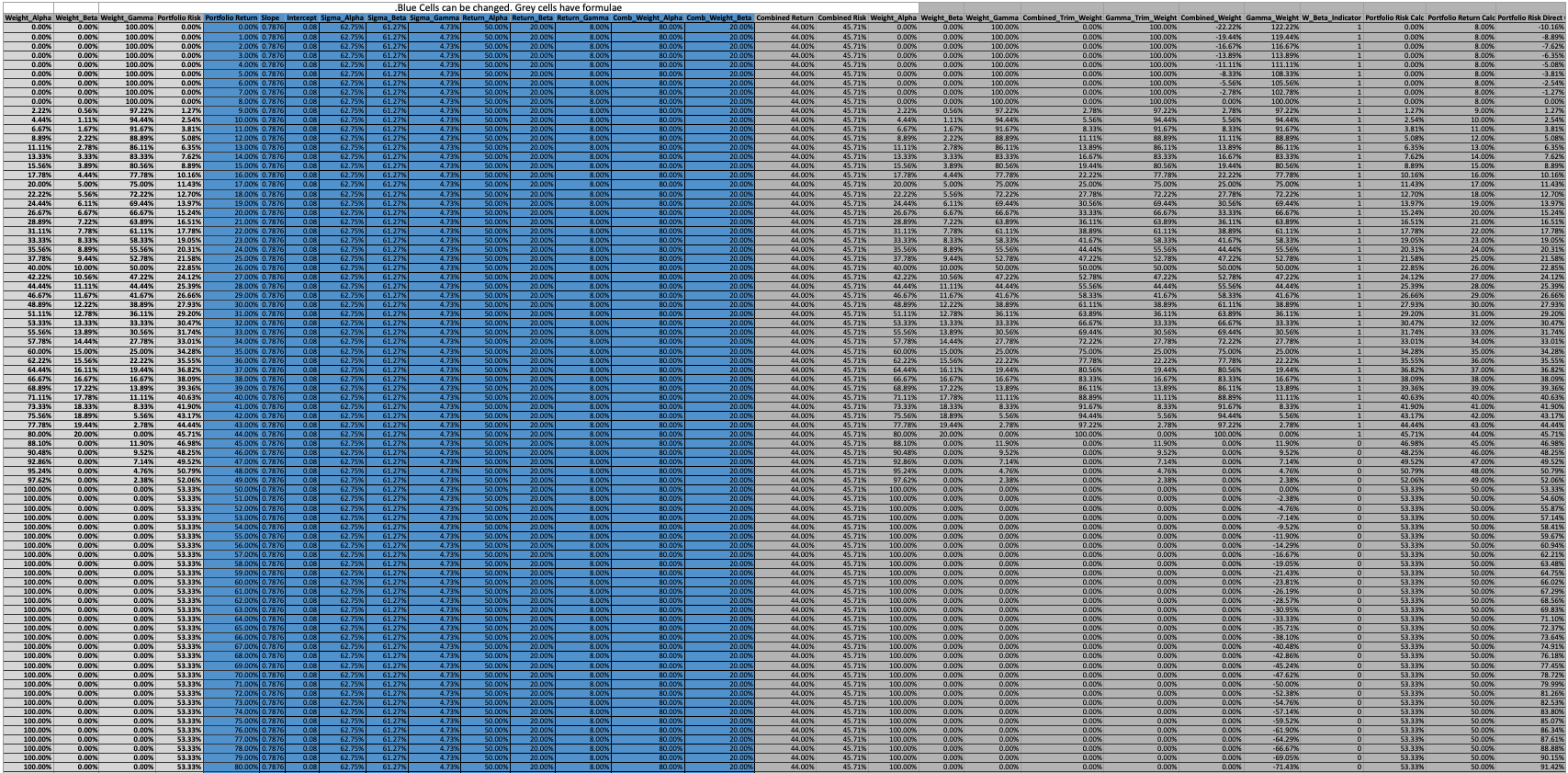}\caption{\label{fig:Parity-Line:-Return-Scenaios}Parity Line: Return Scenario
Analysis}
\end{figure}

\begin{itemize}
\item The Table in Figure (\ref{fig:Parity-Line:-Risk-Scenarios}) shows
numerical examples illustrating scenarios when investor risk choices
will vary and the portfolio risk and return are calculated based on
the formulations in Section (\ref{subsec:Distancing-the-Distance}). 
\begin{itemize}
\item The columns in Figure (\ref{fig:Parity-Line:-Risk-Scenarios}) represent
the following information respectively: 
\begin{enumerate}
\item \textbf{Weight\_Alpha} corresponds to the calculated weight of Alpha.
\item The other columns have similar nomenclatures and meaning.
\end{enumerate}
\end{itemize}
\end{itemize}
\begin{figure}[H]
\includegraphics[width=18cm]{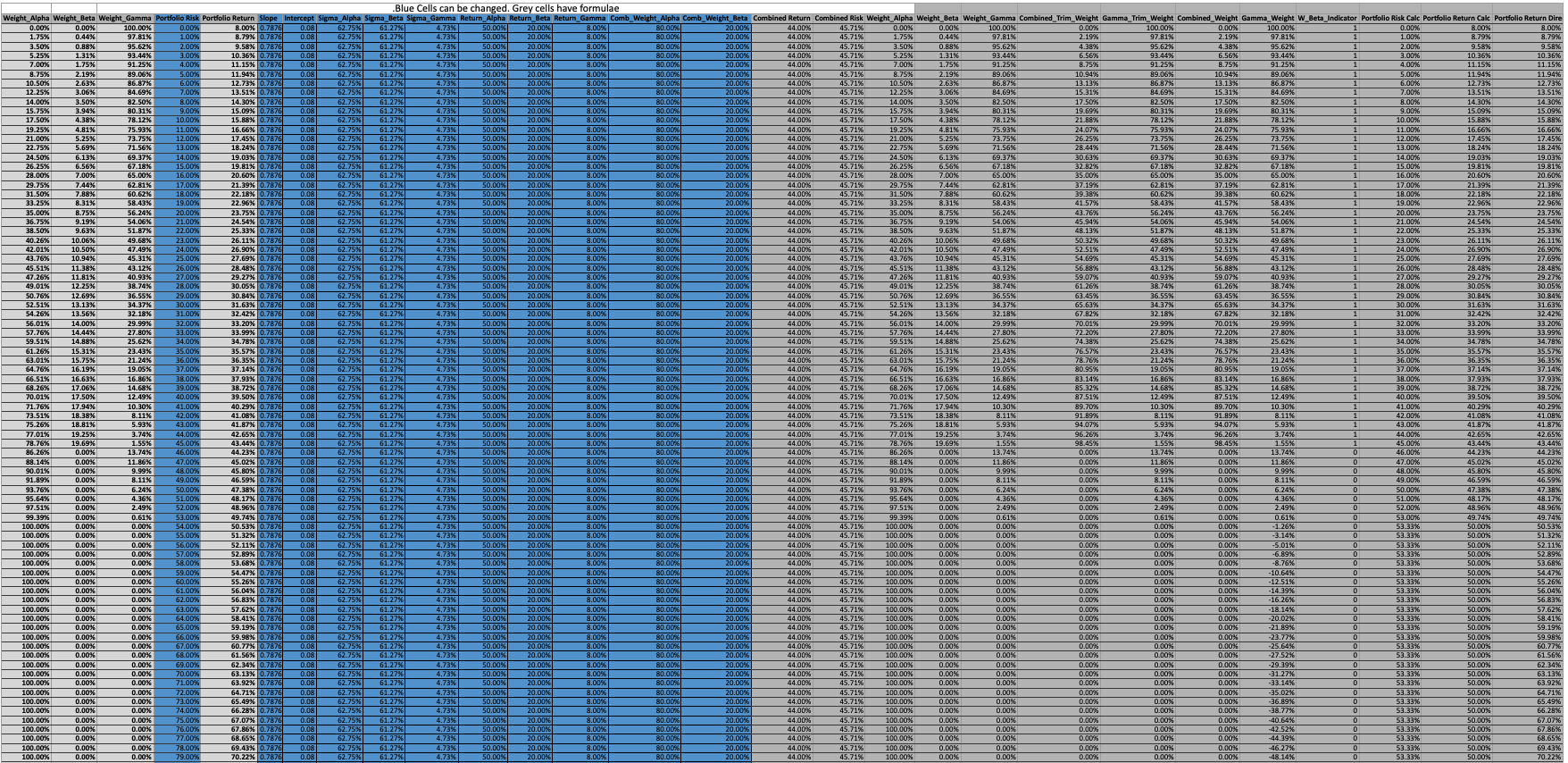}\caption{\label{fig:Parity-Line:-Risk-Scenarios}Parity Line: Risk Scenario
Analysis}
\end{figure}

\begin{itemize}
\item The Table in Figure (\ref{fig:Parity-Line:-Weights-Return-Scenaios})
shows numerical examples illustrating scenarios when investor weight
choices will vary and the portfolio return is incremented in steps
of one percent The portfolio risk calculations are based on the formulations
in Section (\ref{subsec:Distancing-the-Distance}). 
\begin{itemize}
\item The columns in Figure (\ref{fig:Parity-Line:-Weights-Return-Scenaios})
represent the following information respectively: 
\begin{enumerate}
\item \textbf{Weight\_Alpha} corresponds to the calculated weight of Alpha.
\item The other columns have similar nomenclatures and meaning.
\end{enumerate}
\end{itemize}
\end{itemize}
\begin{figure}[H]
\includegraphics[width=18cm]{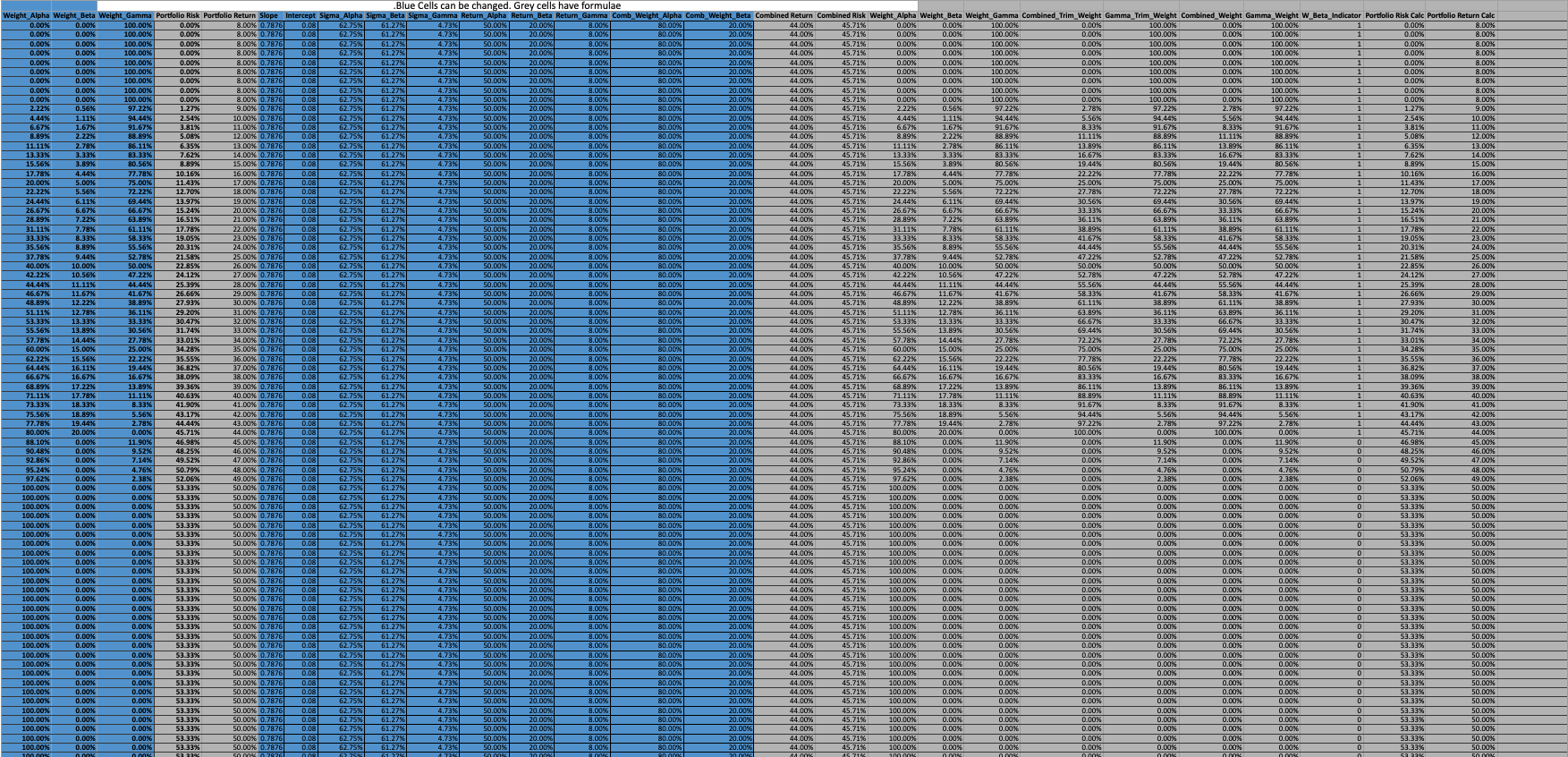}\caption{\label{fig:Parity-Line:-Weights-Return-Scenaios}Parity Line: Weights
and Return Scenario Analysis}
\end{figure}

\begin{itemize}
\item The Table in Figure (\ref{fig:Parity-Line:-Weights-Risk-Scenaios})
shows numerical examples illustrating scenarios when investor weight
choices will vary and the portfolio risk is incremented in steps of
one percent The portfolio return calculations are based on the formulations
in Section (\ref{subsec:Distancing-the-Distance}). 
\begin{itemize}
\item The columns in Figure (\ref{fig:Parity-Line:-Weights-Risk-Scenaios})
represent the following information respectively: 
\begin{enumerate}
\item \textbf{Weight\_Alpha} corresponds to the calculated weight of Alpha.
\item The other columns have similar nomenclatures and meaning.
\end{enumerate}
\end{itemize}
\end{itemize}
\begin{figure}[H]
\includegraphics[width=16cm,height=10cm]{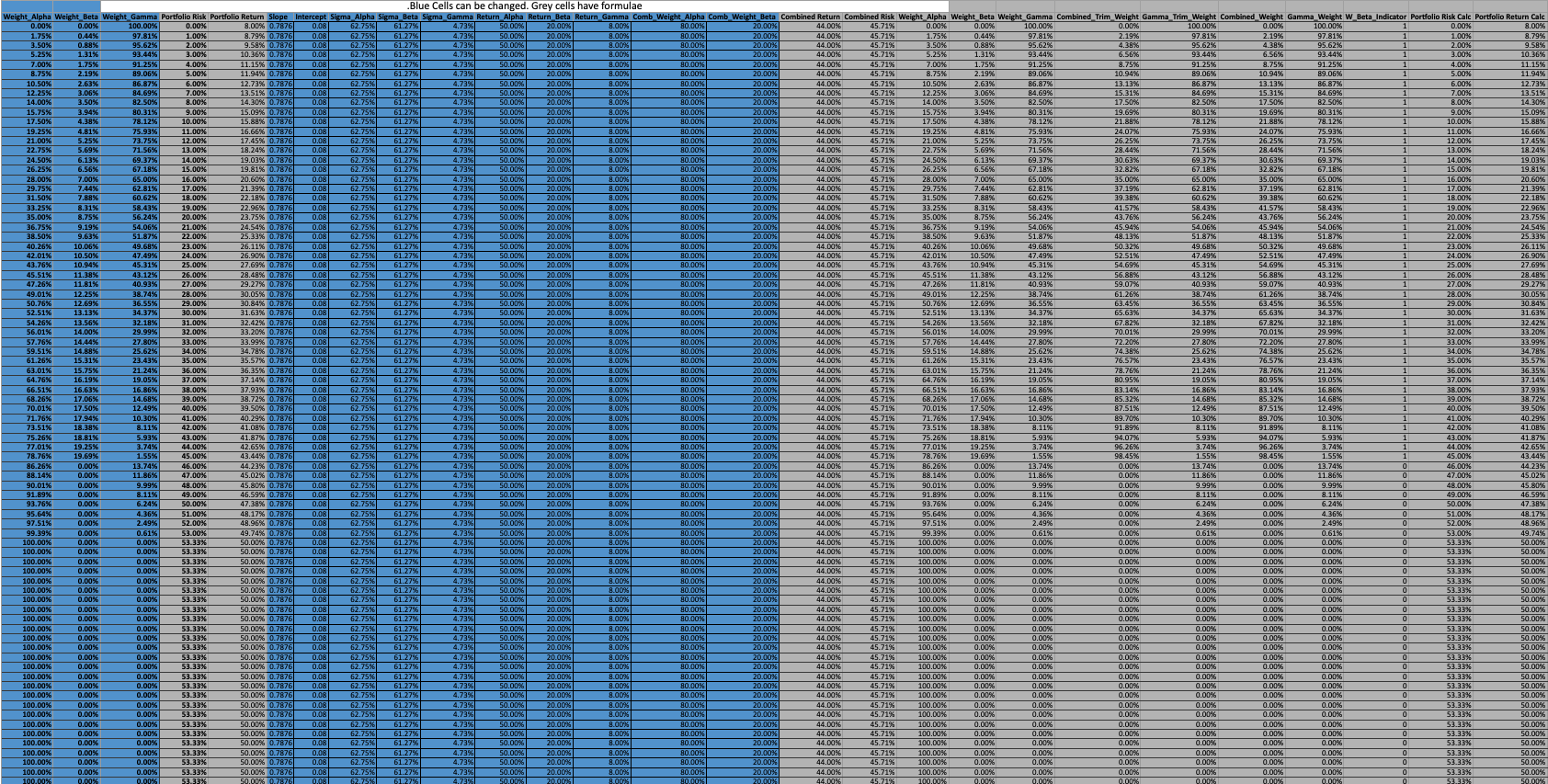}\caption{\label{fig:Parity-Line:-Weights-Risk-Scenaios}Parity Line: Weights
and Risk Scenario Analysis}
\end{figure}

\begin{itemize}
\item The Table in Figure (\ref{fig:Parity-Sequence-Steps-Investor-Variables-I})
shows numerical examples related to the method described in 
\begin{itemize}
\item The columns in Figure (\ref{fig:Parity-Sequence-Steps-Investor-Variables-I})
represent the following information respectively: 
\begin{enumerate}
\item \textbf{Time Step} corresponds to the rebalancing event when the following
actions happen by the corresponding investors and the state of their
investment and the fund prices.
\item The other columns have similar nomenclatures and meaning.
\end{enumerate}
\end{itemize}
\end{itemize}
\begin{figure}[H]
\includegraphics[width=18cm]{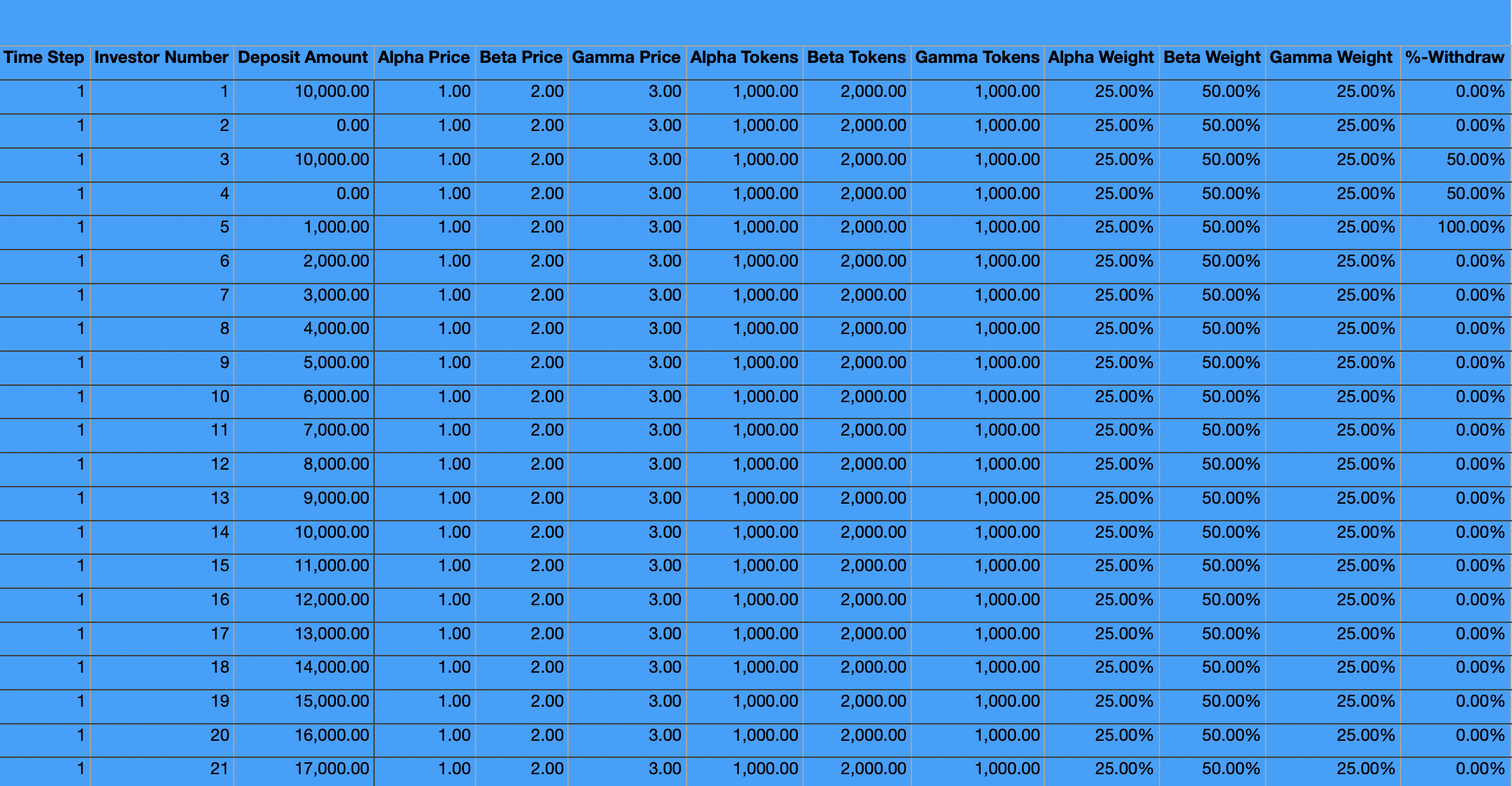}\caption{\label{fig:Parity-Sequence-Steps-Investor-Variables-I}Parity Sequence
of Steps: Investor Level Variables}
\end{figure}

\begin{itemize}
\item The Table in Figure (\ref{fig:Parity-Sequence-Steps-Investor-Amounts-Tokens-II})
shows numerical examples related to the method described in 
\begin{itemize}
\item The columns in Figure (\ref{fig:Parity-Sequence-Steps-Investor-Amounts-Tokens-II})
represent the following information respectively: 
\begin{enumerate}
\item \textbf{\$DepositPlusTokens} corresponds to the dollar value of the
deposit plus tokens for the corresponding investor.
\item The other columns have similar nomenclatures and meaning.
\end{enumerate}
\end{itemize}
\end{itemize}
\begin{figure}[H]
\includegraphics[width=18cm]{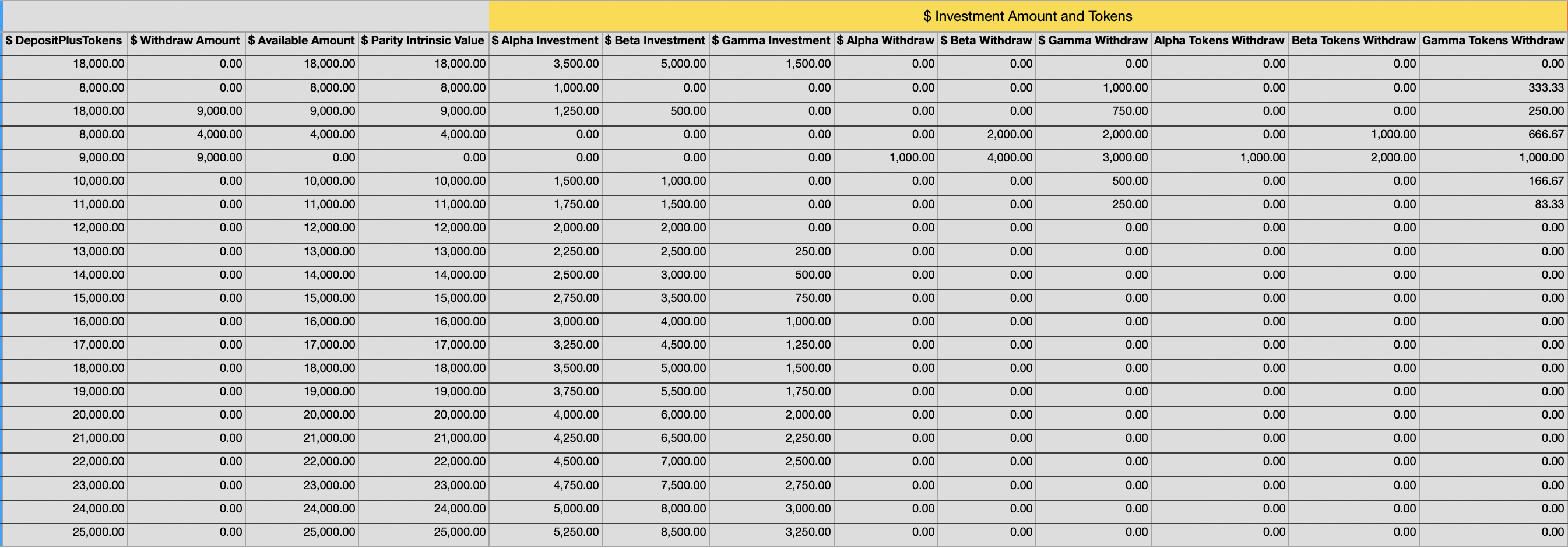}\caption{\label{fig:Parity-Sequence-Steps-Investor-Amounts-Tokens-II}Parity
Sequence of Steps: Investor Level Amounts and Tokens}
\end{figure}

\begin{itemize}
\item The Table in Figure (\ref{fig:Parity-Sequence-Steps-Investor-Raw-III})
shows numerical examples related to the method described in 
\begin{itemize}
\item The columns in Figure (\ref{fig:Parity-Sequence-Steps-Investor-Raw-III})
represent the following information respectively: 
\begin{enumerate}
\item \textbf{\$AIR} is the raw value of the investment needed in Alpha
that does not check if a corresponding withdraw is also happened for
the investor in this row. Notice \textbf{\$AIR} is positive and \textbf{\$AWR}
is negative in the first row. Hence only a net investment will happen
into Alpha for this investor. 
\item \textbf{\$ATW} is the raw value of the number of tokens that need
to withdrawn from this investor's allocation. In the first row the
values are negative hence this investor only receives a net deposit
of tokens - and no withdraws - based on his chosen risk and return
preferences.
\item The other columns have similar nomenclatures and meaning.
\end{enumerate}
\end{itemize}
\end{itemize}
\begin{figure}[H]
\includegraphics[width=18cm]{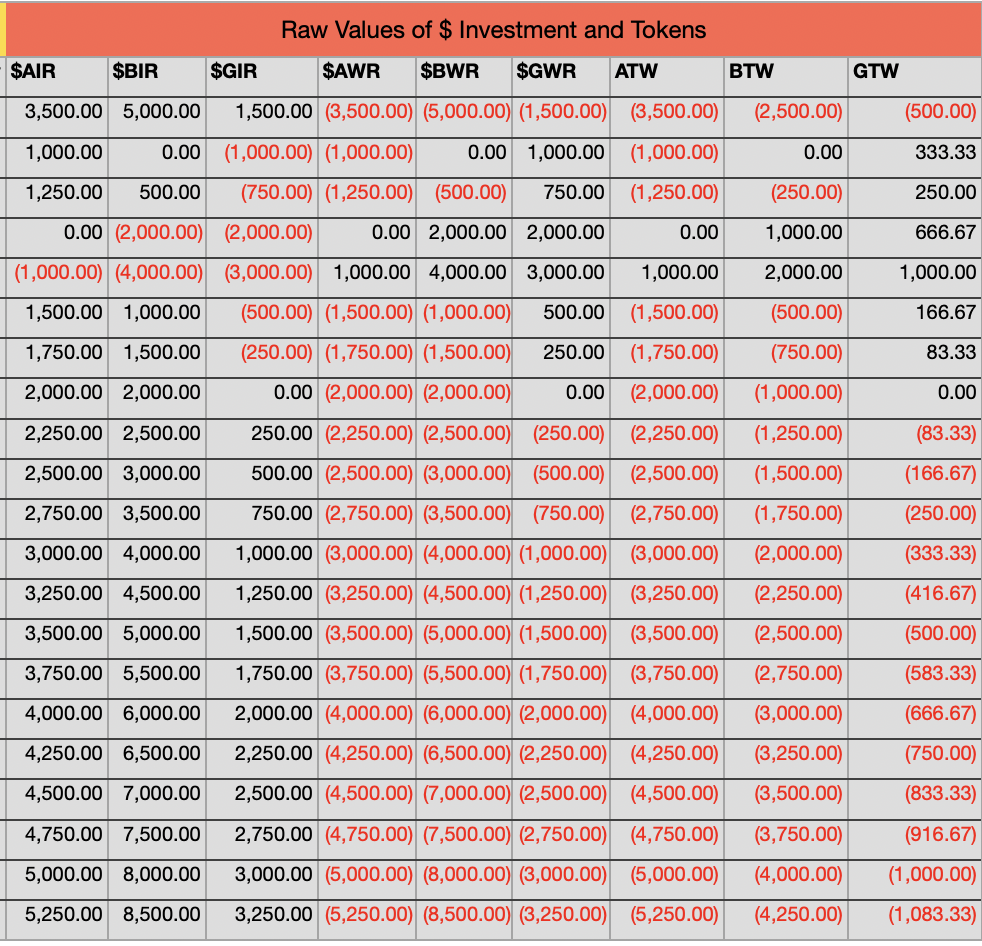}\caption{\label{fig:Parity-Sequence-Steps-Investor-Raw-III}Parity Sequence
of Steps: Investor Level Raw Values}
\end{figure}

\begin{itemize}
\item The Table in Figure (\ref{fig:Parity-Sequence-Steps-Aggregate-I-II-III})
shows numerical examples related to the method described in Point
(\ref{enu:Parity-Fund-Flow-Identity}) in Algorithm (\ref{alg:Algorithm-Parity Sequence of Steps})
in Section (\ref{subsec:Parity-Sequence}). Notice that the Parity
aggregate identity from Point (\ref{enu:Parity-Fund-Flow-Identity})
in Algorithm (\ref{alg:Algorithm-Parity Sequence of Steps}) in Section
(\ref{subsec:Parity-Sequence}) is satisfied.
\begin{itemize}
\item The columns in Figure (\ref{fig:Parity-Sequence-Steps-Aggregate-I-II-III})
represent the following information respectively: 
\begin{enumerate}
\item \textbf{\$Deposit} corresponds to the total dollar value of the deposit
received by the parity fund from all investors.
\item The columns in Figure (\ref{fig:Parity-Sequence-Steps-Aggregate-I-II-III})
have similar nomenclatures and meaning as the investor level illustrations
in Figures (\ref{fig:Parity-Sequence-Steps-Investor-Variables-I};
\ref{fig:Parity-Sequence-Steps-Investor-Amounts-Tokens-II}; \ref{fig:Parity-Sequence-Steps-Investor-Raw-III}).
\end{enumerate}
\end{itemize}
\end{itemize}
\begin{figure}[H]
\includegraphics[width=18cm]{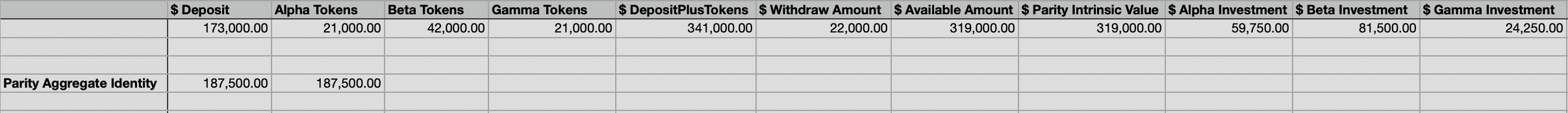}

\includegraphics[width=18cm]{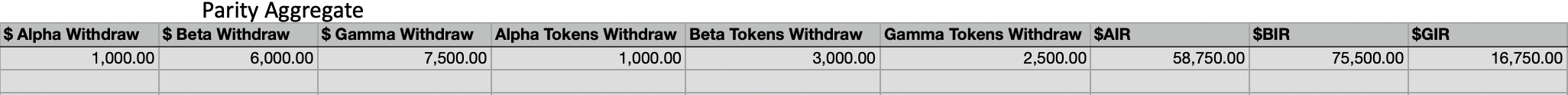}

\includegraphics[width=18cm]{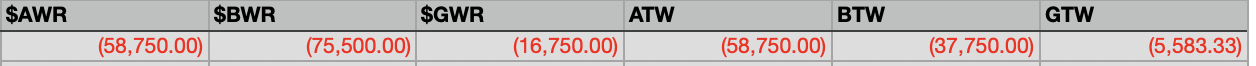}\caption{\label{fig:Parity-Sequence-Steps-Aggregate-I-II-III}Parity Sequence
of Steps: Aggregate Deposits, Withdraws and Raw Values}
\end{figure}

\begin{itemize}
\item The Table in Figure (\ref{fig:Parity-Line-Parabolic-Curve-Points})
shows numerical examples related to the method described in Section
(\ref{subsec:Efficient-Frontier-Parabolic}). 
\begin{itemize}
\item The cell names and values in Figure (\ref{fig:Parity-Line-Parabolic-Curve-Points})
represent variables from Section (\ref{subsec:Efficient-Frontier-Parabolic}).
\begin{enumerate}
\item \textbf{Slope (m)} corresponds to the slope of the Parity Line as
discussed in Section (\ref{subsec:Efficient-Frontier-Parabolic}).
\item The other cell names and values in Figure (\ref{fig:Parity-Line-Parabolic-Curve-Points})
have similar nomenclatures and meaning as the variables outlined in
Section (\ref{subsec:Efficient-Frontier-Parabolic}).
\end{enumerate}
\end{itemize}
\end{itemize}
\begin{figure}[H]
\includegraphics[width=18cm]{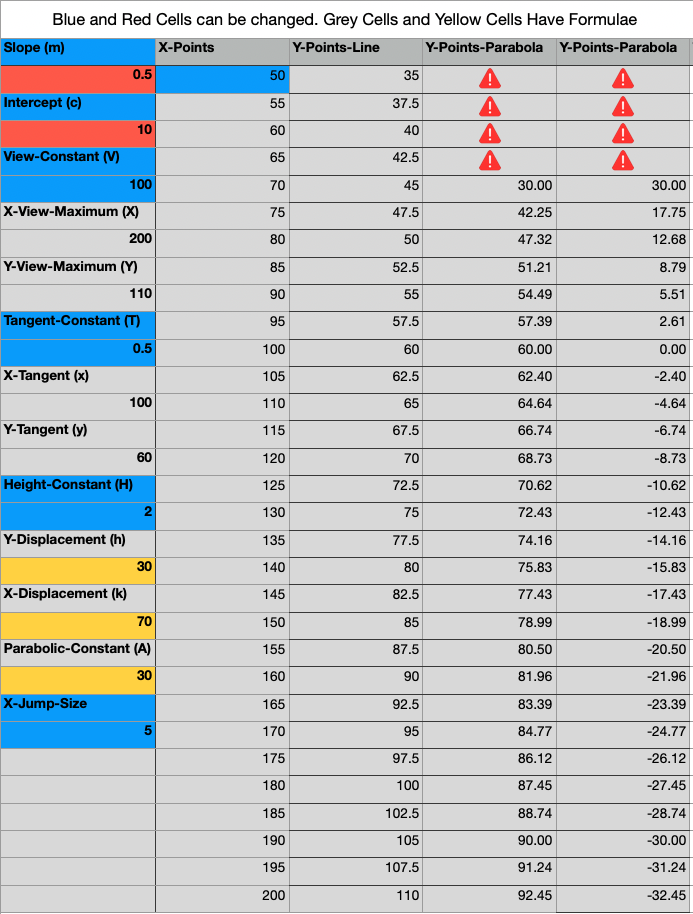}

\caption{\label{fig:Parity-Line-Parabolic-Curve-Points}Parity Line and Parabolic
Curve: X-Axis and Y-Axis Numerical Points}
\end{figure}

\begin{itemize}
\item The Illustrations in Figure (\ref{fig:Parity-Line-Parabolic-Curve-Plots})
shows graphs related to the method described in Section (\ref{subsec:Efficient-Frontier-Parabolic})
corresponding to the points given in Figure (\ref{fig:Parity-Line-Parabolic-Curve-Points}). 
\begin{itemize}
\item The X-Axis and Y-Axis in Figure (\ref{fig:Parity-Line-Parabolic-Curve-Plots})
represent the risk and return values respectively.
\end{itemize}
\end{itemize}
\begin{figure}[H]
\includegraphics[width=18cm]{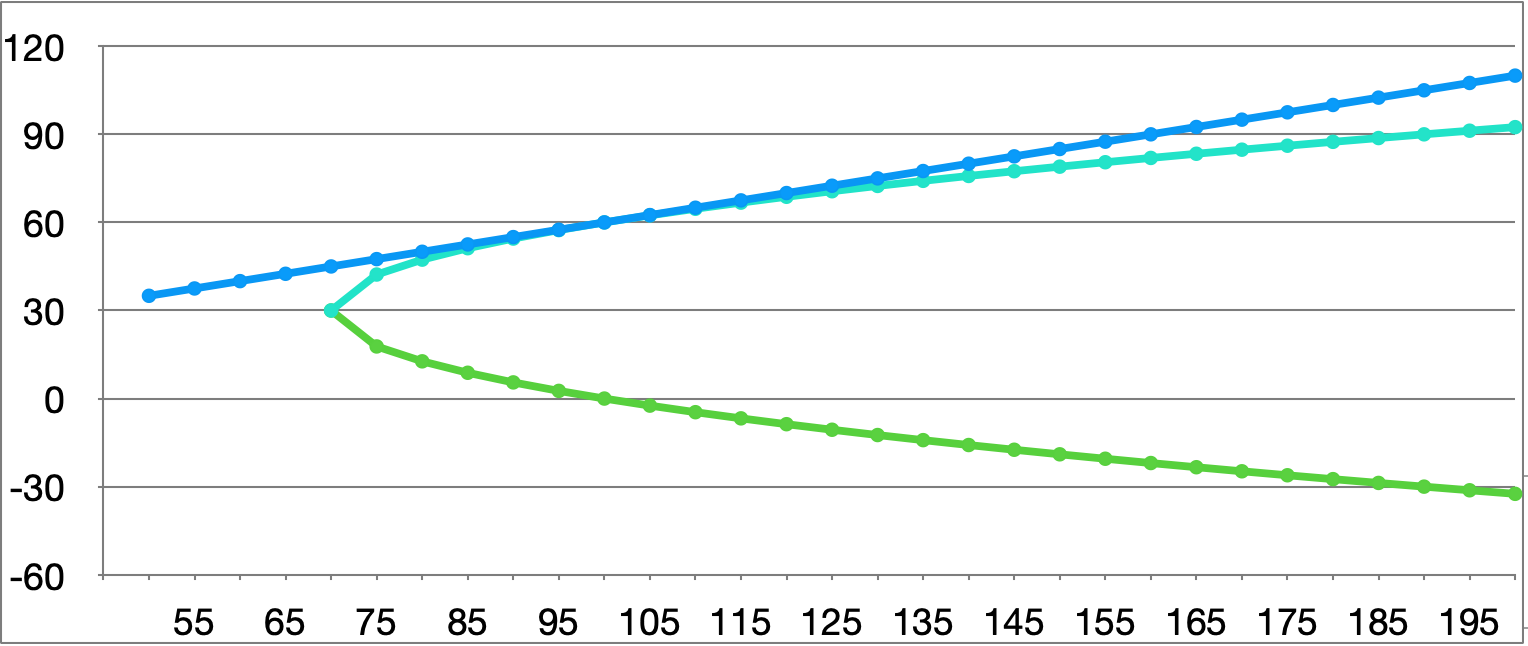}

\caption{\label{fig:Parity-Line-Parabolic-Curve-Plots}Parity Line and Parabolic
Curve: Graphical Plots}
\end{figure}

\section{\label{sec:End-notes}Appendix of Explanations and End-notes }
\begin{enumerate}
\item \label{enu:Efficient-Frontier}In modern portfolio theory, the efficient
frontier (or portfolio frontier) is an investment portfolio which
occupies the \textquotedbl efficient\textquotedbl{} parts of the
risk–return spectrum. Formally, it is the set of portfolios which
satisfy the condition that no other portfolio exists with a higher
expected return but with the same standard deviation of return (i.e.,
the risk). \href{https://en.wikipedia.org/wiki/Efficient_frontier}{Efficient Frontier,  Wikipedia Link}
\item \label{enu:Risk-parity-(or}Risk parity (or risk premia parity) is
an approach to investment management which focuses on allocation of
risk, usually defined as volatility, rather than allocation of capital.
\href{https://en.wikipedia.org/wiki/Risk_parity}{Risk Parity,  Wikipedia Link}
\item \label{enu:Star-Trek-is}Star Trek is an American science fiction
television series created by Gene Roddenberry that follows the adventures
of the starship USS Enterprise (NCC-1701) and its crew. The show is
set in the Milky Way galaxy, c. 2266–2269. The ship and crew are led
by Captain James T. Kirk (William Shatner), First Officer and Science
Officer Spock (Leonard Nimoy), and Chief Medical Officer Leonard H.
\textquotedbl Bones\textquotedbl{} McCoy (DeForest Kelley). Shatner's
voice-over introduction during each episode's opening credits stated
the starship's purpose:
\begin{itemize}
\item Space: the final frontier. These are the voyages of the starship Enterprise.
Its five-year mission: to explore strange new worlds, to seek out
new life and new civilizations, to boldly go where no man has gone
before. \href{https://en.wikipedia.org/wiki/Star_Trek:_The_Original_Series}{Star Trek: The Original Series,  Wikipedia Link}
\end{itemize}
\item \label{enu:Decentralized-finance}Decentralized finance (often stylized
as DeFi) offers financial instruments without relying on intermediaries
such as brokerages, exchanges, or banks by using smart contracts on
a blockchain. \href{https://en.wikipedia.org/wiki/Decentralized_finance}{Decentralized Finance (DeFi), Wikipedia Link}
\item \label{enu:Types-Yield-Enhancement-Services}The following are the
four main types of blockchain decentralized financial products or
services. We can also consider them as the main types of yield enhancement,
or return generation, vehicles available in decentralized finance: 
\begin{enumerate}
\item \label{enu:Single-sided-staking-allows}Single-Sided Staking: This
allows users to earn yield by providing liquidity for one type of
asset, in contrast to liquidity provisioning on AMMs, which requires
a pair of assets. \href{https://docs.saucerswap.finance/features/single-sided-staking}{Single Sided Staking,  SuacerSwap Link}
\begin{enumerate}
\item Bancor is an example of a provider who supports single sided staking.
Bancor natively supports Single-Sided Liquidity Provision of tokens
in a liquidity pool. This is one of the main benefits to liquidity
providers that distinguishes Bancor from other DeFi staking protocols.
Typical AMM liquidity pools require a liquidity provider to provide
two assets. Meaning, if you wish to deposit \textquotedbl TKN1\textquotedbl{}
into a pool, you would be forced to sell 50\% of that token and trade
it for \textquotedbl TKN2\textquotedbl . When providing liquidity,
your deposit is composed of both TKN1 and TKN2 in the pool. Bancor
Single-Side Staking changes this and enables liquidity providers to:
Provide only the token they hold (TKN1 from the example above) Collect
liquidity providers fees in TKN1. \href{https://docs.bancor.network/about-bancor-network/faqs/single-side-liquidity}{Single Sided Staking,  Bancor Link}
\end{enumerate}
\item \label{enu:AMM-Liquidity-Pairs}AMM Liquidity Pairs (AMM LP): A constant-function
market maker (CFMM) is a market maker with the property that that
the amount of any asset held in its inventory is completely described
by a well-defined function of the amounts of the other assets in its
inventory (Hanson 2007). \href{https://en.wikipedia.org/wiki/Constant_function_market_maker}{Constant Function Market Maker,  Wikipedia Link}

This is the most common type of market maker liquidity pool. Other
types of market makers are discussed in Mohan (2022). All of them
can be grouped under the category Automated Market Makers. Hence the
name AMM Liquidity Pairs. A more general discussion of AMMs, without
being restricted only to the blockchain environment, is given in (Slamka,
Skiera \& Spann 2012).
\item \label{enu:LP-Token-Staking:}LP Token Staking: LP staking is a valuable
way to incentivize token holders to provide liquidity. When a token
holder provides liquidity as mentioned earlier in Point (\ref{enu:AMM-Liquidity-Pairs})
they receive LP tokens. LP staking allows the liquidity providers
to stake their LP tokens and receive project tokens tokens as rewards.
This mitigates the risk of impermanent loss and compensates for the
loss. \href{https://defactor.com/liquidity-provider-staking-introduction-guide/}{Liquidity Provider Staking,  DeFactor Link}
\begin{enumerate}
\item Note that this is also a type of single sided staking discussed in
Point (\ref{enu:Single-sided-staking-allows}). The key point to remember
is that the LP Tokens can be considered as receipts for the crypto
assets deposits in an AMM LP Point (\ref{enu:AMM-Liquidity-Pairs}).
These LP Token receipts can be further staked to generate additional
yield.
\end{enumerate}
\item \label{enu:Lending:-Crypto-lending}Lending: Crypto lending is the
process of depositing cryptocurrency that is lent out to borrowers
in return for regular interest payments. Payments are typically made
in the form of the cryptocurrency that is deposited and can be compounded
on a daily, weekly, or monthly basis. \href{https://www.investopedia.com/crypto-lending-5443191}{Crypto Lending,  Investopedia Link};
\href{https://defiprime.com/decentralized-lending}{DeFi Lending,  DeFiPrime Link};
\href{https://crypto.com/price/categories/lending}{Top Lending Coins by Market Capitalization,  Crypto.com Link}.
\begin{enumerate}
\item Crypto lending is very common on decentralized finance projects and
also in centralized exchanges. Centralized cryptocurrency exchanges
are online platforms used to buy and sell cryptocurrencies. They are
the most common means that investors use to buy and sell cryptocurrency
holdings. \href{https://www.investopedia.com/tech/what-are-centralized-cryptocurrency-exchanges/}{Centralized Cryptocurrency Exchanges,  Investopedia Link}
\item Lending is a very active area of research both on blockchain and off
chain (traditional finance) as well (Cai 2018; Zeng et al., 2019;
Bartoletti, Chiang \& Lafuente 2021; Gonzalez 2020; Hassija et al.,
2020; Patel et al. , 2020). 
\item Lending is also a highly profitable business in the traditional financial
world (Kashyap 2022-I). Investment funds, especially hedge funds,
engage in borrowing securities to put on short positions depending
on their investment strategies.Long only investment funds typically
supply securities or lend their assets for a fee.
\item In finance, a long position in a financial instrument means the holder
of the position owns a positive amount of the instrument. \href{https://en.wikipedia.org/wiki/Long_(finance)}{Long Position in Finance,  Wikipedia Link}
\item In finance, being short in an asset means investing in such a way
that the investor will profit if the value of the asset falls. This
is the opposite of a more conventional \textquotedbl long\textquotedbl{}
position, where the investor will profit if the value of the asset
rises. \href{https://en.wikipedia.org/wiki/Short_(finance)}{Short Position in Finance,  Wikipedia Link}
\end{enumerate}
\end{enumerate}
\item \label{enu:Modern-portfolio-theory}Modern portfolio theory (MPT),
or mean-variance analysis, is a mathematical framework for assembling
a portfolio of assets such that the expected return is maximized for
a given level of risk. \href{https://en.wikipedia.org/wiki/Modern_portfolio_theory}{Modern Portfolio Theory,  Wikipedia Link}
\item \label{enu:The-semivariance-is}The semivariance is defined as the
expected squared deviation from the mean, calculated over those points
that are no greater than the mean. Its square root is the semi-deviation.
\href{https://en.wikipedia.org/wiki/Downside_risk}{Downside Risk (Semi-Variance),  Wikipedia Link}
\item \label{enu:Fuzzy-logic-is}Fuzzy logic is a form of many-valued logic
in which the truth value of variables may be any real number between
0 and 1. It is employed to handle the concept of partial truth, where
the truth value may range between completely true and completely false.
By contrast, in Boolean logic, the truth values of variables may only
be the integer values 0 or 1. \href{https://en.wikipedia.org/wiki/Fuzzy_logic}{Fuzzy Logic,  Wikipedia Link}
\item \label{enu:Possibility-theory-is}Possibility theory is a mathematical
theory for dealing with certain types of uncertainty and is an alternative
to probability theory. It uses measures of possibility and necessity
between 0 and 1, ranging from impossible to possible and unnecessary
to necessary, respectively. Professor Lotfi Zadeh first introduced
possibility theory in 1978 as an extension of his theory of fuzzy
sets and fuzzy logic. \href{https://en.wikipedia.org/wiki/Possibility_theory}{Possibility Theory,  Wikipedia Link}
\item \label{enu:Probability-theory-or}Probability theory or probability
calculus is the branch of mathematics concerned with probability.
The axioms of probability formalise it in terms of a probability space,
which assigns a measure taking values between 0 and 1, termed the
probability measure, to a set of outcomes called the sample space.
Any specified subset of the sample space is called an event.\href{https://en.wikipedia.org/wiki/Probability_theory}{Probability Theory,  Wikipedia Link}
\item \label{enu:Bayesian-statistics-is}Bayesian statistics is a theory
in the field of statistics based on the Bayesian interpretation of
probability where probability expresses a degree of belief in an event.
The degree of belief may be based on prior knowledge about the event,
such as the results of previous experiments, or on personal beliefs
about the event. \href{https://en.wikipedia.org/wiki/Bayesian_statistics}{Bayesian Statistics,  Wikipedia Link}
\begin{itemize}
\item Bayesian statistical methods use Bayes' theorem to compute and update
probabilities after obtaining new data. Bayes' theorem describes the
conditional probability of an event based on data as well as prior
information or beliefs about the event or conditions related to the
event.
\end{itemize}
\item \label{enu:Value-at-risk}Value at risk (VaR) is a measure of the
risk of loss of investment/Capital. It estimates how much a set of
investments might lose (with a given probability), given normal market
conditions, in a set time period such as a day. \href{https://en.wikipedia.org/wiki/Value_at_risk}{Value at Risk (VaR),  Wikipedia Link}
\item \label{enu:Behavioral-finance-is}Behavioral finance is the study
of the influence of psychology on the behavior of investors or financial
analysts. It assumes that investors are not always rational, have
limits to their self-control and are influenced by their own biases.\href{https://en.wikipedia.org/wiki/Behavioral_economics\#Behavioral_finance}{Behavioral Finance,  Wikipedia Link}
\begin{enumerate}
\item Behavioral economics is the study of the psychological, cognitive,
emotional, cultural and social factors involved in the decisions of
individuals or institutions, and how these decisions deviate from
those implied by classical economic theory. \href{https://en.wikipedia.org/wiki/Behavioral_economics}{Behavioral Economics,  Wikipedia Link}
\end{enumerate}
\item \label{enu:In-finance,-leverage}In finance, leverage (or gearing
in the United Kingdom and Australia) is any technique involving borrowing
funds to buy an investment, estimating that future profits will be
more than the cost of borrowing. This technique is named after a lever
in physics, which amplifies a small input force into a greater output
force, because successful leverage amplifies the smaller amounts of
money needed for borrowing into large amounts of profit. \href{https://en.wikipedia.org/wiki/Leverage_(finance)}{Leverage (Finance),  Wikipedia Link}
\item \label{enu:In-finance,-risk-factors}In finance, risk factors are
the building blocks of investing, that help explain the systematic
returns in equity market, and the possibility of losing money in investments
or business adventures. \href{https://en.wikipedia.org/wiki/Risk_factor_(finance)}{Risk Factor (Finance),  Wikipedia Link}
\item \label{enu:In-financial-Asset-Pricing}In financial economics, asset
pricing refers to a formal treatment and development of two interrelated
pricing principles - either general equilibrium asset pricing or rational
asset pricing. \href{https://en.wikipedia.org/wiki/Asset_pricing}{Asset Pricing,  Wikipedia Link}
\item \label{enu:Convex-optimization-is}Convex optimization is a subfield
of mathematical optimization that studies the problem of minimizing
convex functions over convex sets (or, equivalently, maximizing concave
functions over convex sets). \href{https://en.wikipedia.org/wiki/Convex_optimization}{Convex Optimization,  Wikipedia Link}
\item \label{enu:The-P-versus-NP}The P versus NP problem is a major unsolved
problem in theoretical computer science. In informal terms, it asks
whether every problem whose solution can be quickly verified can also
be quickly solved. \href{https://en.wikipedia.org/wiki/P_versus_NP_problem}{P versus NP problem,  Wikipedia Link}
\item \label{enu:In-computational-complexity-NP-Hardness}In computational
complexity theory, NP-hardness (non-deterministic polynomial-time
hardness) is the defining property of a class of problems that are
informally \textquotedbl at least as hard as the hardest problems
in NP\textquotedbl . \href{https://en.wikipedia.org/wiki/NP-hardness}{NP Hardness,  Wikipedia Link}
\item \label{enu:In-computational-complexity-NP}In computational complexity
theory, NP (nondeterministic polynomial time) is a complexity class
used to classify decision problems. NP is the set of decision problems
for which the problem instances, where the answer is \textquotedbl yes\textquotedbl ,
have proofs verifiable in polynomial time by a deterministic Turing
machine, or alternatively the set of problems that can be solved in
polynomial time by a nondeterministic Turing machine. \href{https://en.wikipedia.org/wiki/NP_(complexity)}{NP (Complexity),  Wikipedia Link}
\item \label{enu:Principal-component-analysis}Principal component analysis
(PCA) is a popular technique for analyzing large datasets containing
a high number of dimensions/features per observation, increasing the
interpretability of data while preserving the maximum amount of information,
and enabling the visualization of multidimensional data. Formally,
PCA is a statistical technique for reducing the dimensionality of
a dataset. \href{https://en.wikipedia.org/wiki/Principal_component_analysis}{Principal Component Analysis,  Wikipedia Link}
\item \label{enu:A-graphical-user-GUI}A graphical user interface, or GUI,
is a form of user interface that allows users to interact with electronic
devices through graphical icons and visual indicators such as secondary
notation. \href{https://en.wikipedia.org/wiki/Graphical_user_interface}{Graphical User Interface,  Wikipedia Link}
\item \label{enu:A-smart-contract}A smart contract is a computer program
or a transaction protocol that is intended to automatically execute,
control or document events and actions according to the terms of a
contract or an agreement. The objectives of smart contracts are the
reduction of need for trusted intermediators, arbitration costs, and
fraud losses, as well as the reduction of malicious and accidental
exceptions. Smart contracts are commonly associated with cryptocurrencies,
and the smart contracts introduced by Ethereum are generally considered
a fundamental building block for decentralized finance (DeFi) and
NFT applications. \href{https://en.wikipedia.org/wiki/Smart_contract}{Smart Contract,  Wikipedia Link}
\item \label{enu:A-non-fungible-token}A non-fungible token (NFT) is a unique
digital identifier that is recorded on a blockchain, and is used to
certify ownership and authenticity. It cannot be copied, substituted,
or subdivided. The ownership of an NFT is recorded in the blockchain
and can be transferred by the owner, allowing NFTs to be sold and
traded. \href{https://en.wikipedia.org/wiki/Non-fungible_token}{Non-Fungible Token (NFT),  Wikipedia Link}
\item \label{enu:In-mathematics-Metric-Space}In mathematics, a metric space
is a set together with a notion of distance between its elements,
usually called points. The distance is measured by a function called
a metric or distance function. \href{https://en.wikipedia.org/wiki/Metric_space}{Metric Space,  Wikipedia Link}
\item \label{enu:Stablecoin}A Stablecoin is a type of cryptocurrency where
the value of the digital asset is supposed to be pegged to a reference
asset, which is either fiat money, exchange-traded commodities (such
as precious metals or industrial metals), or another cryptocurrency.
\href{https://en.wikipedia.org/wiki/Stablecoin}{Stable Coin,  Wikipedia Link}
\item \label{EN:Net-Asset-Value}Net Asset Value is the net value of an
investment fund's assets less its liabilities, divided by the number
of shares outstanding. \href{https://www.investopedia.com/terms/n/nav.asp}{NAV,  Investopedia Link}
\item \label{enu:In-mathematics,-Parabola}In mathematics, a parabola is
a plane curve which is mirror-symmetrical and is approximately U-shaped.
It fits several superficially different mathematical descriptions,
which can all be proved to define exactly the same curves. \href{https://en.wikipedia.org/wiki/Parabola}{Parabola,  Wikipedia Link}
\begin{enumerate}
\item One description of a parabola involves a point (the focus) and a line
(the directrix). The focus does not lie on the directrix. The parabola
is the locus of points in that plane that are equidistant from the
directrix and the focus. 
\end{enumerate}
\begin{doublespace}
\item \label{enu:Sharpe-Ratio}In finance, the Sharpe ratio (also known
as the Sharpe index, the Sharpe measure, and the reward-to-variability
ratio) is a way to examine the performance of an investment by adjusting
for its risk. The ratio measures the excess return (or risk premium)
per unit of deviation in an investment asset or a trading strategy,
typically referred to as risk, named after William F. Sharpe. \href{https://en.wikipedia.org/wiki/Sharpe_ratio}{Sharpe Ratio, Wikipedia Link}
\end{doublespace}
\item \label{enu:In-probability-theory-Skew}In probability theory and statistics,
skewness is a measure of the asymmetry of the probability distribution
of a real-valued random variable about its mean. The skewness value
can be positive, zero, negative, or undefined. \href{https://en.wikipedia.org/wiki/Skewness}{Skewness,  Wikipedia Link}
\item \label{enu:In-probability-theory-Kurtosis}In probability theory and
statistics, kurtosis (from Greek: kyrtos or kurtos, meaning \textquotedbl curved,
arching\textquotedbl ) is a measure of the \textquotedbl tailedness\textquotedbl{}
of the probability distribution of a real-valued random variable.
Like skewness, kurtosis describes a particular aspect of a probability
distribution. There are different ways to quantify kurtosis for a
theoretical distribution, and there are corresponding ways of estimating
it using a sample from a population. \href{https://en.wikipedia.org/wiki/Kurtosis}{Kurtosis,  Wikipedia Link}
\item \label{enu:Gas-Fee}Gas is a unit of account within the EVM used in
the calculation of the transaction fee, which is the amount of ETH
a transaction's sender must pay to the network to have the transaction
included in the blockchain. \href{https://en.wikipedia.org/wiki/Ethereum\#Gas}{Ethereum (Gas Fees) Wikipedia Link}
\item \label{enu:A-decentralized-autonomous-DAO}A decentralized autonomous
organization (DAO), sometimes called a decentralized autonomous corporation
(DAC), is an organization managed in whole or in part by decentralized
computer program, with voting and finances handled through a blockchain.
In general terms, DAOs are member-owned communities without centralized
leadership. The precise legal status of this type of business organization
is unclear. \href{https://en.wikipedia.org/wiki/Decentralized_autonomous_organization}{Decentralized Autonomous Organization (DAO),  Wikipedia Link}
\item \label{enu:Ethereum,-which-was}Ethereum, which was conceived in 2013
and launched in 2015 (Wood 2014; Tapscott \& Tapscott 2016; Dannen
2017), provided a remarkable innovation in terms of making blockchain
based systems Turing complete (or theoretically being able to do what
any computer can do: Sipser 2006).
\begin{enumerate}
\item Ethereum is a decentralized, open-source blockchain with smart contract
functionality. Ether (Abbreviation: ETH) is the native cryptocurrency
of the platform. Among cryptocurrencies, ether is second only to bitcoin
in market capitalization. \href{https://en.wikipedia.org/wiki/Ethereum}{Ethereum,  Wikipedia Link}
\item \label{enu:Turing-Complete}In computability theory, a system of data-manipulation
rules (such as a computer's instruction set, a programming language,
or a cellular automaton) is said to be Turing-complete or computationally
universal if it can be used to simulate any Turing machine. \href{https://en.wikipedia.org/wiki/Turing_completeness}{Turing Completeness,  Wikipedia Link}

A Turing machine is a mathematical model of computation describing
an abstract machine that manipulates symbols on a strip of tape according
to a table of rules. Despite the model's simplicity, it is capable
of implementing any computer algorithm. \href{https://en.wikipedia.org/wiki/Turing_machine}{Turing Machine,  Wikipedia Link}
\end{enumerate}
\item \label{enu:Open-end-mutual-funds}Open-end mutual funds are purchased
from or sold to the issuer at the net asset value of each share as
of the close of the trading day in which the order was placed, as
long as the order was placed within a specified period before the
close of trading. They can be traded directly with the issuer. \href{https://en.wikipedia.org/wiki/Mutual_fund}{Mutual Fund,  Wikipedia Link}
\item \label{enu:Mutual-Funds-Fees}Mutual Funds typically pay their regular
and recurring, fund-wide operating expenses out of fund assets, rather
than by imposing separate fees and charges directly on investors.
\href{https://www.investor.gov/introduction-investing/investing-basics/glossary/mutual-fund-fees-and-expenses}{Mutual Fund Fees,   Wikipedia Link}
\item \label{enu:A-hedge-fund}A hedge fund is a pooled investment fund
that trades in relatively liquid assets and is able to make extensive
use of more complex trading, portfolio-construction, and risk management
techniques in an attempt to improve performance, such as short selling,
leverage, and derivatives. \href{https://en.wikipedia.org/wiki/Hedge_fund}{Hedge Fund,  Wikipedia Link}
\item \label{enu:An-exchange-traded-fund}An exchange-traded fund (ETF)
is a type of investment fund and exchange-traded product, i.e. they
are traded on stock exchanges. ETFs are similar in many ways to mutual
funds, except that ETFs are bought and sold from other owners throughout
the day on stock exchanges whereas mutual funds are bought and sold
from the issuer based on their price at day's end. \href{https://en.wikipedia.org/wiki/Exchange-traded_fund}{Exchange Traded Fund,  Wikipedia Link}
\end{enumerate}

\section{\label{sec:Appendix-of-Proofs}Appendix of Proofs of Propositions}

\subsection{\label{subsec:Parity-Line-Expected}Parity Line Expected Return}
\begin{proof}
Proof of Proposition (\ref{prop:Combined-Weights-Expected-Return})
is given here. Using the weights - indicated in the statement of the
proposition - for Alpha, Beta and Gamma we get the expected return
chosen by the investor as below,
\begin{equation}
\left[w_{\alpha}E\left(R_{\alpha}\right)\right]+\left[w_{\beta}E\left(R_{\beta}\right)\right]+\left[w_{\gamma}E\left(R_{\gamma}\right)\right]=\left[\left(w_{c}\right)\left(w_{c\alpha}\right)E\left(R_{\alpha}\right)\right]+\left[\left(w_{c}\right)\left(w_{c\beta}\right)E\left(R_{\beta}\right)\right]+\left[\left(w_{\gamma}\right)E\left(R_{\gamma}\right)\right]
\end{equation}
\begin{align}
 & =\left[\left(\frac{\left[E\left(R_{i}\right)-E\left(R_{\gamma}\right)\right]}{\left[E\left(R_{c}\right)-E\left(R_{\gamma}\right)\right]}\right)\left(w_{c\alpha}\right)E\left(R_{\alpha}\right)\right]\\
 & +\left[\left(\frac{\left[E\left(R_{i}\right)-E\left(R_{\gamma}\right)\right]}{\left[E\left(R_{c}\right)-E\left(R_{\gamma}\right)\right]}\right)\left(w_{c\beta}\right)E\left(R_{\beta}\right)\right]\\
 & +\left[\left(\frac{\left[E\left(R_{i}\right)-E\left(R_{c}\right)\right]}{\left[E\left(R_{\gamma}\right)-E\left(R_{c}\right)\right]}\right)E\left(R_{\gamma}\right)\right]
\end{align}
\begin{align}
 & =\left[\left(\frac{\left[E\left(R_{i}\right)\left(w_{c\alpha}\right)E\left(R_{\alpha}\right)-E\left(R_{\gamma}\right)\left(w_{c\alpha}\right)E\left(R_{\alpha}\right)\right]}{\left[E\left(R_{c}\right)-E\left(R_{\gamma}\right)\right]}\right)\right]\\
 & +\left[\left(\frac{\left[E\left(R_{i}\right)\left(w_{c\beta}\right)E\left(R_{\beta}\right)-E\left(R_{\gamma}\right)\left(w_{c\beta}\right)E\left(R_{\beta}\right)\right]}{\left[E\left(R_{c}\right)-E\left(R_{\gamma}\right)\right]}\right)\right]\\
 & -\left[\left(\frac{\left[E\left(R_{i}\right)-\left[w_{c\alpha}E\left(R_{\alpha}\right)\right]-\left[w_{c\beta}E\left(R_{\beta}\right)\right]\right]}{\left[E\left(R_{c}\right)-E\left(R_{\gamma}\right)\right]}\right)E\left(R_{\gamma}\right)\right]
\end{align}
\begin{align}
 & =\left[\left(\frac{\left[E\left(R_{i}\right)\left(w_{c\alpha}\right)E\left(R_{\alpha}\right)-E\left(R_{\gamma}\right)\left(w_{c\alpha}\right)E\left(R_{\alpha}\right)\right]}{\left[E\left(R_{c}\right)-E\left(R_{\gamma}\right)\right]}\right)\right]\\
 & +\left[\left(\frac{\left[E\left(R_{i}\right)\left(w_{c\beta}\right)E\left(R_{\beta}\right)-E\left(R_{\gamma}\right)\left(w_{c\beta}\right)E\left(R_{\beta}\right)\right]}{\left[E\left(R_{c}\right)-E\left(R_{\gamma}\right)\right]}\right)\right]\\
 & -\left[\left(\frac{\left[E\left(R_{i}\right)E\left(R_{\gamma}\right)-\left[E\left(R_{\gamma}\right)w_{c\alpha}E\left(R_{\alpha}\right)\right]-\left[E\left(R_{\gamma}\right)w_{c\beta}E\left(R_{\beta}\right)\right]\right]}{\left[E\left(R_{c}\right)-E\left(R_{\gamma}\right)\right]}\right)\right]
\end{align}
\begin{align}
 & =\left[\left(\frac{\left[E\left(R_{i}\right)\left(w_{c\alpha}\right)E\left(R_{\alpha}\right)\right]}{\left[E\left(R_{c}\right)-E\left(R_{\gamma}\right)\right]}\right)\right]\\
 & +\left[\left(\frac{\left[E\left(R_{i}\right)\left(w_{c\beta}\right)E\left(R_{\beta}\right)\right]}{\left[E\left(R_{c}\right)-E\left(R_{\gamma}\right)\right]}\right)\right]\\
 & -\left[\left(\frac{\left[E\left(R_{i}\right)E\left(R_{\gamma}\right)\right]}{\left[E\left(R_{c}\right)-E\left(R_{\gamma}\right)\right]}\right)\right]
\end{align}
\begin{align}
 & =\left[E\left(R_{i}\right)\left(\frac{\left[\left(w_{c\alpha}\right)E\left(R_{\alpha}\right)+\left(w_{c\beta}\right)E\left(R_{\beta}\right)-E\left(R_{\gamma}\right)\right]}{\left[E\left(R_{c}\right)-E\left(R_{\gamma}\right)\right]}\right)\right]
\end{align}
\begin{align}
 & =\left[E\left(R_{i}\right)\left(\frac{\left[E\left(R_{c}\right)-E\left(R_{\gamma}\right)\right]}{\left[E\left(R_{c}\right)-E\left(R_{\gamma}\right)\right]}\right)\right]
\end{align}
\begin{align}
 & =E\left(R_{i}\right)
\end{align}
\end{proof}

\section{\label{sec:Return-and-Risk}Return and Risk (Volatility) Calculations}

The weight calculation depends on calculating the return of asset
prices on a daily basis (could be at other frequencies as well, but
we start with daily returns). The return will be the continuously
compounded measure. The volatility will be the standard deviation
of the continuously compounded returns over a historical time period,
generally around 90 days. The volatility will need to be calculated
on a rolling 90 days basis. 
\begin{enumerate}
\item \label{enu:Continously-Compounded-Return,}Continuously Compounded
Return, $R_{t}$, at any time $t$ is given by the logarithm of the
ratio of the Price at time $t$, $P_{t}$, and the Price at time $t-1$,
$P_{t-1}$, as shown below,
\begin{equation}
R_{t}=\ln\left(\frac{P_{t}}{P_{t-1}}\right)
\end{equation}
\item \label{enu:Volatility-at-time}Volatility at time $t$, $\sigma_{t}$
is calculated as below using the average return, $\bar{R}_{t}$, at
time $t$ and then calculating the standard deviation. It can be done
directly if suitable libraries are available. $\sigma_{t}^{2}$ is
the variance at time $t$. Here, $T=90$ unless otherwise stated.
\begin{equation}
\bar{R}_{t}=\left(\frac{1}{T}\right)\sum_{i=t}^{t-T}R_{i}\label{eq:Return-Estimate}
\end{equation}
\begin{equation}
\sigma_{t}^{2}=\left(\frac{1}{T-1}\right)\sum_{i=t}^{t-T}\left(R_{i}-\bar{R}_{t}\right)^{2}
\end{equation}
\begin{equation}
\sigma_{t}=\sqrt{\left(\frac{1}{T-1}\right)\sum_{i=t}^{t-T}\left(R_{i}-\bar{R}_{t}\right)^{2}}\label{eq:Volatility-Risk-Estimate}
\end{equation}
\item Calculate the volatility for the longest historical time period possible
for each asset. Possibly for the last 360 days or longer. Some new
assets might have a relatively shorter price time series, and it is
okay to calculate volatility for those assets for the number of days
for which prices are available.
\end{enumerate}

\section{\label{sec:Appendix-of-Illustrations}Appendix of Additional Illustrations}

\begin{figure}[H]
\includegraphics[width=18cm]{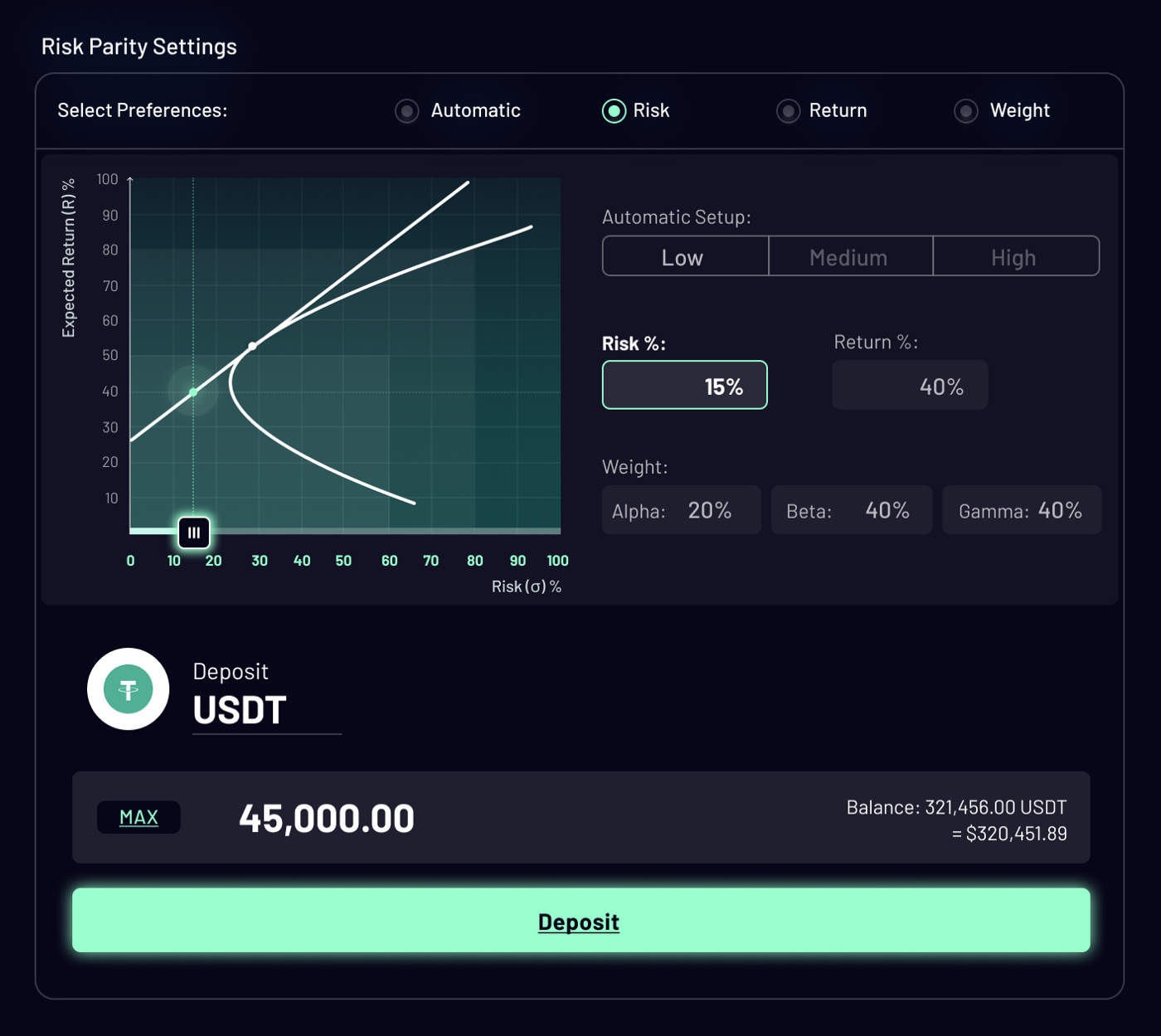}\caption{\label{fig:Deposit-Screen-Parity}Deposit Screen for Parity}
\end{figure}

\begin{figure}[H]
\includegraphics[width=18cm]{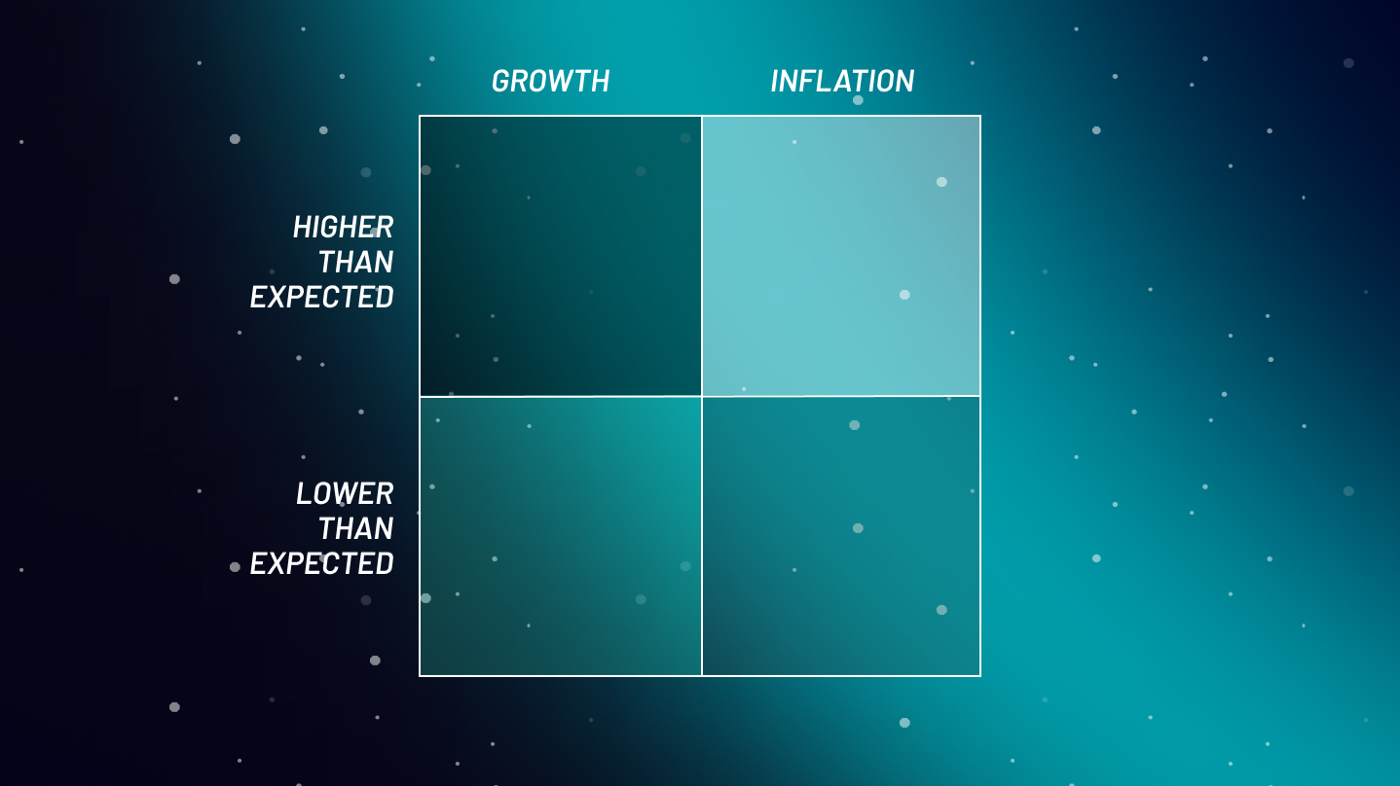}

\caption{\label{fig:Classification-of-Environmental}Classification of Environmental
Risks}
\end{figure}

\begin{figure}[H]
\includegraphics[width=18cm]{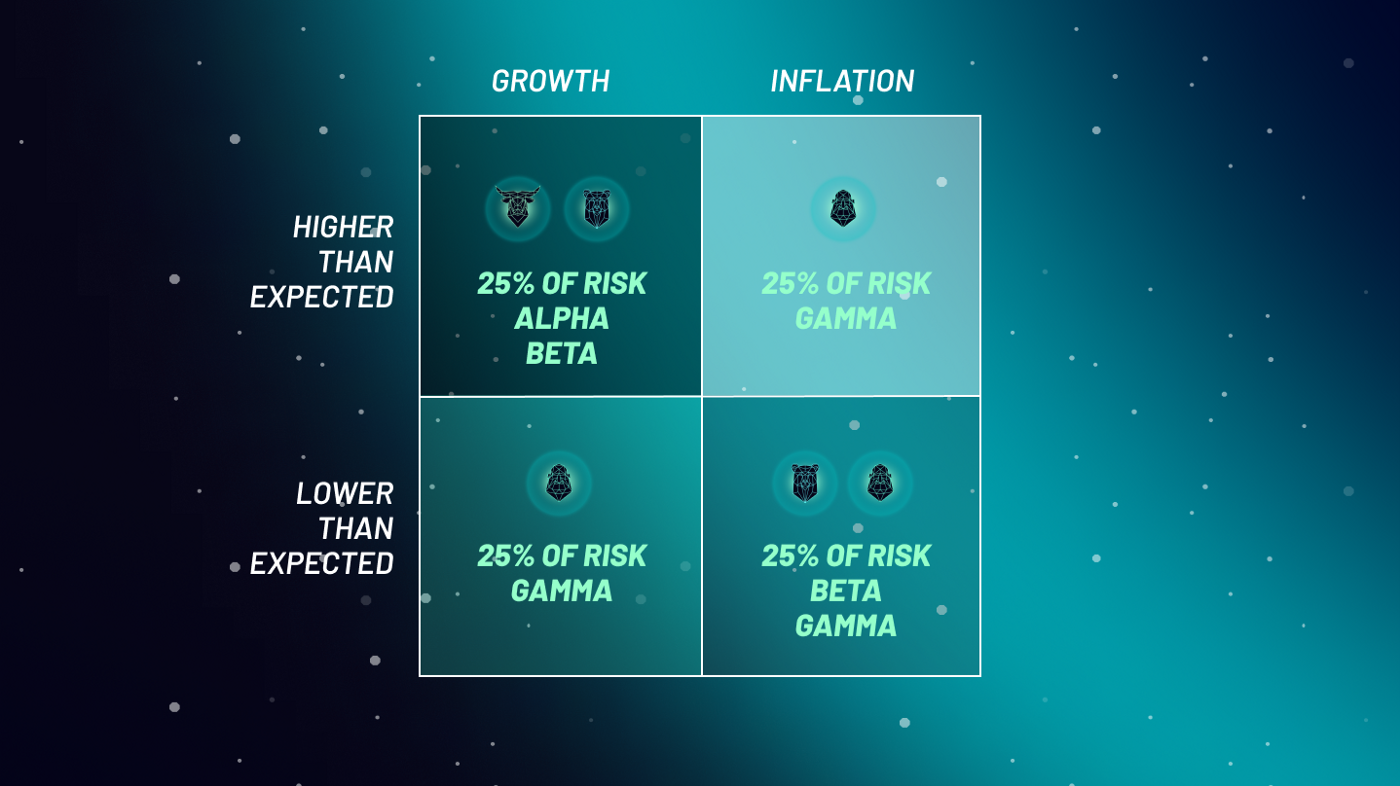}\caption{\label{fig:Portfolio-Risk-Concentrations}Portfolio Risk Concentrations}
\end{figure}
\begin{figure}[H]
\includegraphics[width=18cm]{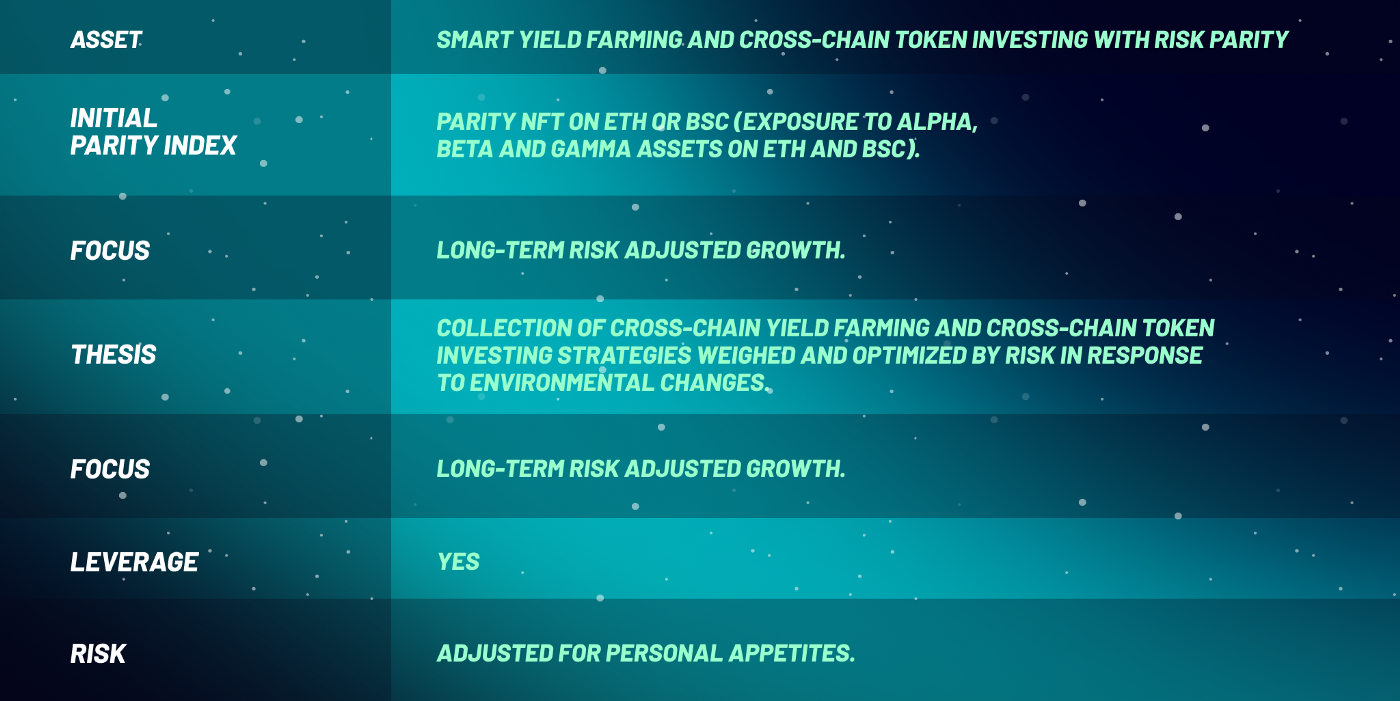}

\caption{\label{fig:Risk-Parity-Highlights}Risk Parity Highlights}
\end{figure}

\end{document}